\definecolor{processblue}{cmyk}{0.96,0,0,0}
\author{
}
\date{}
\title{Streaming Balanced Clustering}
\newtheorem{theorem}{Theorem}[section]
\newtheorem{lemma}[theorem]{Lemma}
\newtheorem{definition}[theorem]{Definition}
\newtheorem{fact}[theorem]{Fact}
\newtheorem{claim}[theorem]{Claim}
\newcommand{\wt}{\widetilde}
\newcommand{\Dun}{\dot{\bigcup}}
\newcommand{\dun}{\dot{\cup}}
\renewcommand{\varepsilon}{\epsilon}
\renewcommand{\tilde}{\wt}
\newcommand{\cost}{{\rm{cost}}^{(r)}}
\DeclareMathOperator*{\E}{{\bf {E}}}
\DeclareMathOperator{\OPT}{OPT}
\DeclareMathOperator{\poly}{poly}
\DeclareMathOperator{\dist}{dist}
\newcommand*{\RN}[1]{\expandafter\@slowromancap\romannumeral #1@}
\begin{document}\sloppy
%\linenumbers

\author{Hossein Esfandiari\\Google Research\\esfandiari@google.com\and Vahab Mirrokni \\ Google Research\\ mirrokni@google.com\and Peilin Zhong \\ Columbia University\\ pz2225@columbia.edu}

\begin{titlepage}
  \maketitle
  \begin{abstract}
Clustering of data points in metric space is among the most fundamental problems in computer science with plenty of applications in data mining, information retrieval and machine learning. Due to the necessity of clustering of large datasets, several streaming algorithms have been developed for different variants of clustering problems such as $k$-median and $k$-means problems. However, despite the importance of the context, the current understanding of balanced clustering (or more generally capacitated clustering) in the streaming setting is very limited. The only  previously known streaming approximation algorithm for capacitated clustering requires three passes and only handles insertions.

In this work, we develop \emph{the first single pass streaming algorithm} for a general class of clustering problems that includes capacitated $k$-median and capacitated $k$-means in Euclidean space, using only $\poly ( k d \log \Delta)$ space, where $k$ is the number of clusters, $d$ is the dimension and $\Delta$ is the maximum relative range of a coordinate\footnote{Note that $d\log \Delta$ is the space required to represent one point.}. This algorithm only violates the capacity constraint by a $1+\epsilon$ factor. Interestingly, unlike the previous algorithm, our algorithm handles both insertions and deletions of points. To provide this result we define a decomposition of the space via some curved half-spaces. We used this decomposition to design a strong coreset of size $\poly ( k d \log \Delta)$ for balanced clustering. Then, we show that this coreset is implementable in the streaming and distributed settings.
\end{abstract}

  \thispagestyle{empty}
\end{titlepage}

\section{Introduction}
Clustering of data points in metric space is among the most fundamental problems in computer science with plenty of applications in data mining, information retrieval and machine learning. In many applications there are some natural constraints on the size of the clusters. To capture this constraint, balanced and capacitated clustering have been introduced and widely studied in the classical setting~\cite{an2017lp,adamczyk2018constant,byrka2016approximation,li2017uniform,demirci2016constant,xu2019constant}.

It is fairly well-known that classical algorithms are not practical for large datasets. Streaming setting is one of the most popular settings to design algorithms for large datasets. In the streaming setting, we have a space sublinear in the size of the input, and we are usually restricted to take only one pass over the input. The input stream may contain both insertion and deletion of data points, or may be restricted to only contain insertion of data points.

Due to the necessity of clustering of large datasets, several streaming algorithms have been developed for different variants of clustering problems such as $k$-median~\cite{cop03,fs05,fl11,gmmo00,hm04,chen09,birw16,bfl16,bflsy17} and $k$-means problems~\cite{fms07,chen09,fl11,braverman2016clustering,bfl16,hsyz18}.
However, despite the importance of the context, there is no one-pass streaming algorithm known for balanced or capacitated version of these problems with non-trivial guarantees.
The only previously known approximation algorithm in this context is a three-pass insertion-only streaming algorithm for a general class of capacitated $k$-clustering in $\ell_r$~\cite{bblm14}.
%In addition, their approximation ratio cannot be better than a universal constant which is greater than $1$.
In capacitated $k$-clustering in $\ell_r$, the objective is to assign all of the points into $k$ centers such that, while respecting the capacity constraints, it minimizes the total sum of $r$-th power of the distances. Note that this definition extends capacitated $k$-median (for $r=1$), capacitated $k$-means (for $r=2$) and capacitated $k$-center (for $r=\infty$). Let us say a solution is $(\alpha,\beta)$-approximate solution if its cost is at most $\alpha$ times that of the optimum and it violates the capacity constraints by at most a factor $\beta$. Given a regular sequential $(\alpha,\beta)$-approximation algorithm for capacitated $k$-clustering in $\ell_r$, the previous paper provides an $(O(r \alpha),\beta)$-approximation three-pass streaming algorithm. Unfortunately, there is a large constant hidden in $O(r \alpha)$.

In this paper, we develop the first \emph{single-pass} streaming algorithm for capacitated $k$-clustering in $\ell_r$, using only $\poly ( k d \log \Delta)$ space, where $d$ is the dimension and $\Delta$ is the maximum relative range of a coordinate\footnote{This is roughly the ratio between the maximum distance and the minimum distance. We define this notation in the next subsection and give a further discussion in Section~\ref{sec:preli}.}. Given an $(\alpha,\beta)$-approximation algorithm for weighted capacitated $k$-clustering in $\ell_r$, and arbitrary positive numbers $\eta$ and $\varepsilon$, our algorithm provides an $((1+\varepsilon)\alpha,(1+\eta)\beta)$-approximate solution\footnote{When $\eta$, $\varepsilon$ or $\beta$ are not constant, they appear polynomially in the space.}. Interestingly, unlike the previous algorithm, this algorithm handles \emph{both insertion and deletion} of data points.

Regardless of time complexities, for arbitrary $\varepsilon,\eta \in (0,1]$ our result directly implies $(1+\varepsilon,1+\eta)$-approximation streaming algorithms for capacitated $k$-median and capacitated $k$-means. By applying the $(O(1/\epsilon),1+\epsilon)$-approximation algorithm of~\cite{demirci2016constant} for capacitated $k$-median, and using proper parameters, we have a polynomial time $(O(1/\epsilon),1+\epsilon)$-approximation streaming algorithm for capacitated $k$-median in $\poly ( k d \log \Delta)$ space. Similarly, by applying the $69+\epsilon$-approximation fixed parameter tractable algorithm of~\cite{xu2019constant} for capacitated $k$-means we have a fixed parameter tractable $(69+\epsilon,1+\epsilon)$-approximation streaming algorithm for capacitated $k$-means in $\poly ( k d \log \Delta)$ space.
Furthermore, if $d$ is much larger than $k/\varepsilon$, we can apply~\cite{mmr19} to reduce the dimension to $\poly(k/\varepsilon)$. 
Then our streaming algorithm only needs $d\cdot \poly(k\log\Delta)$ space though the dependence on $k$ and $1/\varepsilon$ becomes slightly larger.

It is easy to convert our streaming algorithm to a distributed algorithm. We use the same distributed model as of ~\cite{kvw14,wz16,bwz16,swz17,swz19}.
In this model we have $s$ machines, where the $i$-th machine holds a subset of input points. There is one coordinator and the communication is only between the coordinator and other machines. The goal is to bound the communication. In Subsection \ref{subsec:results} we present our results, and then in Subsection~\ref{subsec:technique} we discuss our technical contributions.
 
\paragraph{Other Related Works.}
Guha et al. studied the $k$-median problem in the streaming setting and provided the first single pass $2^{O(1/\epsilon)}$-approximation algorithm for this problem using $O(n^{\epsilon})$ space~\cite{gmmo00}. Next, Charikar et al. improved the approximation factor and the space requirement of this problem and provided a constant approximation streaming algorithm that stores $O(k \log^2 n)$ points~\cite{cop03}. 
Braverman et al. developed constant approximation algorithm in streaming sliding window model for both $k$-median and $k$-means using $O(k^3  \log^6 n)$ space~\cite{braverman2016clustering}.

In the Euclidean space, Har-Peled and Mazumdar show a $(1+\epsilon)$-approximation insertion-only streaming algorithm for $k$-means and $k$-median~\cite{hm04}. They developed a coreset that requires $ O(k\epsilon^{-d}\log n)$ space and used it to provide a streaming algorithm that requires $O(k \epsilon^{-d}\log^{2d+2}n)$ space. Later, Har-Peled and Kushal provided a coreset of size $O(k^2\epsilon^{-d})$ for these problems~\cite{hk05}. In high dimensional spaces, Chen presented a streaming algorithm for both problems in $O(k^2d\epsilon^{-2}\log^8 n)$ space~\cite{chen09}.

\subsection{Our Results}\label{subsec:results}
In this paper, we suppose all input and output points are in $\{1,2,\cdots,\Delta\}^d$ for some $\Delta,d\in\mathbb{Z}_{\geq 1}$. 
This assumption is without loss of generality since if the clustering cost is non-zero, we can always discretize the space by changing the cost by an arbitrary small multiplicative error~\cite{bflsy17,hsyz18}.
Given a point set $Q\in[\Delta]^d$, a strong $(\eta,\varepsilon)$-coreset of $Q$ for capacitated $k$-clustering in $\ell_r$ is a subset of points $Q'\subseteq Q$ with weights $w':Q'\rightarrow \mathbb{R}_{>0}$ such that for any capacity $t\geq \lceil|Q|/k\rceil$ and any set of $k$ centers $Z=\{z_1,z_2,\cdots,z_k\}$,
\begin{align*}
\frac{1}{1+\varepsilon}\cdot \cost_{(1+\eta)^2t}(Q,Z)\leq \cost_{(1+\eta)t}(Q',Z,w') \leq (1+\varepsilon)\cdot \cost_t(Q,Z),
\end{align*}
where $\cost_{t}(Q,Z)$ indicates the capacitated clustering cost in $\ell_r$ with respect to centers $Z$ and capacity $t$, and similarly, $\cost_{t'}(Q',Z,w')$ indicates the weighted version of $\ell_r$ capacitated clustering cost.
We refer readers to Section~\ref{sec:preli} for the formal definition of $\cost_t(Q,Z)$ and $\cost_{t'}(Q',Z,w')$.

In this paper, we give the first strong coreset construction for capacitated $k$-clustering.
The size of our coreset is $\poly(\varepsilon^{-1}\eta^{-1}kd\log\Delta)$. 
Our coreset can be constructed in near linear time.

\begin{theorem}[Restatement of Theorem~\ref{thm:offline}]
For a constant $r\geq 1$, given $k\in\mathbb{Z}_{\geq 1},\varepsilon,\eta\in(0,0.5)$ and a point set $Q\subseteq[\Delta]^d$ with $|Q|=n$, there is a randomized algorithm which takes $O(nd\log^2(nd\Delta))$ time and outputs a subset of points $Q'\subseteq Q$ with weights $w':Q'\rightarrow \mathbb{R}_{>0}$ such that with probability at least $0.9$, $(Q',w')$ is a strong $(\eta,\varepsilon)$-coreset of $Q$ for capacitated $k$-clustering in $\ell_r$ and $|Q'|\leq \poly(\varepsilon^{-1}\eta^{-1}kd\log\Delta)$.
\end{theorem}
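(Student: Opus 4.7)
The plan is to build $(Q',w')$ as a weighted set of cell representatives from an exponential grid partition of $[\Delta]^d$: replace each cell by a single point of weight equal to the cell's population. The cell-diameter contributions will yield the $(1+\varepsilon)$ cost factor and a rounding argument will handle the capacity side with $(1+\eta)$ slack.

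First, run a cheap constant-factor approximation for unconstrained $k$-median on $Q$ to obtain a reference center set $\tilde Z$ and cost $\tilde R$; a geometric enumeration over $O(\log(n\Delta))$ scales brackets the true $\OPT$ to within a $(1+\varepsilon)$ factor. For each scale, partition $[\Delta]^d$ by an exponential grid centered on $\tilde Z$: in the annulus at distance roughly $2^j$ from $\tilde Z$, use cells of side length roughly $\varepsilon\cdot 2^j/(rk\log\Delta)$. Keep only the non-empty leaf cells; in each such cell $C$, pick an arbitrary representative $p_C\in Q\cap C$ and set $w'(p_C)=|Q\cap C|$. A standard occupied-cell count in an exponential grid yields $|Q'|=\poly(kd\log\Delta/(\varepsilon\eta))$.

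For the upper bound $\cost_{(1+\eta)t}(Q',Z,w')\leq (1+\varepsilon)\cost_t(Q,Z)$, fix $(Z,t)$, take an optimal integer assignment $\sigma:Q\to Z$ with capacity $t$, and route $w'(p_C)$ to the centers in exactly the fractions $\sigma$ uses for the points of $Q\cap C$. Capacity stays at $t$, and the $\ell_r$ cost changes only by terms bounded by $\diam(C)$, which the grid makes at most an $\varepsilon/(rk\log\Delta)$ fraction of the assignment distance, giving the $(1+\varepsilon)$ factor via the triangle inequality. For the lower bound, start from an optimal fractional assignment $\sigma'$ of $(Q',w')$ with capacity $(1+\eta)t$ and integerize it within each cell: parcel the $|Q\cap C|$ points to centers in the integer counts closest to the fractional counts $\alpha_{C,z}\cdot|Q\cap C|$. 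The rounding loses at most one unit of load per (cell, center) pair; the crux is to show that this total loss is at most $\eta t$ per center, which converts the $(1+\eta)t$ fractional capacity into $(1+\eta)^2 t$ integer capacity.

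This is where the curved half-space decomposition enters. A cell $C$ is split by $\sigma'$ only if it straddles a capacitated Voronoi boundary of the form $\{x:\|x-z_i\|^r=c\cdot\|x-z_j\|^r\}$ for some threshold $c$; these are boundaries of ``curved half-spaces'' (shifted balls when $r=2$). Laying down all such surfaces for the $\binom{k}{2}$ center pairs and $O(\log\Delta)$ relevant thresholds induces a decomposition of $[\Delta]^d$ into $\poly(kd\log\Delta)$ regions intersecting any single grid cell, so per center at most $\poly(kd\log\Delta)$ cells are split, and tuning the grid fineness bounds the rounding error per center by $\eta t$. The main obstacle is precisely this uniform-over-$Z$ bound: a direct argument fails because perturbing $Z$ changes which cells straddle a boundary, so the decomposition must form a common refinement for every $Z$ in a net over center configurations, and the success probability $0.9$ in the statement comes from the union bound over this net. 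Every other step --- the grid, the weight rounding, the triangle-inequality cost bound --- is standard once this structural lemma is in place.
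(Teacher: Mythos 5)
Your construction is genuinely different from the paper's, and unfortunately it breaks in several places that the paper's sampling-based design is specifically engineered to avoid.

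\textbf{Coreset size.} An exponential grid with cells of side $\approx \varepsilon 2^j/(rk\log\Delta)$ in the annulus of radius $\approx 2^j$ around $\tilde Z$ has on the order of $(rk\log\Delta/\varepsilon)^d$ cells per annulus, so the occupied-cell count is $\poly(k\log\Delta)\cdot\varepsilon^{-\Theta(d)}$, not $\poly(\varepsilon^{-1}\eta^{-1}kd\log\Delta)$; this is exactly the Har-Peled--Mazumdar bound the paper cites and deliberately improves on. The paper gets $\poly(kd\log\Delta)$ only because it builds a \emph{coarse} partition into $\poly(kd\log\Delta)$ parts $Q_{i,j}$ (heavy/crucial cells) and then \emph{samples} $\poly(\varepsilon^{-1}\eta^{-1}kd\log\Delta)$ points from each part with a uniform rate, rather than refining the grid until each cell has tiny diameter.

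\textbf{Heavy cell weights.} With one representative per cell weighted by $|Q\cap C|$, a single coreset point can carry weight $\Omega(n/\poly(kd\log\Delta))$, which can be much larger than $\eta t$. The paper's $\cost_t(Q',Z,w')$ is defined over \emph{partitions} of $Q'$, so each coreset point must go whole to one center; your upper-bound step ``route $w'(p_C)$ to the centers in exactly the fractions $\sigma$ uses'' is a fractional assignment and is not admissible. Forcing a heavy representative to a single center can blow up a cluster's load far beyond $(1+\eta)t$, and in degenerate instances (e.g.\ a cell carrying most of $Q$ when $t\approx |Q|/k$) makes $\cost_{(1+\eta)t}(Q',Z,w')=\infty$. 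The paper sidesteps this because each sampled point in part $Q_{i,j}$ gets the tiny uniform weight $1/\phi_i=O(\xi^3\gamma T_i(o)/\lambda)\ll t$, and the paper explicitly uses this (end of Section~3.3) to bound the integrality loss by $(k-1)\cdot\max_p w'(p)\leq\eta t$.

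\textbf{Shape of the separating surface.} The optimality condition for the transportation LP gives, for a supply node split between $z_i,z_j$, the equality $\dist^r(p,z_i)-\dist^r(p,z_j)=\beta_{z_i}-\beta_{z_j}$ (a \emph{difference} of powered distances, which for $r=2$ is an affine hyperplane). Your surface $\{x:\|x-z_i\|^r=c\|x-z_j\|^r\}$ is a \emph{ratio} condition (an Apollonius-type set) that does not arise from complementary slackness, so the decomposition you build from it does not in fact certify optimality of an assignment, and the ``common refinement over $Z$'' argument built on it has no foundation. The paper's half-spaces (Definition~\ref{def:halfspace}) are level sets of $\dist^r(\cdot,z_i)-\dist^r(\cdot,z_j)$ precisely for this reason.

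\textbf{Where the randomness lives.} The paper's union bound over the $\Delta^{O(dk^2)}$ choices of centers and assignment half-spaces is a union bound over the \emph{concentration of the random sample} (Lemma~\ref{lem:core_for_all}). Your construction is deterministic given the grid, so there is no per-$(Z,\mathcal{H})$ tail event to union-bound; invoking a ``net over center configurations'' to obtain the $0.9$ probability is not a step that can be carried out. You would need to re-introduce random sampling (as the paper does), at which point the one-rep-per-cell snapping loses its purpose. In short, the curved half-space \emph{idea} is in the right spirit, but its role in the paper is to bound the number of candidate assignments for a union bound over samples, not to control how many grid cells straddle Voronoi boundaries.
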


Our coreset can also be constructed in streaming and distributed setting efficiently.
The streaming model studied in this paper is the dynamic streaming model which allows both insertion and deletion of points. 
\begin{theorem}[Restatement of Theorem~\ref{thm:streaming}]
For a constant $r\geq 1$, given $k\in\mathbb{Z}_{\geq 1},\varepsilon,\eta\in(0,0.5)$ and a point set $Q\subseteq[\Delta]^d$ obtained by a stream of insertions and deletions, there is a streaming algorithm which takes one pass over the stream and with probability at least $0.9$ outputs a strong $(\eta,\varepsilon)$-coreset $(Q',w')$ of $Q$ for capacitated $k$-clustering in $\ell_r$.
Furthermore, both $|Q'|$ and the space of the streaming algorithm is at most $\poly(\varepsilon^{-1}\eta^{-1}kd\log\Delta)$.
\end{theorem}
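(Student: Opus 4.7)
The plan is to implement the offline coreset construction of Theorem~\ref{thm:offline} through linear sketches, which automatically handle both insertions and deletions. Inspecting the offline algorithm, the point set $[\Delta]^d$ is decomposed into at most $m=\poly(\varepsilon^{-1}\eta^{-1}kd\log\Delta)$ regions determined by the curved half-spaces mentioned in the abstract; the coreset is formed by sampling one (or a few) representative points per region and reweighting them by the number of stream points landing in that region. The first order of business is to isolate from the offline proof the fact that this decomposition depends only on $k,d,\Delta,\varepsilon,\eta$ so that it can be fixed before reading the stream.

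Given such a fixed decomposition, for each region $R_i$ allocate an independent $\ell_0$-sampler (for example the linear sketch of Jowhari--Saglam--Tardos) over the universe $[\Delta]^d$ restricted to $R_i$. Each sampler consumes $\poly(d\log\Delta)$ bits and, on an arbitrary input frequency vector over $[\Delta]^d$, outputs a uniformly random element of the support together with its frequency. Upon a stream update $(p,\pm 1)$ the algorithm locates the unique region $R_i$ containing $p$ and feeds the update into the corresponding sampler; by linearity of sketches the state of the $i$-th sampler at the end of the stream depends only on the net point multiset inside $R_i$. After reading the whole stream, query every sampler to obtain, for each nonempty region, a sampled point together with its weight, and assemble these into $(Q',w')$ by the same weighting rule as in the offline algorithm. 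If the offline construction draws more than one sample per region, or samples non-uniformly, run $O(1)$ or $\poly(\log)$ independent samplers per region in parallel, or use a weighted $\ell_0$-sampler.

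Correctness is then automatic: by linearity the distribution of $(Q',w')$ is identical to the one produced by the offline algorithm of Theorem~\ref{thm:offline} run on the final net set $Q$, so $(Q',w')$ is a strong $(\eta,\varepsilon)$-coreset with probability at least $0.9$; standard parallel repetition with a median trick boosts this if needed. The total space is $m\cdot\poly(d\log\Delta)=\poly(\varepsilon^{-1}\eta^{-1}kd\log\Delta)$, and $|Q'|\leq m$ satisfies the same bound. The same skeleton yields the distributed result claimed in the introduction: every machine maintains its own sketches, the coordinator sums them by linearity, and queries once at the end, spending $\poly(\varepsilon^{-1}\eta^{-1}kd\log\Delta)$ communication per machine.

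The main obstacle will be making the geometric decomposition genuinely input-oblivious and efficiently testable. Curved half-spaces suggest region boundaries given by a few low-degree algebraic inequalities, so membership should reduce to evaluating a constant number of polynomials at $p$; one nevertheless has to verify that every parameter defining the cells (centers, radii, exponents) is data-independent, which probably forces the streaming algorithm to take the union of all decompositions arising from every candidate optimal scale $(1+\eta)^i$ over $i=O(\log\Delta)$, multiplying the number of cells by $\poly(\log\Delta)$. A related subtlety is that the offline algorithm may sample from each region proportionally to a local quantity depending on the input (for instance an importance score based on the cell's point count); this can still be simulated with $\ell_0$-samplers by coupling the sampler's output with a rejection step driven by the count recovered from the same sketch, but one has to keep this in mind when bounding both the success probability and the final coreset size.
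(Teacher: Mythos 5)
Your proposal rests on a misreading of how the offline coreset construction works, and the gap is fundamental rather than a detail to be filled in. You describe the construction as decomposing $[\Delta]^d$ into $\poly(\varepsilon^{-1}\eta^{-1}kd\log\Delta)$ \emph{fixed, data-independent} regions bounded by curved half-spaces, then sampling one representative per region via $\ell_0$-samplers. Neither half of that picture matches the paper. The curved $\ell_r$-half-spaces (Definition~\ref{def:halfspace}) are not used to carve the space into a small coreset-sized partition; they are an analysis device. There are $\Delta^{O(dk^2)}$ possible sets of assignment half-spaces, and the role they play (Lemma~\ref{lem:opt_halfspace}, Lemma~\ref{lem:core_for_all}) is to bound the number of candidate "optimal" assignments so that a union bound over them is affordable. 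They never induce a partition of size $\poly(\varepsilon^{-1}\eta^{-1}kd\log\Delta)$, and nothing is "sampled per curved region."

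The actual partition used for sampling is the randomly shifted hierarchical grid (Section~\ref{sec:points_partitioning}), whose relevant pieces $Q_{i,j}$ are determined by which cells are \emph{heavy}, i.e., contain at least $T_i(o)$ points of $Q$. That is inherently data-dependent, so one cannot fix the regions before reading the stream, and one cannot allocate a sampler per region since the number of \emph{candidate} grid cells is $\Theta(\Delta^d)$. Because of this, the paper does not attempt to match the offline sampling distribution exactly by linearity as you claim; Algorithm~\ref{alg:streaming} instead runs three families of $\lambda$-wise independent subsampling hash functions $h_i,h'_i,\hat h_i$ at carefully tuned rates $\psi_i,\psi'_i,\phi_i$, pipes the subsampled streams through the sparse-recovery routine $\textsc{Storing}$ of Lemma~\ref{lem:useful_tool} (which recovers the set of nonempty cells, their sampled counts, and the actual sampled points in low-count cells), and only \emph{after the stream} reconstructs the heavy/crucial structure from the recovered counts. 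Correctness then follows from Lemmas~\ref{lem:obtain_tau}, \ref{lem:correctness_streaming}, and \ref{lem:success_prob_streaming}, not from "same distribution as offline." There is also a second data-dependence you gloss over: the guess $o$ of $\OPT^{(r)}_{k\text{-clus}}$ must be correct within a constant factor, which is handled by running $O(\log(\Delta^d(\sqrt d\Delta)^r))$ parallel instances and a separate cost-estimation sketch. Your last paragraph does flag the data-dependence issue, but the proposed fix (union over candidate scales $(1+\eta)^i$) does not address the core problem that the grid-cell partition itself is data-driven and cannot be enumerated, and a per-region $\ell_0$-sampler architecture cannot be made to work without essentially reinventing the sparse-recovery-over-subsampled-grids machinery of the paper.
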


In the distributed model, the input is distributed into machines. 
Each machine can only communicate with the coordinator.
The goal in this model is to design a protocol with small communication cost.
\begin{theorem}[Restatement of Theorem~\ref{thm:distributed}]
For a constant $r\geq 1$, given $k\in\mathbb{Z}_{\geq 1},\varepsilon,\eta\in(0,0.5)$ and a point set $Q\subseteq[\Delta]^d$ partitioned into $s$ machines, there is a distributed protocol which on termination with probability at least $0.9$ leaves a subset of points $Q'\subseteq Q$ with weights $w':Q'\rightarrow \mathbb{R}_{>0}$ such that $(Q',w')$ is a strong $(\eta,\varepsilon)$-coreset of $Q$ for capacitated $k$-clustering in $\ell_r$ and the size of the coreset is at most $\poly(\varepsilon^{-1}\eta^{-1}kd\log\Delta)$.
Furthermore, the total communication cost is at most $s\cdot \poly(\varepsilon^{-1}\eta^{-1}kd\log\Delta)$ bits.
\end{theorem}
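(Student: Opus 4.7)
The plan is to reduce the distributed problem to the streaming problem established in Theorem~\ref{thm:streaming}. The key observation is that the dynamic streaming algorithm from Theorem~\ref{thm:streaming} supports both insertions and deletions, which strongly suggests that under the hood it maintains a linear sketch of the characteristic vector of the input point set over $[\Delta]^d$. Linear sketches are naturally mergeable, and this mergeability is exactly the property one needs to convert a dynamic streaming algorithm into a low-communication distributed protocol.

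Concretely, I would proceed as follows. First, fix the random seed used by the streaming algorithm of Theorem~\ref{thm:streaming}; the coordinator samples this seed and broadcasts it to every machine, which costs $O(s\cdot \poly(\log(kd\Delta/(\varepsilon\eta))))$ bits in total (and we can assume public randomness to avoid even this cost). Next, each machine $i$ simulates the dynamic streaming algorithm on its local point set $Q_i$, presented as a sequence of insertions, producing a local sketch $\sigma_i$ of size $\poly(\varepsilon^{-1}\eta^{-1}kd\log\Delta)$. Because the sketch is linear in the characteristic vector of the underlying multiset, the sketch $\sigma$ of the union $Q=Q_1\Dun Q_2\Dun\cdots\Dun Q_s$ satisfies $\sigma=\sigma_1+\sigma_2+\cdots+\sigma_s$. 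Each machine sends $\sigma_i$ to the coordinator, who sums the sketches and runs the same decoding procedure that the streaming algorithm would have used at the end of the stream, thereby obtaining a strong $(\eta,\varepsilon)$-coreset $(Q',w')$ of $Q$ with probability at least $0.9$ and $|Q'|\leq \poly(\varepsilon^{-1}\eta^{-1}kd\log\Delta)$.

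The communication cost is bounded by the combined size of the $s$ sketches plus the seed broadcast, giving $s\cdot\poly(\varepsilon^{-1}\eta^{-1}kd\log\Delta)$ bits as required. Correctness is inherited directly from Theorem~\ref{thm:streaming}, because $\sigma$ is bit-for-bit identical to the sketch the streaming algorithm would have produced had it been fed the concatenation of the local streams.

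The main obstacle, and the only place where the argument needs care, is verifying that the streaming construction of Theorem~\ref{thm:streaming} can indeed be implemented as a linear sketch (or at least as a mergeable summary) rather than a general sequential procedure that might use its state in a non-linear way. If the construction has non-linear components, one must argue either that those components act only on the final sketch (so they can be deferred to the coordinator) or that a standard merge-and-reduce tree of depth $O(\log s)$ can be applied with only a mild polylogarithmic blow-up in the coreset parameters; in either case the total communication remains $s\cdot \poly(\varepsilon^{-1}\eta^{-1}kd\log\Delta)$, which matches the claimed bound.
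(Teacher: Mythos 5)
Your proposal is correct in spirit---simulate the streaming algorithm in the distributed model---but the paper takes a more direct route that sidesteps the very issue you flag as the main obstacle. Rather than trying to express the streaming algorithm as a mergeable linear sketch, the paper replaces the one streaming primitive it depends on (the \textsc{Storing} subroutine of Lemma~\ref{lem:useful_tool}) with a bespoke distributed protocol, Lemma~\ref{lem:distributed_storing}. In that protocol the coordinator broadcasts the shared randomness (the grid shift and the hash functions), each machine locally computes its non-empty cells in $G_i$, per-cell counts, and the set of its local points lying in cells with at most $\beta$ local points, and sends these directly to the coordinator, who unions the cell sets and sums the counts. No linear sketch is ever constructed; the distributed setting has enough flexibility that exact local statistics can simply be shipped, and the FAIL condition of \textsc{Storing} is propagated naturally (a machine reports FAIL if its own cell count exceeds $\alpha$). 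All remaining steps of Algorithm~\ref{alg:streaming}---determining heavy/crucial cells, computing $\tau(\cdot)$, and running Algorithm~\ref{alg:coreset_construction}---are then performed entirely on the coordinator. One additional ingredient the paper uses, which your proposal leaves implicit, is how the guess $o$ is fixed: in the streaming setting all $O(\log(\Delta^d(\sqrt{d}\Delta)^r))$ guesses are run in parallel, but in the distributed setting the paper first runs a known $2$-approximation protocol for $\OPT^{(r)}_{k\text{-clus}}$ (from~\cite{fl11,bfl16,bflsy17,hsyz18}) and broadcasts the result, so all machines agree on a single $o$. Your linear-sketch-merging route would likely also work (the underlying \textsc{Storing} sketches are indeed mergeable), but it buys nothing over the direct protocol and requires exactly the extra verification step you acknowledge; the paper's approach is cleaner precisely because it never needs to argue about linearity.
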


\subsection{Our Techniques}\label{subsec:technique}
Let us first discuss how to construct a strong coreset for capacitated $k$-means.
Later we will show how to generalize the idea for capacitated $k$-clustering in $\ell_r$ for general $r\geq 1$.

Our starting point is a common partitioning approach~\cite{chen09,bflsy17,hsyz18} for $k$-clustering coreset construction.
The size $n$ input point set $Q\subseteq[\Delta]^d$ is partitioned into $\poly(kd\log\Delta)$ parts of points $P_1,P_2,\cdots,P_s$ such that if we move all points in each part to an arbitrary point in this part, the optimal $k$-means cost for moved points should not change too much.
In other words,
\begin{align}
\sum_{i=1}^s |P_i| \cdot \left(\max_{p,q\in P_i} \dist(p,q)\right)^2 \leq \poly(kd\log\Delta)\cdot \OPT_{k\text{-means}}, \label{eq:bound_additive_error}
\end{align}
where $\OPT_{k\text{-means}}=\min_{Z\subset[\Delta]^d:|Z|=k} \sum_{p\in Q} \dist^2(p,Z)$.

Let us briefly review the sampling based strong coreset construction for standard $k$-means problem.
Consider a fixed set of $k$ centers $Z\subset[\Delta]^d$. 
%To estimate the contribution of $P_i$ to the $k$-means cost corresponding to $Z$, 
Each point $p\in P_i$ is sampled with probability $\poly(\varepsilon^{-1}kd\log\Delta)/|P_i|$ and each sampled point is assigned a weight $w(p)$ which is inverse sampling probability.
The expected number of points sampled is $\poly(\varepsilon^{-1}kd\log\Delta)$
 and $\sum_{\text{sampled }p\in P_i} w(p)\cdot \dist^2(p,Z)$ is an unbiased estimator of $\sum_{p\in P_i}\dist^2(p,Z)$.
By triangle inequality, $\forall p',q'\in P_i$, $|\dist(p',Z)-\dist(q',Z)|\leq \max_{p,q\in P_i}\dist(p,q)$.
This can upper bound the variance of the cost of each sampled point.
By Bernstein inequality, with high probability, for every part $P_i$, 
the difference between
$\sum_{\text{sampled }p\in P_i} w(p)\cdot \dist^2(p,Z)$ and  $\sum_{p\in P_i}\dist^2(p,Z)$ is at most $\varepsilon \sum_{p\in P_i}\dist^2(p,Z) +\frac{\varepsilon}{\poly(kd\log\Delta)} |P_i|\cdot \left(\max_{p,q\in P_i}\dist(p,q)\right)^2$.
%The relative error term $\varepsilon \sum_{p\in P_i}\dist^2(p,Z)$ is good since it will directly go to the relative error in the approximation.
The additive error term $\frac{\varepsilon}{\poly(kd\log\Delta)} |P_i|\cdot \left(\max_{p,q\in P_i}\dist(p,q)\right)^2$ is acceptable since the total additive error is bounded by $\varepsilon\cdot \OPT_{k\text{-means}}$ (Equation~\eqref{eq:bound_additive_error}) and thus becomes a small relative error.
Notice that the total number of choices of $Z$ is at most $\Delta^{kd}$.
By taking union bound over all possible choices of $Z$, with high probability, the sampled points together with their weights become a strong coreset for $k$-means with size $\poly(\varepsilon^{-1}kd\log\Delta)$.
We refer readers to \cite{chen09,fl11,bfl16,bflsy17,hsyz18} for more details and history.

Unfortunately, the above analysis breaks for capacitated $k$-means.
Due to capacity constraints, each point may not be assigned to the closest center in the capacitated $k$-means solution.
If we look at the samples from the given point set, it is unclear how to determine the cost of each sampled point without looking at entire point set.
This is an obstacle to obtain an unbiased estimator of the capacitated $k$-means cost.
To construct an unbiased estimator, we need to find a simple way to determine the cost of each sampled point.
Again consider a fixed set of $k$ centers $Z=\{z_1,z_2,\cdots,z_k\}\subset[\Delta]^d$.
If we know that each point $p\in Q$ is assigned to the center $\pi(p)\in Z$, then $\sum_{\text{sampled }p} w(p)\dist^2(p,\pi(p))$ is an unbiased estimator of the clustering cost with respect to the assignment $\pi:Q\rightarrow Z$, i.e., $\sum_{p} \dist^2(p,\pi(p))$.
In addition, $\sum_{\text{sampled }p:~\pi(p)=z_i} w(p)$ is an unbiased estimator of the number of points assigned to the center $z_i$.
If for every assignment $\pi$, $\sum_{\text{sampled }p} w(p)\dist^2(p,\pi(p))$ is a good approximation to $\sum_{p} \dist^2(p,\pi(p))$ and $\forall i\in[k]$, $\sum_{\text{sampled }p:~\pi(p)=z_i} w(p)$ is a good estimation of the size of the cluster with center $z_i$, then the capacitated clustering cost of samples is a good approximation of the capacitated clustering cost of $Q$ if we allow some relaxation of capacity constraints.
However, the possible choices of assignment $\pi$ can be as large as $k^n$.
It implies that if we want to obtain a good estimation for every assignment $\pi$, we need at least $\Omega(\log(k^n))=\Omega(n\log k)$ samples which is even worse than taking entire point set $Q$.
The issue of the above attempt is that we want to have uniformly good estimations for all possible assignments. 
To handle this issue, we should reduce the number of assignments that we care about.
An observation is that if an assignment $\pi$ is not an optimal assignment for any capacity constraint, then we do not care the quality of the estimated cost for $\pi$.
Then we hope the number of assignments which can be optimal for some capacity constraint is small.

Our main technical contribution is finding a good structure of possible optimal assignments, and such structure can be used to upper bound the number of those assignments. 
Consider an assignment $\pi:Q\rightarrow Z$ which is optimal for some capacity constraint.
For two centers $z_i,z_j$, by Pythagorean theorem there must be a $(d-1)$-dimensional hyperplane separating the points in the cluster with center $z_i$ and the points in the cluster with center $z_j$ (see Figure~\ref{fig:hyperplane_separation}), and the hyperplane  is perpendicular to the line connecting $z_i$ and $z_j$.
\begin{figure}
\centering
\includegraphics[width=\textwidth]{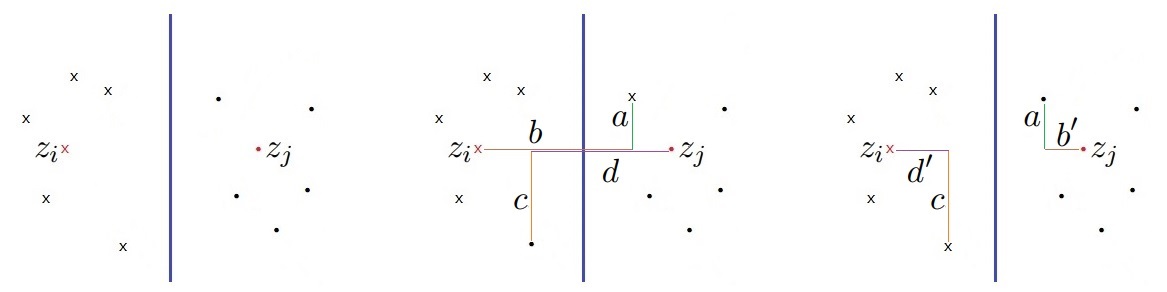}
\caption{A cross denotes a point assigned to $z_i$ and a dot denotes a point assigned to $z_j$. We can find a hyperplane separating two clusters. 
Suppose a point on the right side is assigned to $z_i$ and a point on the left side is assigned to $z_j$. 
By Pythagorean theorem, the total cost of these two points is $a^2+b^2+c^2+d^2$. 
If we switch the assignments of these two points, the cost is $a^2+b'^2+c^2+d'^2$ which is smaller since $b'^2+d'^2<b^2+d^2$.
Thus, if two clusters cannot be separated by a hyperplane, the assignment cannot be optimal.
}\label{fig:hyperplane_separation}
\end{figure}
Let $H_{(i,j)}$ denote the half-space which is one side of the hyperplane containing $z_i$.
Similarly we denote $H_{(j,i)}$ as another side containing $z_j$.
For every pair of centers $z_i,z_j$, we can always define the half-space $H_{(i,j)}$ in the above way.
It is clear to see that $\pi(p)=z_i$ if and only if $p\in \bigcap_{j\not =i} H_{(i,j)}$.
In other words, the assignment $\pi$ can be determined by the set of all half-spaces $\{H_{(i,j)}\mid i\not=j\}$.
Since $Q\subseteq[\Delta]^d$, the number of possible half-space $H_{(i,j)}$ for $i,j$ is at most $\Delta^d$.
The total number of possible set of half-spaces $\{H_{(i,j)}\mid i\not=j\}$ is at most $(\Delta^d)^{k\choose 2}\leq \Delta^{O(dk^2)}$.
It implies that the number of possible optimal assignments is at most $\Delta^{O(dk^2)}$ which is much less than the total number of all possible assignments.

Although we are able to get an unbiased estimator of the cost of the optimal assignment $\pi$, there is an issue remaining for bounding the variance of the cost of samples.
For two points $p,q$ from the same part $P_i$, the difference between $\dist(p,\pi(p))$ and $\dist(q,\pi(q))$ may be much larger than $\dist(p,q)$ (see Figure~\ref{fig:unbounded_variance} for an example).
\begin{figure}[t!]
    \centering
    \includegraphics[width=0.8\textwidth]{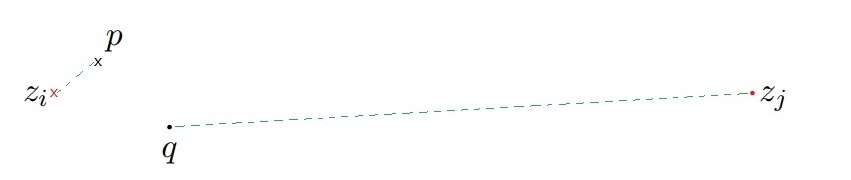}
    \caption{Due to capacity constraint, $p$ is assigned to the center $z_1$ and $q$ is assigned to the center $z_2$. 
    The difference between $\dist(p,z_1)$ and $\dist(q,z_2)$ depends on $\dist(z_1,z_2)$ and thus can be arbitrarily large.}
    \label{fig:unbounded_variance}
\end{figure}
Despite the difference between $\dist(p,\pi(p))$ and $\dist(q,\pi(q))$ can be arbitrarily large in general, the difference is upper bounded by $\dist(p,q)$ if $p$ and $q$ are assigned to the same center, i.e., $\pi(p)=\pi(q)$.
This observation motivates us to further conceptually partition $P_i$ into $k$ regions where each region contains the points assigned to the same center.
If each region either contains no point from $P_i$ or contains at least $|P_i|/\poly(\varepsilon^{-1}kd\log\Delta)$ points, then we can estimate $\sum_{p\in P_i}\dist^2(p,\pi(p))$ by sampling each point $p\in P_i$ with probability $\poly(\varepsilon^{-1}kd\log\Delta)/|P_i|$, and with high probability, the total estimated error can becomes a small relative error as discussed in the early paragraph in this section.
Unfortunately, there could be some region in which number of points is much less than $|P_i|/\poly(\varepsilon^{-1}kd\log\Delta)$.
These points may contribute a lot to the cost since they may be far away from their center. 
But we may sample none of them due to relatively low sampling rate and thus it can cause a large approximation error.
To handle this, we develop a method to transfer the optimal assignment $\pi$ to an another assignment $\pi':Q\rightarrow Z$ such that $\pi'$ approximately satisfies the capacity constraint and does not increase the cost by too much.
Furthermore, for each part $P_i$ and center $z_j$, either none of point in $P_i$ is assigned to $z_j$ by $\pi'$ or there are at least $|P_i|/\poly(\varepsilon^{-1}kd\log\Delta)$ number of points in $P_i$ are assigned to $z_j$ by $\pi'$.
If we estimate the cost of $\pi'$, the variance of the cost of sampled points can have a good upper bound.
Thus, with high probability, we can estimate the cost of $\pi'$.
By taking union bound over all the possible choices of $Z$ and the possible transferred assignments $\pi'$, we can prove that the sampled points form a strong coreset for capacitated $k$-means with high probability. 

Next, let us discuss how to extend the above idea for capacitated $k$-means to general capacitated $k$-clustering in $\ell_r$ for $r\geq 1$. 
The main difficulty to extend our capacitated $k$-means to $\ell_r$ case is that we may not find a hyperplane to separate two clusters since we cannot apply Pythagorean theorem for $\ell_r$ cost (see Figure~\ref{fig:generalized_hyperplane}).
\begin{figure}[b!]
    \centering
    \includegraphics[width=0.8\textwidth]{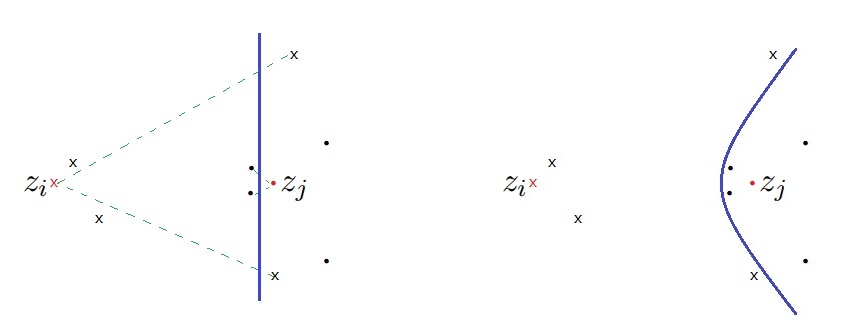}
    \caption{The optimal clusters for capacitated $k$-median may not be separated by a hyperplane. 
    But it is possible to use a curved hyperplane to separate the clusters. 
    For example, two clusters of capacitated $k$-median in $2$-dimensional space may be separated by a branch of a hyperbola.
    }
    \label{fig:generalized_hyperplane}
\end{figure}
But fortunately, we can find a curved hyperplane to separate two clusters. 
Consider two centers $z_i$ and $z_j$.
We can define a curved hyperplane as $\{x\in\mathbb{R}^d\mid \dist^r(x,z_i)-\dist^r(x,z_j) = a\}$ for some parameter $a\in\mathbb{R}$.
For an assignment $\pi:Q\rightarrow Z$, if $\dist^r(p,z_i)-\dist^r(p,z_j) < a$ and $\dist^r(q,z_i)-\dist^r(q,z_j) > a$ but $\pi(p)=z_j,\pi(q)=z_i$, then $\pi$ cannot be an optimal assignment since 
\begin{align*}
    \dist^r(p,z_i)+\dist^r(q,z_j) < \dist^r(q,z_i)+\dist^r(p,z_j),
\end{align*}
which implies switching the assigned centers of $p,q$ can give an assignment with smaller cost.
Thus, for an optimal assignment and any two clusters under the assignment, there always exists a curved hyperplane separating two clusters.
If we replace the half-spaces in previous paragraphs with the half-spaces defined by the above curved hyperplanes,  then the argument works for $k$-clustering in $\ell_r$ for general $r\geq 1$.

Since the partition $P_1,P_2,\cdots,P_s$ and its $\ell_r$ variant version can be computed in the streaming model (with both insertion and deletion) and distributed model~\cite{bflsy17,hsyz18} and we only need to sample points in each part with uniform sampling rate, our strong coreset construction can be easily implemented in the streaming setting and distributed setting. 
\section{Preliminaries}\label{sec:preli}

We use $[n]$ to denote the set $\{1,2,\cdots, n\}$.
For any $x\in\mathbb{R},a\in\mathbb{R}_{>0}$, we use $x\pm a$ to denote the interval $(x-a,x+a)$.
For any $x\in\mathbb{R}_{\geq 0},\varepsilon\in(0,1)$, we use $(1\pm \varepsilon)\cdot x$ to denote the interval $((1-\varepsilon)\cdot x, (1+\varepsilon)\cdot x)$.
Consider two points $x,y\in\mathbb{R}^d$. 
%We use $\langle x,y\rangle$ to denote the inner product of $x$ and $y$, i.e., $\langle x,y\rangle = \sum_{i=1}^d x_iy_i$.
If $\exists i\in[d]$ such that $x_1=y_1,x_2=y_2,\cdots,x_{i-1}=y_{i-1}$ and $x_i<y_i$ then $x$ is smaller than $y$ in the alphabetical order.
For $r\geq 1$, we use $\|x\|_r$ to denote the $\ell_r$ norm of $x\in\mathbb{R}^d$, i.e., $\|x\|_r=(\sum_{i=1}^d |x_i|^r)^{1/r}$.
We use $\dist(x,y)$ to denote the Euclidean distance between $x$ and $y$, i.e., $\dist(x,y)=\|x-y\|_2$. 

\begin{fact}\label{fac:approx_tri}
For $r\geq 1$ and any $x,y,z\in\mathbb{R}^d$, $\dist^r(x,z)\leq 2^{r-1}\left(\dist^r(x,y)+\dist^r(y,z)\right).$
\end{fact}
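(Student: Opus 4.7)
The plan is to reduce the inequality to the standard triangle inequality on Euclidean distances, followed by an elementary scalar inequality of the form $(a+b)^r \le 2^{r-1}(a^r+b^r)$ for $a,b \ge 0$ and $r \ge 1$. Since $\dist(\cdot,\cdot)$ is the Euclidean metric, the ordinary triangle inequality yields $\dist(x,z) \le \dist(x,y) + \dist(y,z)$. Setting $a = \dist(x,y)$ and $b = \dist(y,z)$, both non-negative, raising to the $r$-th power (which is monotone on $[0,\infty)$) gives $\dist^r(x,z) \le (a+b)^r$. The remaining work is then purely scalar.

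For the scalar step, I would invoke convexity of the function $t \mapsto t^r$ on $[0,\infty)$, which holds since $r \ge 1$. By Jensen's inequality applied to the convex combination with equal weights $\tfrac{1}{2}$,
\begin{equation*}
\left(\frac{a+b}{2}\right)^r \le \frac{a^r + b^r}{2}.
\end{equation*}
Multiplying both sides by $2^r$ yields $(a+b)^r \le 2^{r-1}(a^r + b^r)$, which is exactly the inequality needed. Combining this with the bound from the previous paragraph gives the claimed estimate.

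There is no real obstacle here; the statement is a standard ``approximate triangle inequality'' for the $r$-th power of a metric, and the only two ingredients are the genuine triangle inequality in $\mathbb{R}^d$ and the convexity of $t \mapsto t^r$. If one preferred to avoid Jensen, the same inequality can be obtained by checking that $f(a,b) = 2^{r-1}(a^r+b^r) - (a+b)^r$ is non-negative on $[0,\infty)^2$: it vanishes on the diagonal $a=b$, and calculus (or homogeneity, reducing to $b=1$ and analyzing a single-variable function of $a$) shows it attains its minimum there. Either route keeps the proof to a couple of lines.
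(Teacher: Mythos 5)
Your proof is correct and follows essentially the same route as the paper: first the Euclidean triangle inequality to reduce to a scalar statement, then convexity of $t\mapsto t^r$ (which the paper also invokes, just written as $\bigl(\dist(x,y)+\dist(y,z)\bigr)^r\leq\tfrac{(2\dist(x,y))^r+(2\dist(y,z))^r}{2}$) to obtain the factor $2^{r-1}$. No discrepancy.
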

\begin{proof}
\begin{align*}
\dist^r(x,z) \leq \left(\dist(x,y)+\dist(y,z)\right)^r \leq \frac{(2\dist(x,y))^r+(2\dist(y,z))^r}{2}\leq 2^{r-1} \left(\dist^r(x,y)+\dist^r(y,z)\right),
\end{align*}
where the first step follows from triangle inequality and the second step follows from convexity.
\end{proof}

\begin{definition}[Half-space]\label{def:halfspace}
Consider $r\geq 1$ and two points $z_1,z_2\in[\Delta]^d$. 
All points $x_1,x_2,\cdots,x_{\Delta^d}\in[\Delta]^d$ are sorted such that $\forall i\in[\Delta^d-1]$, either $\dist^r(x_i,z_1)-\dist^r(x_i,z_2)<\dist^r(x_{i+1},z_1)-\dist^r(x_{i+1},z_2)$ or $\dist^r(x_i,z_1)-\dist^r(x_i,z_2)=\dist^r(x_{i+1},z_1)-\dist^r(x_{i+1},z_2)$ and $x_i$ is smaller than $x_{i+1}$ in the alphabetical order. 
Let $t$ be an arbitrary integer in $\left[\Delta^d\right]$, then the set
$
H=\{x_1,x_2,\cdots,x_t\}
$
is an $\ell_r$-half-space corresponding to $(z_1,z_2,t)$.
\end{definition}

Given a set of points $Z\subset \mathbb{R}^d$ and a point $x\in\mathbb{R}^d$, we define $\dist(x,Z)=\dist(Z,x)=\min_{y\in Z}\dist(x,y)$.
For two sets of points $P,Q\subset\mathbb{R}^d$, we define the distance between $P,Q$ as $\dist(P,Q)=\min_{p\in P,q\in Q}\dist(p,q)$.
Consider an arbitrary set $S$.
If $S=S_1\cup S_2\cup\cdots\cup S_s$ and $\forall i\not=j\in[s],S_i\cap S_j=\emptyset$, then $S_1,S_2,\cdots,S_s$ is a partition of $S$, and $S_i$ is called a part of the partition.
We use $S=S_1\dun S_2\dun\cdots\dun S_s$ to denote that $S_1,S_2,\cdots,S_s$ partitions $S$.
For a point set $Q\subset [\Delta]^d$, a set of centers $Z=\{z_1,z_2,\cdots,z_k\}\subseteq [\Delta]^d$ with $|Z|=k$ and a size parameter $t\geq |Q|/k$, we define 
\begin{align*}
\cost_{t}(Q,Z)=\min_{\underset{\underset{\forall i\in[k],|S_i|\leq t}{Q = S_1\dun S_2\dun \cdots \dun S_k,}}{S_1,S_2,\cdots,S_k:}} \sum_{i=1}^k \sum_{p\in S_i} \dist^r(p,z_i).
\end{align*}
For $t=\infty$, we define
\begin{align*}
\cost_{\infty}(Q,Z)=\sum_{p\in Q} \dist^r(p, Z).
\end{align*}
For convenience of the notation, we use $\cost(Q,Z)$ to denote $\cost_{\infty}(Q,Z)$ for short.
Similarly, we can define a weighted version of the cost function.
Suppose each point $p\in Q$ has a weight $w(p)$.
%For a size parameter $t\geq \sum_{p\in Q}w(p)/k$,
We define
\begin{align*}
\cost_t(Q,Z,w) = \min_{\underset{\underset{\forall i\in[k],\sum_{p\in S_i} w(p)\leq t}{Q = S_1\dun S_2\dun \cdots \dun S_k,}}{S_1,S_2,\cdots,S_k:}} \sum_{i=1}^k \sum_{p\in S_i} w(p)\cdot \dist^r(p,z_i).
\end{align*}
If there is no partition $Q = S_1\dun S_2\dun \cdots \dun S_k$ satisfying $\forall i\in[k],\sum_{p\in S_i} w(p)\leq t$, we define $\cost_t(Q,Z,w)=\infty$.
We denote
\begin{align*}
\cost(Q,Z,w)=\cost_{\infty}(Q,Z,w) = \sum_{p\in Q} w(p)\cdot \dist^r(p,Z). 
\end{align*}

\begin{fact}
Consider a point set $Q\subseteq[\Delta]^d$ and a parameter $k\in\mathbb{Z}_{\geq 1}$. 
For $r\geq 1,\varepsilon,\eta\in(0,0.5)$, if $Q'\subseteq Q,w':Q'\rightarrow \mathbb{R}_{> 0}$ satisfies that $\forall t\geq |Q|/k$ and $\forall Z\subset[\Delta]^d$ with $|Z|=k$,
\begin{align*}
\frac{1}{1+\varepsilon}\cdot\cost_{(1+\eta)^2t}(Q,Z)\leq \cost_{(1+\eta)t}(Q',Z,w')\leq(1+\varepsilon)\cdot\cost_t(Q,Z),
\end{align*}
then for $\hat{Z}\subset[\Delta]^d$ with $|\hat{Z}|=k$ which satisfies
\begin{align*}
\cost_{(1+\eta)\beta t}(Q',\hat{Z},w')\leq \alpha \min_{Z\subset[\Delta]^d:|Z|=k} \cost_{(1+\eta) t}(Q',Z,w')
\end{align*}
for some $\alpha,\beta\geq 1$, we have
\begin{align*}
\cost_{(1+O(\eta))\beta t}(Q,\hat{Z})\leq (1+O(\varepsilon))\alpha \min_{Z\subseteq[\Delta]^d:|Z|=k} \cost_t(Q,Z).
\end{align*}
\end{fact}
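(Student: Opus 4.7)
The plan is to chain three inequalities that move between the cost of $\hat{Z}$ on $Q$ and the cost of an arbitrary competing center set $Z$ on $Q$, routing through the weighted coreset $(Q', w')$ in both directions. Each crossing of the coreset boundary will cost a factor of $1+\varepsilon$ in multiplicative approximation and a factor of $1+\eta$ in capacity, and the hypothesized $(\alpha,\beta)$-guarantee on $\hat Z$ contributes one more factor of $\alpha$ and one more factor of $\beta$ in the middle.

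Concretely, I would first apply the lower-bound half of the strong coreset property to the center set $\hat Z$ at reference capacity $\beta t$ (which is $\geq |Q|/k$ because $\beta\geq 1$), obtaining
\begin{align*}
\tfrac{1}{1+\varepsilon}\,\cost_{(1+\eta)^2\beta t}(Q,\hat Z)\;\leq\;\cost_{(1+\eta)\beta t}(Q',\hat Z,w').
\end{align*}
Next I would plug in the hypothesized approximation guarantee on $\hat Z$ to bound the right-hand side by $\alpha\min_{Z}\cost_{(1+\eta)t}(Q',Z,w')$. Finally, for each competitor $Z\subseteq[\Delta]^d$ with $|Z|=k$, the upper-bound half of the coreset property applied at reference capacity $t$ gives $\cost_{(1+\eta)t}(Q',Z,w')\leq(1+\varepsilon)\cost_t(Q,Z)$; taking the minimum over $Z$ on both sides preserves the inequality. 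Composing these three steps yields
\begin{align*}
\cost_{(1+\eta)^2\beta t}(Q,\hat Z)\;\leq\;\alpha(1+\varepsilon)^2\min_{Z\subseteq[\Delta]^d:\,|Z|=k}\cost_t(Q,Z),
\end{align*}
and since $\eta,\varepsilon\in(0,0.5)$ the squared factors absorb into $1+O(\eta)$ and $1+O(\varepsilon)$, matching the claim.

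There is no substantive obstacle; the argument is pure bookkeeping on how the two relaxations (in capacity and in multiplicative error) compose across the two directions of the coreset guarantee. The only care needed is to pick the correct reference capacity on each side of the chain — the coreset guarantee is instantiated at reference capacity $\beta t$ when bounding $\hat Z$ from below and at reference capacity $t$ when bounding the competitor from above — so that the capacity blow-ups multiply to exactly $(1+\eta)^2\beta$ rather than something larger, and so that the factors of $(1+\varepsilon)$ compound only to $(1+\varepsilon)^2$. One should also note that $\beta t\geq|Q|/k$ is automatic from $\beta\geq 1$, so the coreset hypothesis is in fact applicable at that capacity.
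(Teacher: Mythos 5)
Your proposal is correct and matches the paper's proof exactly: the same three-step chain — instantiate the lower coreset bound at capacity $\beta t$ for $\hat Z$, apply the $(\alpha,\beta)$-approximation hypothesis, then instantiate the upper coreset bound at capacity $t$ for the competitor — composed to give $\cost_{(1+\eta)^2\beta t}(Q,\hat Z)\leq(1+\varepsilon)^2\alpha\min_Z\cost_t(Q,Z)$, with the squared factors absorbed into $1+O(\eta)$ and $1+O(\varepsilon)$.
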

\begin{proof}
We have
\begin{align*}
&\cost_{(1+\eta)^2\beta t}(Q,\hat{Z})\\
\leq & (1+\varepsilon)\cdot \cost_{(1+\eta)\beta t}(Q',\hat{Z},w')\\
\leq & (1+\varepsilon)\alpha\cdot \min_{Z\subset [\Delta]^d:|Z|=k} \cost_{(1+\eta)t} \cost(Q',Z,w')\\
\leq & (1+\varepsilon)^2\alpha \cdot \min_{Z\subset[\Delta]^d:|Z|=k} \cost_{t}(Q,Z)
\end{align*}
\end{proof}
where the first and the last step follows from the strong coreset property of $(Q',w')$, and the second step follows from that $\hat{Z}$ is an $(\alpha,\beta)$-approximation.
Notice that $(1+\eta)^2=(1+O(\eta))$ and $(1+\varepsilon)^2=(1+O(\varepsilon))$ since $\eta,\varepsilon\in(0,0.5)$.
We complete the proof.
\section{The Coreset Construction}

In this section we show an offline construction of the coreset and give an analysis of the algorithm.
In Section~\ref{sec:points_partitioning}, we show how to partition the input point set. 
The partitioning scheme follows the common thread of work \cite{chen09,bflsy17,hsyz18}. 
Next, we describe our algorithm in Algorithm~\ref{alg:coreset_construction}. 
To prove the correctness of our algorithm, we develop a novel \emph{half-space} argument.
We show the details of the argument in Section~\ref{sec:construction}. 
This is the main technical contribution of this section.
In Section~\ref{sec:assignment_via_coreset}, given a set of centers and capacity constraints, we show how to efficiently compute a good assignment for the coreset and build a good representation of a good assignment for the original point set.

\subsection{Points Partitioning}\label{sec:points_partitioning}
In this section, we use a common approach to partition the point set. 
We refer readers to \cite{chen09,bflsy17,hsyz18} for more details and history of this partitioning approach.
%Similar to \cite{chen09,bflsy17,hsyz18}, 
We put all missing proofs in this section to Appendix~\ref{sec:missing_proofs_partition}

Let us partition the space $[\Delta]^d$ by a randomly shifted hierarchical grid structure.
Without loss of generality, we suppose $\Delta=2^L$ for some integer $L$.
We choose a vector $v\in\mathbb{R}^d$ such that each entry is an i.i.d. random sample drawn uniformly from $[0,\Delta]$.
Then we can impose $L+1$ level grids $G_0,G_1,\cdots,G_L$, where the grid $G_i$ partitions the space $\mathbb{R}^d$ into cells with side length $g_i=\Delta/2^i$, and there is a cell which has a corner with location $v$. 
More precisely, 
\begin{align*}
&\forall i\in \{0,1,\cdots,L\}, \\
&G_i=\left\{C~\Big|~C=[v_1+g_i t_1, v_1+g_i (t_1+1)  )\times\cdots\times [v_d+g_i t_d, v_d+g_i (t_d+1) ),t_1,t_2,\cdots,t_d\in\mathbb{Z} \right\}.
\end{align*} 
For convenience, we also define the gird $G_{-1}$ in the similar way. 
Since the cell in $G_{-1}$ has side length $g_{-1}=2\Delta$, there must be a cell which contains all the points in $[\Delta]^d$.
If $C\in G_i,C'\in G_j$ and $C\subseteq C'$, then we call cell $C'$ an ancestor of cell $C$.
For a point $p\in\mathbb{R}^d$, if $p\in C$ for some cell $C\in G_i$, then we define $c_i(p)=C$.
Similarly, for a point set $P\subset \mathbb{R}^d$, if $P\subseteq C$ for some cell $C\in G_i$, then we denote $c_i(P)=C$.

Consider an input point set $Q\subseteq [\Delta]^d$ and a parameter $k\in\mathbb{Z}_{\geq 1}$.
We denote
\begin{align*}
\OPT^{(r)}_{k\text{-clus}}=\min_{Z\subseteq [\Delta]^d:~|Z|\leq k} \cost(Q,Z).
\end{align*}
Let us review the heavy cell partitioning scheme (Algorithm~\ref{alg:heavy_light}). %used in \cite{chen09,bflsy17,hsyz18}.
\begin{algorithm}[t!]
	\small
	\begin{algorithmic}[1]\caption{Partitioning via Heavy Cells}\label{alg:heavy_light}
		\STATE {\bfseries Predetermined:}  $o\in \left[1,\Delta^d\cdot \left(\sqrt{d}\Delta\right)^{r}\right]$ which is a guess of the optimal standard $l_r$ $k$-clustering cost
		\STATE {\bfseries Input:} $Q\subseteq [\Delta]^d$, $k\in\mathbb{Z}_{\geq 1}$
		\STATE Impose randomly shifted grids $G_{-1},G_0,\cdots,G_L$.
%		\STATE Let $C\in G_{-1}$ be the cell which contains $[\Delta]^d$. $$Mark $C$ as \emph{heavy}.
%		\STATE Let $T_{-1}(o)\gets (d/g_0)^2\cdot o/k\cdot 1/400$.
		\FOR{$i:=-1\rightarrow L-1$}
			\STATE $T_i(o)\gets 0.01\cdot \frac{o}{(\sqrt{d}g_i)^r}$.
			\FOR{$C\in G_i:~C\cap Q\not = \emptyset$}
				\STATE Estimate the size of $|C\cap Q|$ up to some precision, and let $\tau(C\cap Q)$ be the estimated size. \label{sta:estimated_size}
				\STATE If $\tau(C\cap Q) \geq T_i(o)$ and all the ancestors of $C$ are \emph{heavy}, mark $C$ as heavy. 
				\STATE Otherwise, if all the ancestors of $C$ are marked as heavy, mark $C$ as \emph{crucial}.
			\ENDFOR
		\ENDFOR
		\STATE For $C\in G_L$, if all the ancestors of $C$ are heavy, mark $C$ as crucial.
		\STATE $\forall i\in\{0,1,\cdots,L\},$ let $s_{i}$ denote the number of heavy cells in $G_{i-1}$.
		\STATE Partition $Q=\Dun_{i=0}^L\Dun_{j=1}^{s_i} Q_{i,j}$: $Q_{i,j}=\bigcup_{\text{crucial }\hat{C}\in G_i:\hat{C}\subset C} (\hat{C}\cap Q)$ where $C$ is the $j$-th heavy cell in $G_{i-1}$.\label{sta:conceptually_partitioning}
		\STATE {\bfseries Output:} All cells marked as heavy in $G_{-1},G_0,G_1,\cdots,G_{L-1}$
	\end{algorithmic}
\end{algorithm}
In Algorithm~\ref{alg:heavy_light}, once the heavy cells are determined, the partitioning of $Q=\Dun_{i=0}^L\Dun_{j=1}^{s_i} Q_{i,j}$ is determined.
Thus, we explicitly store all the heavy cells and conceptually partition $Q$ into $Q_{0,1},Q_{0,2},\cdots,Q_{0,s_0},Q_{1,1},\cdots,Q_{1,s_1},\cdots,Q_{L,s_L}$ for analysis.

\begin{definition}\label{def:good_estimation_alg1}
If the estimated size $\tau(C\cap Q)$ in line~\ref{sta:estimated_size} of Algorithm~\ref{alg:heavy_light} satisfies either $\tau(C\cap Q)\in |C\cap Q|\pm 0.1T_i(o)$ or $\tau(C\cap Q)\in (1\pm 0.01)\cdot |C\cap Q|$, then the estimated size $\tau(C\cap Q)$ is good for $C$.
\end{definition}

%As observed by~\cite{chen09,bflsy17,hsyz18}, 
If the guess $o$ is close to $\OPT^{(r)}_{k\text{-clus}}$, the number of heavy cells cannot be too large with a good probability. 
The main reason is that there cannot be too many \emph{center cells}.
Let $Z^*\subset[\Delta]^d$ with $|Z^*|\leq k$ be an optimal solution of the standard $\ell_r$ $k$-clustering problem of $Q$, i.e., $\cost(Q,Z^*)=\OPT^{(r)}_{k\text{-clus}}$. 
We call a cell $C\in G_i$ a center cell if $\dist(C,Z^*)\leq g_i/d$.
Let $\mathcal{F}$ denote the event that the total number of center cells is at most $2000kL$.

\begin{lemma}[Lemma 14 of \cite{hsyz18}]\label{lem:number_of_center_cells}
$\mathcal{F}$ happens with probability at least $0.99$.
\end{lemma}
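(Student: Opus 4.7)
My plan is to bound the expected total number of center cells by $O(kL)$ and then invoke Markov's inequality. For a fixed center $z \in Z^*$ and a single level $i \in \{-1, 0, \ldots, L\}$, I will estimate the number of cells $C \in G_i$ with $\dist(C, z) \leq g_i/d$; summing over the at most $k$ centers and the $L+2$ levels of the hierarchy and applying Markov with threshold $2000\, kL$ should close things out.

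For the per-center, per-level bound I will exploit that the random shift $v$ is uniform and independent across axes. Consequently, the displacement $a_j$ from $z_j$ to the upper face of the $G_i$-cell containing $z$ along axis $j$ is uniform in $[0, g_i]$, independently for each $j$, and $b_j := g_i - a_j$ gives the displacement to the lower face. Decomposing the Euclidean distance coordinate-wise, a cell at displacement $\delta \in \{-1, 0, 1\}^d$ from the cell containing $z$ can be within $g_i/d$ of $z$ only if, for every $j$ with $\delta_j \neq 0$, the relevant face distance ($a_j$ if $\delta_j = 1$, else $b_j$) is at most $g_i/d$. Setting $Y_j := \mathbf{1}[\min(a_j, b_j) \leq g_i/d]$, which for $d \geq 2$ counts a well-defined single axis-$j$ event since $a_j + b_j = g_i$ forces the two conditions to be mutually exclusive, this yields
\[
\#\{\text{center cells near } z \text{ at level } i\} \;\leq\; \prod_{j=1}^d (1 + Y_j).
\]

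By independence across coordinates the expectation factorizes: $\E[\prod_j (1 + Y_j)] = (1 + \Pr[Y_j = 1])^d = (1 + 2/d)^d \leq e^2$, using $\Pr[a_j \leq g_i/d] = \Pr[b_j \leq g_i/d] = 1/d$. Summing over at most $k$ centers and $L+2$ levels gives an expected total of at most $e^2 k(L+2)$ center cells, and Markov's inequality then bounds the probability of exceeding $2000\, kL$ by roughly $e^2/2000 < 0.01$ for $L \geq 1$.

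The main (mild) obstacle I foresee is justifying the coordinate-wise distance decomposition cleanly—specifically, arguing that failing the proximity threshold on a single active axis already forces the overall distance above $g_i/d$. The $d = 1$ corner case must be handled separately, but there the per-level count is trivially bounded by $3$, so the conclusion is unchanged.
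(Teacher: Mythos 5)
Your proposal is correct and takes essentially the same approach as the paper's proof: define per-axis indicator variables for the center being near a cell boundary, bound the per-center per-level count of center cells by the product $\prod_j(1+Y_j)=2^{\sum_j Y_j}$, compute the expectation $(1+2/d)^d\le e^2$ via coordinate-wise independence, sum over centers and levels, and finish with Markov. The only cosmetic difference is that you write $\prod_j(1+Y_j)$ where the paper writes $2^{\sum_l X_l}$ (identical since $Y_j\in\{0,1\}$), and you give a slightly more explicit justification of the coordinate-wise decomposition; the argument is the same.
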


In the remaining of the paper, we condition on the event $\mathcal{F}$. 
Since $\mathcal{F}$ happens, it is able to show that there cannot be too many heavy cells.

\begin{lemma}[Number of heavy cells]\label{lem:num_heavy_cells}
Suppose $o\leq \OPT^{(r)}_{k\text{-clus}}$.
%Given $Q\subseteq[\Delta]^d,k\in \mathbb{Z}_{\geq 1}$, 
If the estimated size $\tau(C\cap Q)$ in line~\ref{sta:estimated_size} of Algorithm~\ref{alg:heavy_light} is good (Definition~\ref{def:good_estimation_alg1}) for every cell $C$, then condition on $\mathcal{F}$,
%then 
%with probability at least $0.94$, 
the number of heavy cells outputted by Algorithm~\ref{alg:heavy_light}, $\sum_{i=0}^L s_i$, is at most $2000 (k+d^{1.5r})L\cdot\frac{\OPT^{(r)}_{k\text{-clus}}}{o}$.
\end{lemma}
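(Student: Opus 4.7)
The plan is to split the heavy cells into two categories at each level and bound each separately. Fix a level $i\in\{0,1,\ldots,L\}$ and look at heavy cells in $G_{i-1}$. Call a heavy cell $C\in G_{i-1}$ a \emph{center heavy cell} if $\dist(C,Z^*)\leq g_{i-1}/d$, and a \emph{far heavy cell} otherwise. Event $\mathcal{F}$ already bounds the total number of center cells across all levels by $2000kL$, so in particular the number of center heavy cells across all $L+1$ levels is at most $2000kL$. The remaining work is to bound, for each level separately, the number of far heavy cells using $\OPT^{(r)}_{k\text{-clus}}$.

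For a far heavy cell $C\in G_{i-1}$, every point $p\in C\cap Q$ satisfies $\dist(p,Z^*)\geq \dist(C,Z^*)>g_{i-1}/d$, so contributes at least $(g_{i-1}/d)^r$ to $\cost(Q,Z^*)=\OPT^{(r)}_{k\text{-clus}}$. Because $C$ is marked heavy, $\tau(C\cap Q)\geq T_{i-1}(o)$; by the goodness of the estimate (Definition~\ref{def:good_estimation_alg1}) this forces $|C\cap Q|\geq 0.9\,T_{i-1}(o)$ — if the multiplicative guarantee holds then $|C\cap Q|\geq T_{i-1}(o)/1.01$, and if the additive one holds then $|C\cap Q|\geq T_{i-1}(o)-0.1\,T_{i-1}(o)=0.9\,T_{i-1}(o)$. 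Since the cells of $G_{i-1}$ are disjoint, summing the contributions across all far heavy cells at level $i-1$ gives
\begin{align*}
\#\{\text{far heavy cells in }G_{i-1}\}\cdot 0.9\,T_{i-1}(o)\cdot\Bigl(\frac{g_{i-1}}{d}\Bigr)^{r}\;\leq\;\OPT^{(r)}_{k\text{-clus}}.
\end{align*}
Plugging in $T_{i-1}(o)=0.01\cdot o/(\sqrt{d}\,g_{i-1})^r$, the factors of $g_{i-1}$ cancel and I am left with
\begin{align*}
\#\{\text{far heavy cells in }G_{i-1}\}\;\leq\;\frac{\OPT^{(r)}_{k\text{-clus}}}{0.009\cdot o/d^{1.5r}}\;\leq\;112\,d^{1.5r}\cdot\frac{\OPT^{(r)}_{k\text{-clus}}}{o}.
\end{align*}

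Summing over all $L+1$ levels $i=0,\ldots,L$ yields at most $112(L+1)d^{1.5r}\cdot\OPT^{(r)}_{k\text{-clus}}/o$ far heavy cells, and combining with the $2000kL$ bound on center heavy cells (and using $o\leq \OPT^{(r)}_{k\text{-clus}}$ so that $\OPT^{(r)}_{k\text{-clus}}/o\geq 1$) gives
\begin{align*}
\sum_{i=0}^L s_i\;\leq\;2000kL\cdot\frac{\OPT^{(r)}_{k\text{-clus}}}{o}+112(L+1)d^{1.5r}\cdot\frac{\OPT^{(r)}_{k\text{-clus}}}{o}\;\leq\;2000(k+d^{1.5r})L\cdot\frac{\OPT^{(r)}_{k\text{-clus}}}{o},
\end{align*}
which is exactly the claimed bound.

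The argument is essentially accounting, so there is no real technical obstacle; the only subtleties are (i) keeping the two possible regimes of "good estimation" straight to get $|C\cap Q|\geq 0.9\,T_{i-1}(o)$ from $\tau(C\cap Q)\geq T_{i-1}(o)$, and (ii) making sure that summing contributions across cells at the \emph{same} level does not overcount points (which is immediate from disjointness of $G_{i-1}$) while across \emph{different} levels I pay an $L$ factor. Getting the constants to line up with the $2000(k+d^{1.5r})L$ target is straightforward once these two points are handled.
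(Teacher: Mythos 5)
Your proof is correct and follows essentially the same approach as the paper's: split heavy cells into those near a center of $Z^*$ (bounded by $2000kL$ via event $\mathcal{F}$) and those far from $Z^*$ (bounded per level by charging each far heavy cell a contribution of at least $0.9\,T_{i-1}(o)\cdot(g_{i-1}/d)^r$ to $\OPT^{(r)}_{k\text{-clus}}$, then paying an $L$ factor across levels). The only cosmetic differences are in the naming ("far heavy cell" vs. the paper's "not a center cell") and the exact constants ($112(L+1)$ vs. the paper's $240L$), both of which fit comfortably within the target bound.
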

%We include the proof for completeness.

%\subsection{Small Parts Removal}
%In~\cite{hsyz18}, they observed that removing parts with small size does not affect the cost of $k$-means a lot.
%We extend their argument and 
In the following, we show that removal of small parts will not change the cost of balanced (capacitated) $k$-clustering by too much.
\begin{lemma}\label{lem:small_parts_removal}
%Consider a point set $Q\subseteq[\Delta]^d$ and a parameter $k\in \mathbb{Z}_{\geq 1}$.
%Suppose the estimated size $\tau(C\cap Q)$ in line~\ref{sta:estimated_size} of Algorithm~\ref{alg:heavy_light} always satisfies either $\tau(C\cap Q)\in |C\cap Q|\pm 0.1T_i(o)$ or $\tau(C\cap Q)\in (1\pm 0.01)\cdot T_i(o)$.
$\forall i\in\{0,1,\cdots,L\},$ let $\mathcal{P}_i=\{Q_{i,j}\mid j\in[s_i]\}$, where $Q_{i,j}$ are parts computed by  line~\ref{sta:conceptually_partitioning} of Algorithm~\ref{alg:heavy_light}. 
%Let $\mathcal{P}=\{\}$ be the set of all the parts computed by line~\ref{sta:conceptually_partitioning} of Algorithm~\ref{alg:heavy_light}. 
Let $\mathcal{P}^N_0\subseteq \mathcal{P}_0,\mathcal{P}^N_1\subseteq \mathcal{P}_1,\cdots,\mathcal{P}^N_L\subseteq \mathcal{P}_L$ be arbitrary subsets of parts satisfying $\forall i\in\{0,1,\cdots,L\},P\in \mathcal{P}_i^N$, $|P|\leq 2\gamma T_i(o)$, where $\gamma = \min\left(\frac{\eta}{8\cdot 2^{r}kL},\frac{\varepsilon}{4000\cdot 2^{2r}\left(k+d^{1.5r}\right)L}\right)$ for some arbitrary $\varepsilon,\eta\in(0,0.5)$.
%If the total number of heavy cells found by Algorithm~\ref{alg:heavy_light}, $\sum_{i=0}^L s_i$, is at most $10000\cdot kL$
Suppose $o\leq \OPT_{k\text{-means}}$, $\sum_{i=0}^L s_i\leq 20000(k+d^{1.5r})L$, and $\forall i\in \{-1,0,\cdots,L-1\},$ every heavy cell $C\in G_i$ satisfies $|C\cap Q|\geq 0.5 T_i(o)$.
Then $\forall t\geq |Q|/k$ and $Z\subseteq[\Delta]^d$ with $|Z|= k$, 
\begin{align*}
\begin{array}{ccc}
\cost_{t}(Q\setminus Q^N,Z)\leq \cost_{t}(Q,Z) & \text{and} & \cost_{(1+\eta)\cdot t}(Q,Z)\leq (1+\varepsilon)\cost_{t}(Q\setminus Q^N,Z),
\end{array}
\end{align*}
where $Q^N=\bigcup_{i=0}^L\bigcup_{P\in\mathcal{P}_i^N} P$.
\end{lemma}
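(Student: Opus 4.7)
The first inequality is immediate: taking an optimal capacity-$t$ partition $S^*_1,\ldots,S^*_k$ of $Q$ and restricting each cluster to $Q\setminus Q^N$ yields a valid capacity-$t$ partition of $Q\setminus Q^N$ whose total $\ell_r$ cost is at most $\cost_t(Q,Z)$.

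For the second inequality, my plan is to take an optimal capacity-$t$ partition $S^*_1,\ldots,S^*_k$ of $Q\setminus Q^N$ with centers $z_1,\ldots,z_k$, let $\pi^*:Q\setminus Q^N\to Z$ denote the induced assignment, and extend it to a capacity-$(1+\eta)t$ partition of $Q$ by assigning each $p\in Q^N$ to the cluster of a suitably chosen \emph{representative}. Concretely, if $p$ lies in a small part $Q_{i,j}\in\mathcal{P}_i^N$ whose heavy parent is the cell $C(i,j)\in G_{i-1}$, I draw $q(p)$ uniformly at random (or via a deterministic/fractional analog) from $C(i,j)\cap(Q\setminus Q^N)$ and set $\pi(p):=\pi^*(q(p))$. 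To guarantee that $C(i,j)\cap(Q\setminus Q^N)$ is sizeable, say $\geq 0.25\,T_{i-1}(o)$, I use a telescoping argument: points of $Q^N$ inside $C$ come from small parts at levels $i'\geq i$, and since $T_{i'}(o)=2^rT_{i'-1}(o)$ while heavy cells at level $i'-1$ inside $C$ each contain at least $\approx T_{i'-1}(o)$ points, the total over levels telescopes to $|C\cap Q^N|\leq O(\gamma\cdot 2^rL)\,|C\cap Q|\leq |C\cap Q|/2$ by the definition of $\gamma$.

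I would then bound the capacity violation and the added cost. For capacity, the expected number of $Q^N$-points assigned to cluster $S^*_\ell$ equals $\sum_{(i,j):\,Q_{i,j}\in\mathcal{P}_i^N}|Q_{i,j}|\cdot|S^*_\ell\cap C(i,j)|/|C(i,j)\cap(Q\setminus Q^N)|$. Using $|Q_{i,j}|/|C(i,j)\cap(Q\setminus Q^N)|\leq 2\gamma T_i(o)/(0.25\,T_{i-1}(o))=8\gamma\cdot 2^r$ and swapping sums gives $\leq 8\gamma\cdot 2^r(L+1)|S^*_\ell|\leq \eta t$ by $\gamma\leq\eta/(8\cdot 2^rkL)$; a standard concentration plus union bound over the $k$ clusters (or a direct rounding) converts this into an integral assignment with every cluster of size $\leq(1+\eta)t$. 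For cost, Fact~\ref{fac:approx_tri} gives $\dist^r(p,\pi^*(q(p)))\leq 2^{r-1}(\dist^r(p,q(p))+\dist^r(q(p),\pi^*(q(p))))$. The second piece, summed over $p\in Q^N$, is controlled by the same ``each $q\in Q\setminus Q^N$ is used at most $\approx 8\gamma\cdot 2^rL$ times'' charging and contributes $\leq(\varepsilon/C)\,\cost_t(Q\setminus Q^N,Z)$. The first (``diameter'') piece equals $2^{r-1}\sum_{(i,j):\,Q_{i,j}\in\mathcal{P}_i^N}|Q_{i,j}|(\sqrt d\,g_{i-1})^r$, which using $|Q_{i,j}|(\sqrt d\,g_i)^r\leq 0.02\gamma\,o$ and $\sum_i s_i\leq 20000(k+d^{1.5r})L$ works out to $\leq(\varepsilon/C)\,o$.

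The hard part is to absorb this additive $o$-term into a multiplicative error on $\cost_t(Q\setminus Q^N,Z)$, since a~priori the right-hand side could be much smaller than $o$. My plan is to use $o\leq\OPT^{(r)}_{k\text{-clus}}\leq \cost(Q,Z)=\cost(Q\setminus Q^N,Z)+\cost(Q^N,Z)$, and to apply the same representative charging to the \emph{uncapacitated} cost $\cost(Q^N,Z)$ (replacing $\dist^r(q(p),\pi^*(q(p)))$ by $\dist^r(q(p),Z)$) to get $\cost(Q^N,Z)\leq(\varepsilon/C)\,o+(\varepsilon/C)\,\cost(Q\setminus Q^N,Z)$; rearranging yields $o\leq(1+O(\varepsilon))\,\cost_t(Q\setminus Q^N,Z)$. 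Substituting this back into the diameter bound absorbs the $o$ and produces the advertised $(1+\varepsilon)\cost_t(Q\setminus Q^N,Z)$ bound. Aside from this absorbing trick, the remaining work is bookkeeping of the $2^r$, $L$, and $k+d^{1.5r}$ factors to match the precise form of $\gamma$, plus the concentration/rounding needed to turn the fractional/random assignment into a valid integral one.
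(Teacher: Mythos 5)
Your overall strategy matches the paper's --- the representative-in-the-parent-heavy-cell trick, the $\ell_r$ near-triangle inequality (Fact~\ref{fac:approx_tri}), the telescoping/induction showing each heavy cell $C\in G_{i-1}$ still holds $\geq(1-2^{r+2}L\gamma)|C\cap Q|\geq T_{i-1}(o)/4$ points of $Q^I$, and the final absorption of the additive $o$-term --- so the calculation is essentially the right one. But you complicate the argument in a place where the paper has a much simpler observation, and that complication introduces a real soft spot.

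You decide the cluster of each $p\in Q^N$ by the cluster of a random representative $q(p)$, and then try to control the per-cluster overflow via an expectation bound plus ``standard concentration and union bound (or rounding).'' This step is shaky: your expectation bound is already of order $\eta t$, so any Chernoff-type overshoot pushes you past the budget unless you carefully reserve slack, and when $|Q^N|$ is tiny relative to $t$ the concentration gives you nothing useful anyway. The paper avoids all of this by \emph{decoupling cost from capacity}: it assigns every $p\in Q^N$ to its \emph{nearest} center in $Z$ (so the extension of the optimal $Q^I$-assignment to $Q$ is deterministic), and the representative $q'\in c_{i-1}(P)\cap Q^I$ is used \emph{only} to upper-bound the cost $\dist^r(p,Z)\leq 2^{r-1}(\sqrt d\,g_{i-1})^r+2^{r-1}\dist^r(q',Z)$ by an averaging choice of $q'$, not to pick $p$'s cluster. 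The capacity bound then follows for free: every cluster gains at most $|Q^N|\leq\eta|Q|/k\leq\eta t$ extra points regardless of how the $Q^N$-points are assigned (this is exactly Claim~\ref{cla:num_of_non_important_points}), so $\cost_{(1+\eta)t}(Q,Z)\leq\cost_t(Q^I,Z)+\cost(Q^N,Z)$ with no randomness and no rounding. You should use this trivial bound; it closes the gap and shortens the argument considerably.

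The other deviation --- establishing $o\leq(1+O(\varepsilon))\cost_t(Q^I,Z)$ first and then substituting back --- is fine and logically equivalent to what the paper does; the paper instead keeps $\cost_{(1+\eta)t}(Q,Z)$ on both sides of a single chain of inequalities and rearranges $(1-\varepsilon/4)\cost_{(1+\eta)t}(Q,Z)\leq(1+\varepsilon/4)\cost_t(Q^I,Z)$ at the end. Either bookkeeping is acceptable. One small care point for your version: you need the \emph{uncapacitated} decomposition $\cost(Q,Z)=\cost(Q^I,Z)+\cost(Q^N,Z)$ (which holds), and then $\cost(Q^I,Z)\leq\cost_t(Q^I,Z)$, which you do use; just make sure the constants in ``$(\varepsilon/C)$'' leave room for the division by $1-\varepsilon/C$.
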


\subsection{Construction and Analysis}\label{sec:construction}
Our coreset construction is shown in Algorithm~\ref{alg:coreset_construction}. In the remaining of the section, let us analyze the algorithm.
Before we starts our proof, let us assume that both $\tau\left(\bigcup_{j=1}^{s_i} Q_{i,j}\right)$ and $\tau(Q_{i,j})$ are good estimations of $\left|\bigcup_{j=1}^{s_i} Q_{i,j}\right|$ and $|Q_{i,j}|$ respectively.
\begin{algorithm}[t!]
	\small
	\begin{algorithmic}[1]\caption{Coreset Construction}\label{alg:coreset_construction}
		\STATE {\bfseries Predetermined:} $o\in \left[1,\Delta^d\cdot \left(\sqrt{d}\Delta\right)^{r}\right]$ which is a guess of the optimal standard $l_r$ $k$-clustering cost
		\STATE {\bfseries Input:} $Q\subseteq [\Delta]^d$, $k\in\mathbb{Z}_{\geq 1},\eta,\varepsilon\in(0,0.5)$
		\STATE $L\gets\log\Delta,\gamma\gets 2^{-2(r+10)}\min\left(\frac{\eta}{kL},\frac{\varepsilon}{(k+d^{1.5r})L}\right)$, $\xi\gets 2^{-2(r+10)} \frac{\min(\varepsilon,\eta)}{k(k+d^{1.5r})L^2}$,$\lambda\gets 10^6r k^3 dL\lceil\log(kdL)\rceil$.
		\STATE Compute the partition $Q=\Dun_{i=0}^L\Dun_{j=1}^{s_i} Q_{i,j}$. Set $T_i(o)\gets 0.01 \cdot \frac{o}{(\sqrt{d}g_i)^r}$.\hfill{//Algorithm~\ref{alg:heavy_light}} \label{sta:call_alg1}
		\STATE If $\sum_{i=0}^L s_i> 20000(k+d^{1.5r})L$, return FAIL. \label{sta:return_fail_in_alg1}
		\STATE For $0\leq i\leq L$, let $\tau\left(\bigcup_{j=1}^{s_i} Q_{i,j}\right)$ be an estimation of $\sum_{j=1}^{s_i}|Q_{i,j}|$ up to some precision. Return FAIL if 
		\begin{align*}
		    \exists i\in\{0,1,\cdots,L\},\tau\left(\bigcup_{j=1}^{s_i} Q_{i,j}\right)> 10000(kL+d^{1.5r})T_i(o).
		\end{align*}\label{sta:return_fail_in_alg2}
		\FOR{$i=0\rightarrow L$}
		    \STATE $\mathcal{P}^I_i\gets \emptyset,Q'_i\gets \emptyset$, $\phi_i \gets \min\left(1,2^{2(r+10)}\cdot\frac{\lambda}{\xi^3\gamma T_i(o)}\right)$.
		    \STATE For $j\in[s_i]$, let $\tau(Q_{i,j})$ be an estimation of $|Q_{i,j}|$ up to some precision. If $\tau(Q_{i,j})\geq \gamma T_i(o)$,
		    \begin{align*}
		        \mathcal{P}^I_i\gets \mathcal{P}^I_i\cup \{Q_{i,j}\}.
		    \end{align*} \label{sta:lb_of_each_final_part}
		    \STATE Let $\hat{h}_i:[\Delta]^d\rightarrow \{0,1\}$ be a $\lambda$-wise independent hash function s.t. $\forall p\in [\Delta]^d,$ $\Pr[\hat{h}_i(p)=1]=\phi_i$.
		    \STATE For each $P\in\mathcal{P}^I_i$ and for each $p\in P$, add $p$ into $Q'_i$ if $\hat{h}_i(p)=1$. \label{sta:sampling_step}%with probability $\phi_i$ and set $w'(p)\gets \frac{1}{\phi_i}$. 
		\ENDFOR
	    \STATE {\bfseries Output:} $Q'=\bigcup_{i=0}^L Q'_i,w':Q'\rightarrow \mathbb{R}_{>0}$.
	\end{algorithmic}
\end{algorithm}

%\subsection{Analysis}

%\subsubsection{Center Assignments via half-spaces}

\begin{definition}\label{def:good_estimation_alg2}
If the estimated size $\tau\left(\bigcup_{j=1}^{s_i} Q_{i,j}\right)$ in Algorithm~\ref{alg:coreset_construction} satisfies either $\tau\left(\bigcup_{j=1}^{s_i} Q_{i,j}\right)\in \sum_{j=1}^{s_i}|Q_{i,j}| \pm 0.1 T_i(o)$ or $\tau\left(\bigcup_{j=1}^{s_i} Q_{i,j}\right)\in (1\pm 0.1)\sum_{j=1}^{s_i}|Q_{i,j}|$, then $\tau\left(\bigcup_{j=1}^{s_i} Q_{i,j}\right)$ is good.
If the estimated size $\tau(Q_{i,j})$ in Algorithm~\ref{alg:coreset_construction} satisfies either $\tau(Q_{i,j})\in |Q_{i,j}|\pm 0.1\gamma T_i(o)$ or $\tau(Q_{i,j})\in (1\pm 0.1)|Q_{i,j}|$, then $\tau(Q_{i,j})$ is good.
\end{definition}

Consider a solution of $k$-clustering problem. 
Each cluster can be seen as the set of points which are assigned to the same center.
Thus, a solution of $k$-clustering can be represented by an assignment mapping.
\begin{definition}[Assignment]\label{def:assignment}
Given a point set $P\subseteq[\Delta]^d$ where each point $p\in P$ has a weight $w(p)\in\mathbb{R}_{\geq 0}$, and a set of centers $Z\subset [\Delta]^d$ with $|Z|= k$ for some $k\in\mathbb{Z}_{\geq 1}$, an assignment of $P$ to the centers $Z$ is a mapping $\pi:P\rightarrow Z$. 
The clustering cost of $\pi$ is denoted as $\cost(\pi)=\sum_{p\in P}w(p)\cdot \dist^r(p,\pi(p))$.
The size vector $s(\pi)\in\mathbb{R}^k$ of clusters is defined as $\forall i\in[k],s(\pi)_i = \sum_{p\in P:\pi(p)=z_i} w(p)$.
\end{definition}

If we do not specify the weight function $w(\cdot)$ explicitly in the context, we suppose each point $p\in P$ has $w(p)=1$.

Next, we show that some assignment mapping can be defined by a set of half-spaces.
\begin{definition}[Assignment half-spaces]\label{def:assignment_halfspace}
Given a set of centers $Z=\{z_1,z_2,\cdots,z_k\}\subset[\Delta]^d$ with $|Z|=k$ for some $k\in\mathbb{Z}_{\geq 1}$, a set of assignment half-spaces $\mathcal{H}$ corresponding to $Z$ has $k\choose 2$ half-spaces (Definition~\ref{def:halfspace}), i.e., 
\begin{align*}
\mathcal{H}=\left\{H_{(i,j)}\mid 1\leq i<j\leq k\right\},
\end{align*}
where $H_{(i,j)}$ is a half-space corresponding to $\left(z_i,z_j,t_{(i,j)}\right)$ for some integer $t_{(i,j)}\in \left[\Delta^d\right]$. 
For $j<i$, we denote $H_{(i,j)}$ as $[\Delta]^d\setminus H_{(j,i)}$.
For a point set $P\subseteq[\Delta]^d$, if $\forall p\in P$, there always exists a unique $i\in[k]$ such that $\forall j\in[k],j\not =i,$ it has $p\in H_{(i,j)}$, we say $\mathcal{H}$ is valid for $P$, and the assignment mapping $\pi:P\rightarrow Z$ corresponding to $\mathcal{H}$ is defined as:
\begin{align*}
\begin{array}{cc}
\forall p\in P, \pi(p) = z_i, &  \text{where }i\text{ satisfies $\forall j\not =i, p\in H_{(i,j)}$}.
\end{array}
\end{align*}
\end{definition}

Consider the assignment mapping of capacitated clustering problem for point set $Q$ and centers $Z$.
We can always find a set of half-spaces such that we can use these half-spaces to determine the assigned center for each point $p\in Q$ without looking at other points in $Q$.
In the following, we formalize the argument and extend it to the weighted case.
\begin{lemma}[Cost and assignment half-spaces]\label{lem:opt_halfspace}
Consider a point set $Q$ with at most $m$ different weights, i.e., $Q=Q_1\dun Q_2\dun\cdots \dun Q_m\subseteq[\Delta]^d$, where $\forall i\in[m],p\in Q_i,$ $p$ has a weight $w(p)=w_i$.
For any set of centers $Z=\{z_1,z_2,\cdots,z_k\}\subset [\Delta]^d$ with $|Z|=k$ for some $k\in\mathbb{Z}_{\geq 1}$, and any $t\geq 0$ with $\cost_t(Q,Z,w)\not=\infty$, there always exist $m$ sets of assignment half-spaces $\mathcal{H}^{(1)},\mathcal{H}^{(2)},\cdots,\mathcal{H}^{(m)}$ corresponding to $Z$ such that $\forall i\in[m]$, $\mathcal{H}^{(i)}$ is valid for $Q_i$, and  
\begin{align*}
\begin{array}{ccc}
\cost_t(Q,Z,w) = \sum_{i=1}^m \cost(\pi_i)    &  \text{and} & \|\sum_{i=1}^m s(\pi_i)\|_{\infty}\leq t,
\end{array}
\end{align*}
where $\pi_i:Q_i\rightarrow Z$ is an assignment mapping corresponding to $\mathcal{H}^{(i)}$.
\end{lemma}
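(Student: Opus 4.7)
The plan is to start from an optimal capacitated assignment $\pi^{*}:Q\to Z$ with $\cost(\pi^{*})=\cost_{t}(Q,Z,w)$ and $\|s(\pi^{*})\|_{\infty}\le t$ (which exists since $Q$ is finite and $\cost_{t}(Q,Z,w)<\infty$), restrict it to each weight class by setting $\pi_{i}=\pi^{*}|_{Q_{i}}$, and then construct, for every $i\in[m]$ and every ordered pair $1\le a<b\le k$, a half-space $H^{(i)}_{(a,b)}$ as in Definition~\ref{def:halfspace} that exactly separates $R^{(i)}_{a}:=\{p\in Q_{i}:\pi^{*}(p)=z_{a}\}$ from $R^{(i)}_{b}:=\{p\in Q_{i}:\pi^{*}(p)=z_{b}\}$. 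The cost identity $\cost_{t}(Q,Z,w)=\sum_{i=1}^{m}\cost(\pi_{i})$ and the capacity bound $\|\sum_{i=1}^{m}s(\pi_{i})\|_{\infty}\le t$ are then immediate from $\cost(\pi^{*})=\sum_{i}\cost(\pi^{*}|_{Q_{i}})$ and $s(\pi^{*})=\sum_{i}s(\pi^{*}|_{Q_{i}})$.

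The first technical step is a within-weight-class exchange argument. Fix $i$ and $a<b$, and let $f(x):=\dist^{r}(x,z_{a})-\dist^{r}(x,z_{b})$, which is exactly the key function whose sorted order (with alphabetical tie-breaking) defines the half-spaces in Definition~\ref{def:halfspace}. Because every point of $Q_{i}$ carries the same weight $w_{i}$, swapping the assignments of any $p\in R^{(i)}_{a}$ and $q\in R^{(i)}_{b}$ preserves every cluster's total weight and hence capacity feasibility; the resulting cost change equals $w_{i}(f(q)-f(p))$, so optimality of $\pi^{*}$ forces $f(p)\le f(q)$. Consequently every $R^{(i)}_{a}$-point weakly precedes every $R^{(i)}_{b}$-point in the Definition~\ref{def:halfspace} ordering.

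To promote ``weakly'' to ``strictly''---the property required so that a single integer threshold $t^{(i)}_{(a,b)}$ realizes the separation---I would pick $\pi^{*}$ not arbitrarily but as a maximizer of the tie-breaking potential $\Xi(\pi)=\sum_{p\in Q}\rho(p)\cdot\mathrm{idx}(\pi(p))$ among all optima of $\cost_{t}(Q,Z,w)$, where $\rho(p)$ is the alphabetical rank of $p$ in $[\Delta]^{d}$ and $\mathrm{idx}(z_{j})=j$. If some $f$-tied pair $p\in R^{(i)}_{a}$, $q\in R^{(i)}_{b}$ with $\rho(p)>\rho(q)$ survived in such a maximizer, then the swap is cost-neutral, still capacity-feasible, and optimal, yet changes $\Xi$ by $(b-a)(\rho(p)-\rho(q))>0$, contradicting $\Xi$-maximality. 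Hence in the chosen $\pi^{*}$, inside every $f$-tied block the $R^{(i)}_{a}$-members have strictly smaller alphabetical rank than the $R^{(i)}_{b}$-members, so in the sorted order of Definition~\ref{def:halfspace} all of $R^{(i)}_{a}$ strictly precedes all of $R^{(i)}_{b}$.

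The remainder is bookkeeping. Let $t^{(i)}_{(a,b)}$ be the largest sorted-order position attained by any $R^{(i)}_{a}$-point (or $0$ if $R^{(i)}_{a}=\emptyset$), and let $H^{(i)}_{(a,b)}$ be the half-space of Definition~\ref{def:halfspace} with parameters $(z_{a},z_{b},t^{(i)}_{(a,b)})$; by construction $R^{(i)}_{a}\subseteq H^{(i)}_{(a,b)}$ and $R^{(i)}_{b}\subseteq[\Delta]^{d}\setminus H^{(i)}_{(a,b)}=H^{(i)}_{(b,a)}$. For $p\in Q_{i}$ with $\pi^{*}(p)=z_{c}$ one reads off $p\in H^{(i)}_{(c,j)}$ for every $j\ne c$ (directly for $j>c$, via the complement description for $j<c$), so $c$ witnesses validity of $\mathcal{H}^{(i)}:=\{H^{(i)}_{(a,b)}\}_{a<b}$ on $Q_{i}$; uniqueness of $c$ follows because $H^{(i)}_{(c,c')}$ and $H^{(i)}_{(c',c)}$ are complementary, so two distinct indices cannot simultaneously satisfy the validity condition. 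Hence the assignment induced by $\mathcal{H}^{(i)}$ equals $\pi_{i}=\pi^{*}|_{Q_{i}}$, and the claimed cost and capacity identities follow. The main obstacle is the equality case in the exchange inequality $f(p)\le f(q)$: without the $\Xi$-maximization trick, tied $f$-values together with the alphabetical tie-breaker in Definition~\ref{def:halfspace} can interleave $R^{(i)}_{a}$ and $R^{(i)}_{b}$ within a tied block, and since the half-space threshold must be an integer no single cutoff would separate them; the global $\Xi$-maximizer kills every such interleaving simultaneously for all pairs $(a,b)$.
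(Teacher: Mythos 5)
Your proof is correct and follows essentially the same route as the paper: restrict an optimal capacitated assignment to each weight class, use the equal-weight swap argument to show the $f$-order (where $f(x)=\dist^r(x,z_a)-\dist^r(x,z_b)$) weakly respects cluster membership, and break ties so that a single integer threshold separates every pair of clusters within a class. The only real difference is in how the tie-break is packaged. The paper starts from an arbitrary optimum, iteratively swaps $f$-tied crossing pairs (Claim~\ref{cla:able_to_switch}), and argues termination via the potential ``sum of alphabetic ranks of points assigned to $z_i$'' --- a potential that decreases under a swap for pair $(i,j)$ but is not obviously monotone under swaps for pairs $(i',j')$ with $i'\neq i$, so the paper's termination argument as written is a bit informal about why processing one pair cannot undo progress made for another. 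You instead choose $\pi^*$ upfront as a maximizer of the single global potential $\Xi(\pi)=\sum_p\rho(p)\cdot\mathrm{idx}(\pi(p))$ over all optima; any surviving tied crossing would give a cost-neutral, capacity-feasible swap that strictly increases $\Xi$, a contradiction that simultaneously rules out bad interleavings for every pair $(a,b)$ and every weight class $i$. That is a tidier and more airtight way to reach the same conclusion. One cosmetic discrepancy: Definition~\ref{def:halfspace} takes $t\in[\Delta^d]$, i.e.\ $t\geq 1$, so your convention $t^{(i)}_{(a,b)}=0$ when $R^{(i)}_a=\emptyset$ technically falls outside the allowed range; the paper's proof has the same implicit edge case, and it is immaterial to the substance of the argument.
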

\begin{proof}
Let $\pi^*:Q\rightarrow Z$ be an optimal assignment mapping, i.e., $\cost_t(Q,Z,w)=\cost(\pi^*)$.
For $l\in[m]$, let us construct $\mathcal{H}^{(l)}=\{H^{(l)}_{(i,j)}\mid i<j\in[k]\}$ as the following.
Consider $Q_l$, $z_i,$ and $z_j$ $(i<j)$.
We can sort $Q_l=\{p_1,p_2,\cdots,p_{|Q_{l}|}\}$ such that $\forall a\in [|Q_l|-1]$, either $\dist^r(p_a,z_i)-\dist^r(p_{a},z_j)<\dist^r(p_{a+1},z_i)-\dist^r(p_{a+1},z_j)$ or $\dist^r(p_a,z_i)-\dist^r(p_{a},z_j)=\dist^r(p_{a+1},z_i)-\dist^r(p_{a+1},z_j)$ and $p_a$ is smaller than $p_{a+1}$ in the alphabetic order.
Consider the largest $a$ such that $\pi^*(p_a)=z_i$ and the smallest $a'$ such that $\pi^*(p_{a'})=z_j$.
\begin{claim}\label{cla:able_to_switch}
If $a>a'$, then 
\begin{align*}
w(p_a)\cdot \dist^r(p_a,z_j) + w(p_{a'})\cdot \dist^r(p_{a'},z_i) = w(p_a)\cdot \dist^r(p_a,z_i) + w(p_{a'})\cdot \dist^r(p_{a'},z_j),
\end{align*}
and the alphabetic order of $p_{a'}$ is smaller than $p_{a}$.
\end{claim}
\begin{proof}
Since both $p_a$ and $p_{a'}$ are from $Q_l$, we know that $w(p_a)=w(p_{a'})$.
Since $a>a'$, we have:
\begin{align*}
\dist^r(p_{a'},z_i)-\dist^r(p_{a'},z_j)\leq \dist^r(p_a,z_i)-\dist^r(p_a,z_j).
\end{align*}
If 
\begin{align*}
\dist^r(p_{a'},z_i)-\dist^r(p_{a'},z_j)< \dist^r(p_a,z_i)-\dist^r(p_a,z_j),
\end{align*}
then 
\begin{align*}
w(p_a)\cdot \dist^r(p_a,z_j) + w(p_{a'})\cdot \dist^r(p_{a'},z_i) < w(p_a)\cdot \dist^r(p_a,z_i) + w(p_{a'})\cdot \dist^r(p_{a'},z_j)
\end{align*}
which implies that if we switch the assignment of $p_a$ and $p_{a'}$, we can get a better solution, and thus it contradicts to that $\pi^*$ is the optimal assignment.
\end{proof}
If $a< a'$, we can find a half-space $H^{(l)}_{(i,j)}$ such that $\forall p\in Q_{l}$ with $\pi^*(p)=z_i$ satisfies $p\in H^{(l)}_{(i,j)}$ and $\forall p\in Q_{l}$ with $\pi^*(p)=z_j$ satisfies $p\in H^{(l)}_{(j,i)}$.
Otherwise, by Claim~\ref{cla:able_to_switch} we can switch the assignment of $p_a$ and $p_{a'}$ which neither increases the cost nor changes the number of points assigned to each center, and we can try to construct $\mathcal{H}^{(l)}$ %the half-space $H^{(l)}_{(i,j)}$ 
to the switched assignment mapping.
Notice that the switching decreases the summation of the alphabetic ranks of the points assigned to $z_i$.
Therefore, the switching operation will terminate.
\end{proof}

As discussed in the above lemma, the assigned center of a point may be determined by a set of half-spaces. 
We can define the set of points (may not be in $Q$) which should be assigned to the same center according to the half-spaces as a region.
Most regions should be the intersection of several half-spaces. 
However in some situations, the assignment half-spaces may not be valid for the underlying input points, and thus there are some points which can not be assigned to any center according to the half-spaces.
In this case, we define an additional region for these points.
\begin{definition}[Regions induced by assignment half-spaces]\label{def:regions}
Given a set of assignment half-spaces $\mathcal{H}=\left\{H_{(i,j)}\mid i<j\in[k]\right\}$, if $R_0=\{p\in[\Delta]^d\mid \forall i\in[k],\exists j\not=i,p\not\in H_{(i,j)}\}$ and $\forall i\in[k], R_i=\{p\in[\Delta]^d\mid \forall j\in[k], p\in H_{(i,j)}\}$, then $(R_0,R_1,\cdots,R_k)$ are regions induced by $\mathcal{H}$.
\end{definition}

Consider a set of points $P$ and a set of assignment half-spaces $\mathcal{H}$. 
Let $(R_0,R_1,\cdots,R_k)$ be regions induced by $\mathcal{H}$.
$\mathcal{H}$ may not be valid for $P$ or there may be some region $R_i$ for $i\in[k]$ such that $R_i\cap P\not=\emptyset$ but $|P\cap R_i|$ is small.
In this case, we want to find an assignment mapping which is almost determined by $\mathcal{H}$ and each non-empty cluster is large enough.

To achieve this, we can check each point $p\in P$. 
If $p\in R_0$ or $p\in R_i$ for some $i$ such that $|P\cap R_i|$ is small, we assign $p$ to $z_{i^*}$, where $i^*$ satisfies that there are lots of points in region $R_{i^*}$ and $i^*\not=0$.
Notice that though there may be no assignment mapping for $P$ corresponding to $\mathcal{H}$ since $\mathcal{H}$ may be invalid for $P$, we can always define a transferred assignment mapping for $P$ according to $\mathcal{H}$.
\begin{definition}[Assignment transfer]\label{def:assign_transfer}
Given a threshold $T\in\mathbb{R}_{\geq 0}$, let $P\subset [\Delta]^d$ be a point set where each point $p\in P$ has a weight $w(p)\in\mathbb{R}_{\geq 0}$ such that $\sum_{p\in P} w(p)\geq 0.9T$.
Consider a set of centers $Z=\{z_1,z_2,\cdots,z_k\}\subset [\Delta]^d$ with $|Z|=k$ for some $k\in\mathbb{Z}_{\geq 1}$, a set of assignment half-spaces $\mathcal{H}=\{H_{(i,j)}\mid i<j\in[k]\}$ corresponding to $Z$, and $B=(b_0,b_1,\cdots,b_k)\in \mathbb{R}^{k+1}_{\geq 0}$ such that $\forall i\in\{0,1,\cdots,k\}, b_i$ satisfies either $b_i\in (1\pm \xi)\cdot \sum_{p\in R_i\cap P} w(p)$ or $b_i\in \sum_{p\in R_i\cap P} w(p)\pm \xi T$, where $\xi\in(0,0.5)$ and $(R_0,R_1,\cdots,R_k)$ are regions induced by $\mathcal{H}$.
Let $i^*=\arg\max_{i\in [k]} b_i$.
A transferred assignment mapping $\pi:P\rightarrow Z$ corresponding to $(\mathcal{H},B,\xi,T)$ is defined as:
\begin{align*}
\forall p\in P, \pi(p)=\left\{
\begin{array}{ll}
 z_i,     &   i\in[k]: b_i\geq 2\xi T,p\in R_i,\\
 z_{i^*},     & \text{otherwise.}
\end{array}
\right.
\end{align*}
%where $B=(b_0,b_1,\cdots,b_k)$.
%Furthermore, if $\sum_{p\in R_0\cap P} w(p)\leq 4\xi T$, we say $\mathcal{H}$ is $(\xi,T)$-almost valid for $P$.
\end{definition}
Similar to Definition~\ref{def:assignment}, if we do not specify the weights $w(\cdot)$, each point has weight $1$.

The following lemma shows that if assignment half-spaces $\mathcal{H}$ is valid for the point set $P$, then the cost of the transferred assignment mapping is close to the cost of the assignment mapping corresponding to $\mathcal{H}$, and furthermore, the number of points of which centers are changed is small.
\begin{lemma}[Transferred assignment does not change the cost too much]\label{lem:transferred_also_good}
Given a threshold $T\in \mathbb{R}_{\geq 0}$, let $P\subseteq[\Delta]^d$ be a point set where each point $p\in P$ has a weight $w(p)\in\mathbb{R}_{\geq 0}$ such that $\sum_{p\in P} w(p)\geq 0.9T$, and $\forall p,q\in P,\dist(p,q)\leq \sqrt{d}g$ for some $g\in\mathbb{R}_{\geq 0}$. 
Consider a set of centers $Z=\{z_1,z_2,\cdots,z_k\}\subset[\Delta]^d$ with $|Z|=k$ for some $k\in\mathbb{Z}_{\geq 1}$, a set of assignment half-spaces $\mathcal{H}$ corresponding to $Z$, and $B=(b_0,b_1,\cdots,B_k)\in \mathbb{R}_{\geq 0}^{k+1}$ which satisfies the condition mentioned in Definition~\ref{def:assign_transfer} for some $\xi\in(0,1/(100k))$. 
If $\mathcal{H}$ is valid for $P$, then 
\begin{align*}
\begin{array}{lll}
\cost(\pi')\leq (1+2^{r+4}k^2\cdot \xi)\cost(\pi) +\xi\cdot 2^{r+1}kT(\sqrt{d}g)^r  &\text{and}& \|s(\pi')-s(\pi)\|_1\leq 16k\xi\sum_{p\in P} w(p),
\end{array}
\end{align*}
where $\pi:P\rightarrow Z$ is an assignment mapping corresponding to $\mathcal{H}$, and $\pi':P\rightarrow Z$ is a transferred assignment mapping corresponding to $(\mathcal{H},B,\xi,T)$. 
\end{lemma}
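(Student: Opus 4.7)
The plan is to first identify which points of $P$ are reassigned when going from $\pi$ to $\pi'$, bound their total weight using the approximation guarantee on $B$, and then control the extra cost incurred by reassignment via the approximate triangle inequality (Fact~\ref{fac:approx_tri}) together with the diameter bound $\sqrt{d}g$. Since $\mathcal{H}$ is valid for $P$ we have $R_0\cap P=\emptyset$ and every $p\in P$ lies in a unique $R_i$ with $i\in[k]$, so $\pi(p)=z_i$. Writing $x_i:=\sum_{p\in R_i\cap P}w(p)=s(\pi)_i$ and $U:=\{i\in[k]:b_i<2\xi T\}$, Definition~\ref{def:assign_transfer} shows that a point in $R_i\cap P$ is reassigned to $z_{i^*}$ precisely when $i\in U$. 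Both cases of the approximation condition on $b_i$ give $x_i\le\max(b_i/(1-\xi),\,b_i+\xi T)\le 4\xi T$ whenever $i\in U$, so the total reassigned weight is $W_{\mathrm{re}}:=\sum_{i\in U}x_i\le 4k\xi T$.

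Next I would verify that $i^*\in[k]\setminus U$ and that $P\cap R_{i^*}$ carries enough weight to serve as a pivot. Because $R_0\cap P=\emptyset$, $\sum_{i=1}^k x_i\ge 0.9T$, so pigeonhole yields some $i_0\in[k]$ with $x_{i_0}\ge 0.9T/k$; the approximation guarantee on $b_{i_0}$, combined with $\xi<1/(100k)$, gives $b_{i_0}\ge 0.8T/k>2\xi T$. Hence $b_{i^*}\ge b_{i_0}>2\xi T$ and $i^*\notin U$; moreover $b_0\le\xi T$ (applying the approximation to $x_0=0$), so $i^*\neq 0$, and the same case analysis delivers $x_{i^*}\ge 0.7T/k$. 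For any reassigned $p$ and any $q\in P\cap R_{i^*}$, Fact~\ref{fac:approx_tri} together with the diameter bound yields
\[
\dist^r(p,z_{i^*})\le 2^{r-1}\bigl((\sqrt{d}g)^r+\dist^r(q,z_{i^*})\bigr).
\]
Averaging over $q\in P\cap R_{i^*}$ with weights $w(q)$ gives
\[
\dist^r(p,z_{i^*})\le 2^{r-1}(\sqrt{d}g)^r+\frac{2^{r-1}}{x_{i^*}}\sum_{q\in P\cap R_{i^*}}w(q)\,\dist^r(q,z_{i^*})\le 2^{r-1}(\sqrt{d}g)^r+\frac{2^{r-1}k}{0.7\,T}\cost(\pi),
\]
since the cluster centered at $z_{i^*}$ under $\pi$ contributes at most $\cost(\pi)$ to the total.

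Multiplying by $w(p)$, summing over the reassigned points (whose total weight is at most $4k\xi T$), and using $\cost(\pi')\le\cost(\pi)+\sum_{p\text{ reassigned}}w(p)\dist^r(p,z_{i^*})$ (the cost saved by dropping the original terms is nonnegative) gives the claimed cost bound, with plenty of slack between the derived constant and the stated $2^{r+4}k^2\xi$. For the size bound, the reassignment only shifts weight from the clusters $\{z_i:i\in U\}$ to cluster $z_{i^*}$, so $\|s(\pi')-s(\pi)\|_1=2W_{\mathrm{re}}\le 8k\xi T\le (8k\xi/0.9)\sum_{p\in P}w(p)\le 16k\xi\sum_{p\in P}w(p)$, where the middle inequality uses the hypothesis $\sum_p w(p)\ge 0.9T$. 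The one conceptually delicate step is the weighted pivoting through $P\cap R_{i^*}$, which requires a lower bound $x_{i^*}=\Omega(T/k)$; this is exactly what the hypothesis $\xi<1/(100k)$ together with pigeonhole buys us. The rest is careful bookkeeping over the two (multiplicative vs.\ additive) cases of the approximation guarantee on $B$.
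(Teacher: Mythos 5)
Your proof is correct and follows essentially the same route as the paper's: bound the total reassigned weight by $4k\xi T$ using the two cases of the guarantee on $B$, locate a heavy region $R_{i^*}$ with weight $\Omega(T/k)$ by pigeonhole, pivot through an averaging point of $P\cap R_{i^*}$ together with Fact~\ref{fac:approx_tri} and the diameter bound, and conclude the $\ell_1$ bound on the size vector directly from the reassigned weight. The constants you derive are slightly tighter than the stated $2^{r+4}k^2\xi$ (and than the paper's $T/(8k)$ lower bound on $x_{i^*}$), so the claimed slack is real, but the decomposition and key ideas are identical.
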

\begin{proof}
Let $i^*=\arg\max_{i\in[k]} b_i$.
Let $(R_0,R_1,\cdots,R_k)$ be the regions induced by $\mathcal{H}$.
Since $\mathcal{H}$ is valid for $P$, $R_0\cap P = \emptyset$.
%\begin{align*}
%\sum_{p\in R_0\cap P} w(p)\leq 4\xi T.
%\end{align*}
By Pigeonhole Principle, we know that 
\begin{align*}
\max_{i\in[k]}\sum_{p\in R_i \cap P} w(p)\geq \frac{0.9T}{k} \geq \frac{T}{2k}.
\end{align*}
Thus,
\begin{align*}
b_{i^*}\geq \min\left(\frac{T}{2k}-\xi T,\frac{1}{2}\cdot \frac{T}{2k}\cdot T\right)\geq  \frac{T}{4k},
\end{align*}
where the second inequality follows from $\xi\leq 1/(100k)$.
Therefore,
\begin{align}\label{eq:lb_of_region_of_istar}
\sum_{p\in P:\pi(p)=z_{i^*}} w(p) = \sum_{p\in R_{i^*}\cap P}w(p)\geq \min(b_{i^*}-\xi T,b_{i^*}/2)\geq \frac{T}{8k}.
\end{align}
We have
\begin{align*}
\cost(\pi')-\cost(\pi) & \leq \sum_{p\in P:\pi'(p)\not=\pi(p)} w(p) \cdot \dist^r(p,\pi'(p))\\
& = \sum_{p\in P:\pi'(p)\not=\pi(p)} w(p)\cdot \dist^r(p,z_{i^*})\\
& \leq \sum_{p\in P:\pi'(p)\not=\pi(p)}w(p)\cdot \left(2^{r-1} (\sqrt{d}g)^r +  2^{r-1} \frac{\sum_{q\in P:\pi(q)=z_{i^*}}w(q)\cdot\dist^r(q,z_{i^*})}{\sum_{q\in P:\pi(q)=z_{i^*}}w(q)}\right)\\
& \leq k\cdot 4\xi T\cdot  \left(2^{r-1} (\sqrt{d} g)^r +  2^{r-1} \frac{\sum_{q\in P:\pi(q)=z_{i^*}}w(q)\cdot\dist^r(q,z_{i^*})}{\sum_{q\in P:\pi(q)=z_{i^*}}w(q)}\right)\\
& \leq \xi \cdot \left(2^{r+1} kT(\sqrt{d}g)^r + 2^{r+4} k^2\cdot \sum_{q\in P:\pi(q)=z_{i^*}}w(q)\cdot\dist^2(q,z_{i^*})\right)\\
& \leq \xi \cdot \left(2^{r+1} kT(\sqrt{d}g)^r + 2^{r+4} k^2\cdot \cost(\pi)\right),
\end{align*}
where the third step follows from that there is a point $q'\in P$ such that $\pi(q')=z_{i^*}$ and  
\begin{align*}
\dist^r(q',z_{i^*})\leq \frac{\sum_{q\in P:\pi(q)=z_{i^*}}w(q)\cdot\dist^r(q,z_{i^*})}{\sum_{q\in P:\pi(q)=z_{i^*}}w(q)}
\end{align*}
by averaging argument, and $\dist^r(p,z_{i^*})\leq 2^{r-1}\dist^r(p,q')+2^{r-1}\dist^r(q',z_{i^*})$ (Fact~\ref{fac:approx_tri}), the forth step follows from $\sum_{\pi'(p)\not=\pi(p)} w(p)\leq \sum_{i\in\{0,1,2,\cdots,k\}\setminus \{i^*\}:b_i<2\xi T} \max(2b_i,b_i+\xi T)\leq k\cdot 4\xi T$, the fifth step follows from Equation~\eqref{eq:lb_of_region_of_istar}.

Now, let us consider $s(\pi')$.
We have
\begin{align*}
\|s(\pi') - s(\pi)\|_1 & \leq 2 \sum_{p\in P:\pi'(p)\not = \pi(p)} w(p)\\
& \leq  2\sum_{i\in\{0,1,2,\cdots,k\}\setminus \{i^*\}:b_i<2\xi T} \max(2b_i,b_i+\xi T)\\
& \leq  \xi\cdot 8kT\\
&\leq 16k \xi \sum_{p\in P} w(p),
\end{align*}
where the last step follows from $\sum_{p\in P} w(p)\geq 0.9T$.% implied by Equation~\eqref{eq:lb_of_region_of_istar}.
\end{proof}

The following lemma is a concentration bound for summation of random variables with limited independence.
We need following lemma since we want to prove that our algorithm only needs limited independence.
If the fully independent random samples are allowed in the algorithm, Bernstein inequality will be enough for analysis.
\begin{lemma}[\cite{br94}]\label{lem:limited_independent_bound}
Consider an even integer $\lambda \geq 4$ and a random variable $X=\sum_{i=1}^n X_i$, where $X_i$ are $\lambda$-wise independent random variables taking values in $[0,M]$.
For any $a>0$, 
\begin{align*}
\Pr\left[|X-\mu| > a\right] \leq 8\left(\frac{\mu\lambda M+\lambda^2M^2}{a^2}\right)^{\lambda/2},
\end{align*}
where $\mu=\E[X]$.
\end{lemma}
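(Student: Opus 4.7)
The plan is to reduce the tail bound to a $\lambda$-th moment bound via Markov, and then to evaluate that moment using only $\lambda$-wise independence plus the hypothesis $X_i \in [0,M]$. Set $Y_i = X_i - \E X_i$. Then the $Y_i$ remain $\lambda$-wise independent, they are centered, they satisfy $|Y_i| \le M$, and $\E[Y_i^2] \le \E[X_i^2] \le M\cdot \E[X_i]$. Writing $\mu_i = \E X_i$, this gives $\sum_i \E[Y_i^2] \le M\mu$. Because $\lambda$ is an even integer, $(X-\mu)^\lambda = \bigl(\sum_i Y_i\bigr)^\lambda \ge 0$, and Markov's inequality yields
\[
\Pr\bigl[|X-\mu|>a\bigr] \;\le\; \frac{\E[(X-\mu)^\lambda]}{a^\lambda}.
\]

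Next I would expand the moment by the multinomial theorem:
\[
\E[(X-\mu)^\lambda] \;=\; \sum_{\substack{\alpha\in\mathbb{Z}_{\ge 0}^n \\ |\alpha|=\lambda}} \binom{\lambda}{\alpha_1,\dots,\alpha_n} \, \E\!\left[\prod_{i} Y_i^{\alpha_i}\right].
\]
Since any multi-index $\alpha$ has at most $\lambda$ non-zero entries, the expectation factors across those entries by $\lambda$-wise independence. Any factor $\E[Y_i^{1}]=0$ kills the whole term, so only $\alpha$ whose non-zero entries are all $\ge 2$ survive. For those, I would use the crude moment bound $|\E[Y_i^{\alpha_i}]| \le M^{\alpha_i-2}\E[Y_i^2] \le M^{\alpha_i-1}\mu_i$ whenever $\alpha_i\ge 2$, turning each surviving term into a product of $\mu_i$'s and $M$'s.

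Grouping surviving terms by the number $s$ of non-zero entries (so $1\le s\le \lambda/2$) and summing over which coordinates get chosen and which exponents they receive, one obtains a bound of the shape
\[
\E[(X-\mu)^\lambda] \;\le\; \sum_{s=1}^{\lambda/2} N(s,\lambda)\cdot (\mu M)^{s}\,M^{\lambda-2s},
\]
where $N(s,\lambda)$ is a combinatorial factor counting weighted compositions of $\lambda$ into $s$ parts, each $\ge 2$. The core obstacle is bounding $N(s,\lambda)$ well enough that the $s$-sum collapses to something of the form $\bigl(\mu\lambda M + \lambda^2 M^2\bigr)^{\lambda/2}$ up to an absolute constant. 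The clean way to close the argument is to show $N(s,\lambda) \le C^{\lambda}\cdot \lambda^{s}\cdot \lambda^{\lambda - 2s}$ for a universal constant $C$: the first $\lambda^{s}$ pays for distributing $s$ "variance" tokens, while $\lambda^{\lambda-2s}$ pays for the extra powers beyond $2$. With such an estimate, each summand becomes at most $(\lambda\mu M)^{s}(\lambda^2 M^2)^{\lambda/2-s}$ times $C^{\lambda}$, and the binomial theorem in $s$ then folds everything into $\bigl(\lambda\mu M + \lambda^2 M^2\bigr)^{\lambda/2}$.

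Dividing by $a^\lambda$ gives the desired tail bound, with the factor $8$ out front absorbing the universal constant $C^{\lambda}$ (for the range of $\lambda$ where the inequality is non-trivial; for very small $a$ the bound is vacuous, and for small $\lambda$ one checks directly via Chebyshev). This is the classical Bellare–Rompel / Schmidt–Siegel–Srinivasan moment-method argument, and in practice I would simply invoke the statement from \cite{br94} rather than redo the combinatorial accounting in full. The only genuinely subtle step is the combinatorial bound on $N(s,\lambda)$, which is where a careful proof would spend its effort.
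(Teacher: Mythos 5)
The paper itself does not prove this lemma; it is cited verbatim from \cite{br94}, so there is no internal proof to compare against. Your sketch is a faithful outline of the standard moment-method argument that underlies \cite{br94} (and Schmidt--Siegel--Srinivasan before it): center the $X_i$, apply Markov to the $\lambda$-th moment, expand $(X-\mu)^\lambda$ multinomially, use $\lambda$-wise independence to factor each surviving term (every multi-index with $|\alpha|=\lambda$ has at most $\lambda$ nonzero coordinates, so factoring is legitimate), discard terms containing a first moment $\E[Y_i]=0$, and bound the remaining factors via $|Y_i|\le M$ and $\E[Y_i^2]\le M\mu_i$. All of that is correct. The part you flag as the crux --- the combinatorial count $N(s,\lambda)\lesssim C^\lambda \lambda^{\lambda-s}$ (uniformly enough in $s$ that the sum over $s$ collapses to $(\lambda\mu M+\lambda^2 M^2)^{\lambda/2}$ with only a constant, not $C^\lambda$, out front) --- is indeed where the real work is, and your sketch does not close it; your own phrasing (``in practice I would simply invoke the statement from \cite{br94}'') concedes as much. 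One small imprecision worth noting: to get the constant $8$ rather than $C^\lambda$ you must be careful that the number-of-parts bookkeeping already carries the $\binom{\lambda/2}{s}$ factors needed for the binomial theorem, rather than adding a separate $\lambda^{O(\lambda)}$ slop. Since the paper takes the lemma on faith from \cite{br94} and you ultimately do the same, your proposal is consistent with the paper's treatment and adds a correct high-level account of where the cited proof gets its strength.
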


Consider a set of points $P$ and a set of assignment half-spaces $\mathcal{H}$.
Let $(R_0,R_1,\cdots,R_k)$ be regions induced by $\mathcal{H}$.
We randomly sample a subset of points $P'$ from $P$. 
We can use the number of points sampled from each region $R_i$ to estimate the number of points in $R_i$.
We can also use the total number of sampled points to estimate the total number of points in $P$.
Based on the estimated number of points in each region, we can define transferred assignment mappings for both $P'$ and $P$.
We argue that the cost of the transferred assignment mapping for $P'$ is a good estimation of the transferred assignment mapping for $P$.
Furthermore, we can use the number of samples in $P'$ assigned to each center to estimate the number of points in $P$ assigned to each center.
\begin{lemma}[Estimating the cost of transferred assignment via sampling]\label{lem:core_for_each}
Given a threshold $T\in \mathbb{R}_{\geq 0},$ let $P\subseteq[\Delta]^d$ be a point set such that each point has weight $w(p)=1$ and $|P|\geq T$.
Furthermore, $\forall p,q\in P,$ $\dist(p,q)\leq\sqrt{d} g$ for some $g\in\mathbb{R}_{\geq 0}$.
Consider a set of centers $Z=\{z_1,z_2,\cdots,z_k\}\subset [\Delta]^d$ with $|Z|=k$ for some $k\in\mathbb{Z}_{\geq 1}$, and a set of assignment half-spaces $\mathcal{H}=\{H_{(i,j)}\mid i<j \in [k]\}$ corresponding to $Z$.
For some arbitrary $\xi,\delta\in(0,0.5)$,
let $P'$ be a random subset of $P$ such that each point $p\in P$ is chosen $\lambda$-wise independently with probability $\phi$, where $\lambda=100k \lceil\log(k/\delta)\rceil,\phi=\min\left(1,\frac{1000\cdot 2^{2r}\lambda}{\xi^3T}\right)$. 
Let each $p\in P'$ have weight $w'(p)=1/\phi$.
Let $(R_0,R_1,\cdots,R_k)$ be the regions (Definition~\ref{def:regions}) induced by $\mathcal{H}$.
Let $B=(b_0,b_1,\cdots,b_k)$ such that $\forall i\in\{0,1,\cdots,k\}$, $b_i=\sum_{p\in R_i\cap P'} w'(p)$.
With probability at least $1-\delta$, all of the following events happens:
\begin{enumerate}
%\item either $b_0\in (1\pm\xi)\cdot |\{p\in P\mid \forall i\in[k],\exists j\not =i,p\not\in H_{(i,j)}\}|$ or $b_0\in |\{p\in P\mid \forall i\in[k],\exists j\not =i,p\not\in H_{(i,j)}\}| \pm \xi T$, 
\item $\forall i\in\{0,1\cdots,k\}$, either $b_i\in (1\pm \xi)\cdot |R_i\cap P|$ or $b_i\in |R_i\cap P|\pm \xi T$, \label{it:bi_is_good}\label{it:b0_is_good}
\item $\sum_{p\in P'} w'(p)\geq 0.9 T$, \label{it:preserving_the_most_weights}
%\item if $\mathcal{H}$ is valid for $P'$, $\mathcal{H}$ is $(\xi,T,B)$-almost valid for $P$,
\item  %if $\mathcal{H}$ is $(\xi,T)$-almost valid for $P$, then 
$|\cost(\pi')-\cost(\pi)|\leq \xi\left(|P|(\sqrt{d}g)^r+\cost(\pi)\right)$, where both of $\pi:P\rightarrow Z$ and $\pi':P'\rightarrow Z$ are transferred assignment mappings corresponding to $(\mathcal{H},B,\xi,T)$, \label{it:sampling_estimating_cost}
\item $\|s(\pi)-s(\pi')\|_1\leq \xi |P|$. \label{it:sampling_does_not_change_to_much}
\end{enumerate}
\end{lemma}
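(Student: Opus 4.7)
The plan is to apply the limited-independence concentration bound (Lemma~\ref{lem:limited_independent_bound}) four times — once for region counts (items 1 and 2) and once for per-cluster cost sums (item 3) and per-cluster sizes (item 4) — exploiting the fact that $\pi$ and $\pi'$ use the \emph{same} transfer rule because they share the same $(\mathcal{H},B,\xi,T)$. Throughout, the sampling probability $\phi \asymp 2^{2r}\lambda/(\xi^3 T)$ is designed so that $\phi T \gg \lambda/\xi^2$, which is exactly the slack needed to beat both error terms in Lemma~\ref{lem:limited_independent_bound}.

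For items 1 and 2, I would write $b_i = \sum_{p\in R_i\cap P} X_p/\phi$ where $X_p$ are $\lambda$-wise independent indicators with $\E[X_p]=\phi$, and invoke Lemma~\ref{lem:limited_independent_bound} with $M=1/\phi$. If $|R_i\cap P|$ exceeds a threshold of order $\xi T/2^{O(r)}$, I take $a=\xi|R_i\cap P|$ and obtain relative error $\xi$; otherwise I take $a=\xi T$ and obtain additive error $\xi T$. In both cases the ratio $(\mu\lambda M+\lambda^2M^2)/a^2$ is a tiny constant (far below $1/4$), so each failure probability is at most $\delta/(10(k+1))$, and a union bound over the $k+1$ regions gives item 1. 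Item 2 is the same argument applied to the full sum $\sum_{p\in P}X_p/\phi$ with $a=0.1|P|$, using $|P|\geq T$.

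The key structural observation for items 3 and 4 is that $B$ is defined via $P'$ (so $b_i=\sum_{p\in R_i\cap P'}w'(p)$ trivially satisfies the admissibility condition of Definition~\ref{def:assign_transfer} for $(P',w')$), and the same $B$ is used to define $\pi$ on $P$; hence $\pi$ and $\pi'$ share the same index $i^*=\arg\max_{i\in[k]}b_i$ and the same active set $S=\{i:b_i\geq 2\xi T\}$, which gives $\pi'(p)=\pi(p)$ for all $p\in P'$. Consequently $\cost(\pi')=\sum_{p\in P}(X_p/\phi)\dist^r(p,\pi(p))$ and $s(\pi')_j=\sum_{p\in P_j}X_p/\phi$ where $P_j=\{p\in P:\pi(p)=z_j\}$ are unbiased estimators of $\cost(\pi)$ and $s(\pi)_j$. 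A crucial sub-lemma is the cluster-size lower bound: for $j\in S$ with $j\neq i^*$ we have $P_j=R_j\cap P$, so item 1 combined with $b_j\geq 2\xi T$ gives $|P_j|\geq \xi T$; for $j=i^*$, either $i^*\in S$ (same bound) or $S=\emptyset$ (everything goes to $z_{i^*}$, giving $|P_{i^*}|=|P|\geq T$); hence every non-empty $P_j$ has $|P_j|\geq\xi T$.

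For item 3, I fix a non-empty $P_j$, set $A_j=|P_j|(\sqrt{d}g)^r+\mu_j$ with $\mu_j=\sum_{p\in P_j}\dist^r(p,z_j)$, and note that by averaging there exists $q\in P_j$ with $\dist^r(q,z_j)\leq\mu_j/|P_j|$; then Fact~\ref{fac:approx_tri} together with $\dist(p,q)\leq\sqrt{d}g$ yields the per-summand bound $\dist^r(p,z_j)/\phi\leq M_j:=2^{r-1}A_j/(|P_j|\phi)$. Applying Lemma~\ref{lem:limited_independent_bound} with $a=\xi A_j$ and the lower bound $|P_j|\phi\gg\lambda/\xi^2$ makes the ratio $\ll 1/4$, giving failure probability $\leq\delta/(10k)$; union bound over $\leq k$ clusters and the telescoping $\sum_j A_j=|P|(\sqrt{d}g)^r+\cost(\pi)$ finish item~3. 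Item~4 is analogous: apply concentration to each $|P'\cap P_j|/\phi$ with $a_j=\xi|P_j|/(2k)+\xi T/(2k)$ so that $\sum_j a_j\leq\xi|P|/(2k)+\xi T/2\leq\xi|P|$. The main obstacle is really the cluster-size lower bound for the catch-all cluster $P_{i^*}$: without separating the two cases $i^*\in S$ vs.\ $S=\emptyset$, one cannot guarantee $|P_{i^*}|\geq\xi T$, and then the concentration in steps 4--5 would break precisely for the cluster that can be arbitrarily far from $z_{i^*}$.
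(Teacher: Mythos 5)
Your handling of items 1 and 2 matches the paper, and your averaging trick for bounding the per‑summand magnitude $M_j$ in item 3 is a clean equivalent of the paper's two‑case ($\dist(z_i,\hat R_i)\lessgtr\sqrt{d}g$) analysis. But there is a genuine gap in how you invoke the concentration bound for items 3 and 4.

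You write ``I fix a non‑empty $P_j$'' and then apply Lemma~\ref{lem:limited_independent_bound} to $\sum_{p\in P_j}(X_p/\phi)\dist^r(p,z_j)$, with a union bound only over the $k$ clusters. The problem is that $P_j=\{p\in P:\pi(p)=z_j\}$ is \emph{not} a fixed set: $\pi$ depends on $B=(b_0,\ldots,b_k)$, which depends on $P'$, i.e.\ on the very same indicators $X_p$ you are summing. The set $P_j$ is therefore a random set correlated with the samples, and Lemma~\ref{lem:limited_independent_bound} does not apply to a sum over a sample‑dependent index set. Observing that ``$\E[\cdot]$ is an unbiased estimator'' is similarly problematic, since $\E[X_p\,\dist^r(p,\pi(p))]\neq\phi\,\dist^r(p,\pi(p))$ once $\pi$ is allowed to depend on $(X_q)_q$. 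Conditioning on event~1 does not fix this either, because conditioning on a sample‑defined event distorts the distribution of the $X_p$ and voids the premise of the lemma.

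The paper closes exactly this gap by a decoupling step that your proposal omits: it first fixes an \emph{arbitrary} admissible vector $\hat B$, deterministically derives the candidate transferred assignment $\hat\pi$, applies the concentration bound for that fixed $\hat\pi$ with per‑cluster failure probability $\delta/(10k^22^k)$, and then observes that the total number of distinct transferred assignments (determined by the choice of $\hat i^*\in[k]$ and the subset of indices $i$ with $\hat b_i\geq 2\xi T$) is at most $k\cdot 2^k$. Taking a union bound over all $k\cdot 2^k$ candidates yields a high‑probability event that holds \emph{simultaneously} for every admissible $\hat B$; only then does one invoke event~1 to conclude that the realized $B$ is admissible, hence the realized $\pi$ is one of the covered candidates. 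This $k\cdot 2^k$ union bound is also what necessitates the extra factor of $k$ in $\lambda=100k\lceil\log(k/\delta)\rceil$; your budget of $\delta/(10k)$ per cluster is not enough to absorb it. Without this enumeration‑and‑cover step, the proof of items 3 and 4 does not go through.
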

\begin{proof}
We only need to consider the case when $\phi<1$.
%Let $R_0=\{p\in P\mid \forall i\in[k],\exists j\not =i,p\not\in H_{(i,j)}\}$ and $\forall i\in[k],$ $R_i=\{p\in P\mid \forall j\not =i,p\in H_{(i,j)}\}$.

Let us first consider event \ref{it:bi_is_good}. %and \ref{it:bi_is_good}.
We have $\forall i\in\{0,1,\cdots,k\}$,
\begin{align*}
\E[b_i]=\sum_{p\in R_i\cap P} \phi\cdot \frac{1}{\phi} = |R_i\cap P|.
\end{align*}
Consider $i\in\{0,1,\cdots,k\}$.
If $|R_i\cap P|\leq T$, by Lemma~\ref{lem:limited_independent_bound}, 
\begin{align*}
\Pr[|b_i-|R_i\cap P||>\xi T] \leq 8 \left(\frac{|R_i\cap P|\lambda\cdot \frac{1}{\phi} + \lambda^2\cdot \frac{1}{\phi^2}}{\xi^2 T^2}\right)^{\lambda/2}\leq 8\left(\frac{T\lambda\cdot\frac{1}{\phi}+\lambda^2\cdot\frac{1}{\phi^2}}{\xi^2 T^2}\right)^{\lambda/2} \leq \frac{\delta}{10(k+1)},
\end{align*}
where the last inequality follows from $\lambda\geq 40\log(k/\delta)$ and $\phi\geq 10\lambda/(\xi^2T)$.
%\begin{align*}
%\Pr\left[|b_i-|R_i||\geq \xi T\right] \leq  2e^{-\frac{\frac{1}{2}\cdot (\xi T)^2}{\frac{1}{\phi}\cdot |R_i|+\frac{1}{3}\cdot \frac{1}{\phi}\cdot \xi T}}\leq 2e^{-\frac{\frac{1}{2}(\xi T)^2}{\frac{1}{\phi}\cdot 2T}}\leq 2e^{-\frac{1}{4}\cdot \phi \cdot \xi^2 T} \leq \frac{\delta}{10(k+1)},
%\end{align*}
%where the last inequality follows from $\phi\geq 200 \log(k/\delta)/(\xi^2 T)$.
If $|R_i\cap P|> T$, by Lemma~\ref{lem:limited_independent_bound} again,
\begin{align*}
\Pr[|b_i-|R_i\cap P||> \xi |R_i\cap P|] \leq 8\left(\frac{|R_i\cap P|\lambda\cdot \frac{1}{\phi}+\lambda^2\cdot\frac{1}{\phi^2}}{\xi^2|R_i\cap P|^2}\right)^{\lambda/2}\leq \frac{\delta}{10(k+1)},
\end{align*}
where the last inequality also follows from $\lambda\geq 40\log(k/\delta)$ and $\phi\geq 10\lambda/(\xi^2T)$.
By taking union bound, event \ref{it:b0_is_good} happens with probability at least $1-\delta/10$.

Consider event \ref{it:preserving_the_most_weights}. We have
\begin{align*}
\E\left[\sum_{p\in P'} w'(p)\right] = \sum_{p\in P} \phi\cdot \frac{1}{\phi} = |P|.
\end{align*}
By Lemma~\ref{lem:useful_tool},
\begin{align*}
\Pr\left[\left||P|-\sum_{p\in P'} w'(p)\right|> 0.1 |P|\right] \leq 8\left(\frac{|P|\lambda\cdot\frac{1}{\phi}+\lambda^2\cdot\frac{1}{\phi^2}}{0.01|P|^2}\right)^{\lambda/2} \leq \frac{\delta}{10},
\end{align*}
where the last step follows from that $\lambda\geq 40\log(1/\delta)$, $\phi\geq 1000\lambda/T\geq 1000\lambda/|P|$.

%\begin{align*}
%\Pr\left[\left||P|-\sum_{p\in P'} w'(p)\right|\leq 0.1 |P|\right]\leq 2e^{-\frac{\frac{1}{2}\cdot 0.01 |P|^2}{\frac{1}{\phi}\cdot |P|+\frac{1}{3}\cdot \frac{1}{\phi}\cdot 0.1|P|}}\leq 2e^{-\frac{1}{4}\cdot 0.01\phi|P|}\leq \frac{\delta}{10},
%\end{align*}
%where the last step follows from $|P|\geq T$ and $\phi\geq 10000 \log(1/\delta)/T$.

Next, let us consider event \ref{it:sampling_estimating_cost}. 
Consider an arbitrary $\hat{B}=(\hat{b}_0,\hat{b}_1,\cdots,\hat{b}_k)$ which satisfies that $\forall i\in\{0,1,\cdots, k\},$ either $\hat{b}_i\in |R_i\cap P|\pm \xi T$ or $\hat{b}_i\in (1\pm \xi) |R_i\cap P|$.
Let $\hat{\pi}:P\rightarrow Z$ be a transferred assignment mapping corresponding to $(\mathcal{H},\hat{B},\xi,T)$.
Let $\hat{i}^* = \arg\max_{i\in[k]} \hat{b}_i$
\begin{claim}\label{cla:each_cluster_is_large}
$\forall i\in[k]$, either $s(\hat{\pi})_i=0$ or $s(\hat{\pi})_i\geq \xi T$.
\end{claim}
\begin{proof}
Consider $i\not = \hat{i}^*$.
If $\exists p\in P$ such that $\hat{\pi}(p)= z_i$, then by the definition of transferred assignment mapping, we have $\hat{b}_i\geq 2\xi T$ and thus $s(\hat{\pi})_i= |R_i\cap P|\geq \min(\hat{b}_i-\xi T,0.5 \hat{b}_i)\geq \xi T$. 
The above argument implies that $\forall i\not = \hat{i}^*$, either $s(\hat{\pi})_i=0$ or $s(\hat{\pi})\geq \xi T$.

For $\hat{i}^*$, if $\hat{b}_{\hat{i}^*}\geq 2\xi T$, then $s(\hat{\pi})_{\hat{i}^*}\geq |R_{\hat{i}^*}\cap P|\geq \min(\hat{b}_{\hat{i}^*}-\xi T,0.5 \hat{b}_{\hat{i}^*})\geq \xi T$.
If $\hat{b}_{\hat{i}^*}< 2\xi T$, then $\forall i\not =\hat{i}^*$, $\hat{b}_i<2\xi T$ which implies that $s(\hat{\pi})_{\hat{i}^*}=|P|\geq T\geq \xi T$.
\end{proof}

For $i\in[k],$ let $\hat{R}_i=\{p\in P\mid \hat{\pi}(p)=z_i\}$. 
We have
\begin{align*}
\E_{P'} \left[ \sum_{p\in P'\cap \hat{R}_i} w'(p)\cdot \dist^r(p,z_i) \right] = \sum_{p\in \hat{R}_i} \phi\cdot \frac{1}{\phi} \cdot \dist^r(p,z_i) = \sum_{p\in \hat{R}_i} \dist^r(p,z_i).
\end{align*}
We only need to handle the situation when $\hat{R}_i\not=\emptyset,$ i.e., $s(\hat{\pi})_i\geq \xi T$.
Consider two cases, the first case is that $\dist(z_i,\hat{R}_i)\leq \sqrt{d}g$.
In this case, by Lemma~\ref{lem:limited_independent_bound}, we have:
\begin{align*}
&\Pr\left[ \left| \sum_{p\in P'\cap \hat{R}_i} w'(p) \cdot \dist^r(p, z_i) - \sum_{p\in \hat{R}_i} \dist^r(p,z_i)\right| >  \xi |\hat{R}_i| (\sqrt{d}g)^r\right]\\
\leq & 8\left(\frac{\left(\sum_{p\in\hat{R}_i} \dist^r(p,z_i)\right)\cdot \lambda\cdot \left(\frac{1}{\phi}\cdot (\dist(z_i,\hat{R}_i)+\sqrt{d}g)^r\right) + \lambda^2 \cdot \left(\frac{1}{\phi}\cdot (\dist(z_i,\hat{R}_i)+\sqrt{d}g)^r\right)^2 }{\xi^2|\hat{R}_i|^2(\sqrt{d}g)^{2r}}\right)^{\lambda/2}\\
\leq & 8 \left( \frac{\frac{\lambda}{\phi}\cdot |\hat{R}_i|(\dist(z_i,\hat{R}_i)+\sqrt{d}g)^{2r}+\frac{\lambda^2}{\phi^2}\cdot (\dist(z_i,\hat{R}_i)+\sqrt{d}g)^{2r}}{\xi^2|\hat{R}_i|^2 (\sqrt{d}g)^{2r}} \right)^{\lambda/2}\\
\leq & 8 \left( \frac{\frac{\lambda}{\phi}\cdot |\hat{R}_i|(2\sqrt{d}g)^{2r}+\frac{\lambda^2}{\phi^2}\cdot (2\sqrt{d}g)^{2r}}{\xi^2|\hat{R}_i|^2 (\sqrt{d}g)^{2r}} \right)^{\lambda/2}\\
\leq & \frac{\delta}{ 10 k^2 2^k},
\end{align*}
where the first and the second step follows from triangle inequality, and the last step follows from that $\phi\geq 10\cdot 2^{2r}\lambda/(\xi^3T)\geq 10\cdot 2^{2r}\lambda/(\xi^2|\hat{R}_i|)$ and $\lambda \geq 100k\log(k/\delta)$.
\begin{comment}

\begin{align*}
&\Pr\left[ \left| \sum_{p\in P'\cap \hat{R}_i} w'(p) \cdot \dist^2(p, z_i) - \sum_{p\in \hat{R}_i} \dist^2(p,z_i)\right| \geq  \xi |\hat{R}_i| dg^2\right]\\
\leq & 2e^{-\frac{\frac{1}{2} \left(\xi |\hat{R}_i| dg^2\right)^2 }{|\hat{R}_i|\cdot \phi \cdot \left(\frac{1}{\phi}\cdot (\dist(z_i,\hat{R}_i)+\sqrt{d}g)^2\right)^2 + \frac{1}{3}\cdot \left(\frac{1}{\phi}\cdot (\dist(z_i,\hat{R}_i)+\sqrt{d} g)^2\right)\cdot \xi|\hat{R}_i|dg^2}}\\
\leq & 2e^{-\frac{\frac{1}{2} \xi^2 |\hat{R}_i|^2 d^2 g^4}{|\hat{R}_i|\cdot \frac{1}{\phi}\cdot 16 d^2 g^4 +|\hat{R}_i|\cdot \frac{1}{\phi} \cdot \frac{4}{3}\xi d^2g^4}}\\
\leq & 2 e^{-\frac{1}{64}\phi \cdot \xi^2 |\hat{R}_i| }\\
\leq & \frac{\delta}{ 10 k^2 2^k},
\end{align*}
where the second step follows from $\forall p\in\hat{R}_i, \dist(p,z_i)\leq \dist(z_i,\hat{R}_i)+\sqrt{d}g$, and the last step follows from $|\hat{R}_i|\geq \xi T$ (Claim~\ref{cla:each_cluster_is_large}) and $\phi\geq 1000 k\log(k/\delta)/(\xi^3 T)$.

\end{comment}
The second case is that $\dist(z_i,\hat{R}_i)> \sqrt{d}g$.
In this case, by Lemma~\ref{lem:limited_independent_bound}, we have:
\begin{align*}
&\Pr\left[ \left| \sum_{p\in P'\cap \hat{R}_i} w'(p) \cdot \dist^r(p, z_i) - \sum_{p\in \hat{R}_i} \dist^r(p,z_i)\right| \geq  \xi \sum_{p\in \hat{R}_i}\dist^r(p,z_i)\right]\\
\leq & 8\left( \frac{ \left(\sum_{p\in\hat{R}_i}\dist^r(p,z_i)\right)\cdot  \lambda\cdot \left(\frac{1}{\phi}\cdot (\dist(z_i,\hat{R}_i)+\sqrt{d}g)^r\right) + \lambda^2\cdot \left(\frac{1}{\phi}\cdot \left(\dist(z_i,\hat{R}_i)+\sqrt{d}g\right)^r\right)^2 }{ \xi^2 \left(\sum_{p\in \hat{R}_i} \dist^r(p,z_i)\right)^2 } \right)^{\lambda/2} \\
\leq & 8\left(\frac{\frac{\lambda}{\phi}\cdot |\hat{R}_i|(\dist(z_i,\hat{R}_i)+\sqrt{d}g)^{2r}+\frac{\lambda^2}{\phi^2}\cdot (\dist(z_i,\hat{R}_i)+\sqrt{d}g)^{2r}}{\xi^2 |\hat{R}_i|^2 \dist^{2r}(z_i,\hat{R}_i)}\right)^{\lambda/2}\\
\leq & 8\left(\frac{\frac{\lambda}{\phi}\cdot |\hat{R}_i|(2\dist(z_i,\hat{R}_i))^{2r}+\frac{\lambda^2}{\phi^2}\cdot (2\dist(z_i,\hat{R}_i))^{2r}}{\xi^2 |\hat{R}_i|^2 \dist^{2r}(z_i,\hat{R}_i)}\right)^{\lambda/2}\\
\leq & \frac{\delta}{10k^22^k},
\end{align*}
where the first and the second step follows from triangle inequality, and the last step follows from $\phi\geq 10\cdot 2^{2r}\lambda/(\xi^3T)\geq 10\cdot 2^{2r}\lambda/(\xi^2|\hat{R}_i|)$ and $\lambda\geq 100k\log(k/\delta)$.

\begin{comment}
\begin{align*}
&\Pr\left[ \left| \sum_{p\in P'\cap \hat{R}_i} w'(p) \cdot \dist^2(p, z_i) - \sum_{p\in \hat{R}_i} \dist^2(p,z_i)\right| \geq  \xi \sum_{p\in \hat{R}_i}\dist^2(p,z_i)\right]\\
\leq & 2e^{-\frac{\frac{1}{2}\left(\xi \sum_{p\in \hat{R}_i}\dist^2(p, z_i)\right)^2}{ \sum_{p\in\hat{R}_i} \phi \cdot \left(\frac{1}{\phi}\cdot \dist^2(p, z_i)\right)^2 + \frac{1}{3}\cdot \left(\frac{1}{\phi}\cdot (\dist(z_i,\hat{R}_i )+ \sqrt{d} g)^2\right)\cdot \xi \sum_{p\in\hat{R}_i}\dist^2(p,z_i) } }\\
\leq & 2e^{-\frac{\frac{1}{2}\cdot \left(\xi \cdot |\hat{R}_i|\cdot \dist^2(z_i,\hat{R}_i)\right)^2}{ |\hat{R}_i|\cdot \frac{1}{\phi} \cdot (\dist(z_i,\hat{R}_i)+\sqrt{d}g)^4 + |\hat{R}_i|\cdot \frac{1}{\phi}\cdot \frac{1}{3}\xi\cdot (\dist(z_i,\hat{R}_i)+\sqrt{d}g)^4 }}\\
\leq & 2e^{-\frac{\frac{1}{2}\cdot \xi^2 \cdot |\hat{R}_i|^2\cdot \dist^4(z_i,\hat{R}_i)}{ |\hat{R}_i|\cdot \frac{1}{\phi} \cdot 16\dist(z_i,\hat{R}_i)^4 + |\hat{R}_i|\cdot \frac{1}{\phi}\cdot \frac{1}{3}\xi\cdot 16\dist^4(z_i,\hat{R}_i) }}\\
\leq & 2e^{-\frac{1}{64}\phi\cdot \xi^2 |\hat{R}_i|}\\
\leq & \frac{\delta}{10k^2 2^k},
\end{align*}
where the second step follows from $\forall p\in\hat{R}_i,\dist(z_i,\hat{R}_i)\leq \dist(p,z_i)\leq \dist(z_i,\hat{R}_i)+\sqrt{d}g$, the third step follows from $\dist(z_i,\hat{R}_i)>\sqrt{d}g$, and the last step follows from $|\hat{R}_i|\geq \xi T$ (Claim~\ref{cla:each_cluster_is_large}) and $\phi\geq 1000k\log (k/\delta)/(\xi^3 T)$.
\end{comment}

Thus, we know that with probability at least $1-\frac{\delta}{10k^2 2^k}$, 
\begin{align*}
\left|\sum_{p\in P'\cap \hat{R}_i} w'(p) \cdot \dist^r(p, z_i) - \sum_{p\in \hat{R}_i} \dist^r(p,z_i) \right| \leq \xi \left(|\hat{R}_i|(\sqrt{d}g)^r + \sum_{p\in\hat{R}_i}\dist^r(p,z_i)\right).
\end{align*}
By taking union bound over $i\in[k]$, with probability at least $1-\frac{\delta}{10k2^k}$, we have:
\begin{align*}
&\left|\sum_{p\in P'} w'(p) \dist^r(p,\hat{\pi}(p))-\cost(\hat{\pi})\right|\\
\leq & \sum_{i=1}^k \left|\sum_{p\in P'\cap \hat{R}_i} w'(p)\cdot \dist^r(p,z_i)-\sum_{p\in\hat{R}_i}\dist^r(p,z_i)\right|\\
\leq & \sum_{i=1}^k \xi \left(|\hat{R}_i|(\sqrt{d}g)^r + \sum_{p\in\hat{R}_i}\dist^2(p,z_i)\right)\\
\leq & \xi \left(|P|(\sqrt{d}g)^r + \cost(\hat{\pi})\right).
\end{align*}
Notice that though different $\hat{B}$ may induce a different assignment mapping $\hat{\pi}$, the total number of possible $\hat{\pi}$ cannot be too large.
This is because $\hat{i}^*$ only has $k$ choices and for $i\in \{0,1,\cdots,k\}\setminus\{\hat{i}^*\},$ either every point $p\in R_i\cap P$ is assigned to $z_{\hat{i}^*}$ or every point $p\in R_i\cap P$ is assigned to $z_i$.
Based on this observation, the total number of different assignment mapping $\hat{\pi}$ is upper bounded by $k\cdot 2^k$. 
By taking union bound over all possible $\hat{\pi}$, with probability at least $1-\delta/10$, for any possible choice of $\hat{B}$, we have 
\begin{align*}
\left|\sum_{p\in P'} w'(p)\dist^r(p,\hat{\pi}(p))-\cost(\hat{\pi})\right|\leq \xi \left(|P|(\sqrt{d}g)^r+\cost(\hat{\pi})\right).
\end{align*}
Condition on event \ref{it:bi_is_good}, $B$ is a possible choice of $\hat{B}$. 
Thus, with probability at least $1-\delta/10$,
\begin{align*}
\left|\sum_{p\in P'} w'(p)\dist^r(p,\pi(p))-\cost(\pi)\right|\leq \xi \left(|P|(\sqrt{d}g)^r+\cost({\pi})\right).
\end{align*}
Notice that, by the construction of $\pi$ and $\pi'$, $\forall p\in P',$ we have $\pi'(p)=\pi(p)$.
Thus, with probability at least $1-\delta/10$,
\begin{align*}
|\cost(\pi')-\cost(\pi)|\leq \xi(|P|(\sqrt{d}g)^r+\cost(\pi))
\end{align*}

Finally, let us consider event \ref{it:sampling_does_not_change_to_much}.
Similar to the argument for event \ref{it:sampling_estimating_cost}, let us still consider an arbitrary $\hat{B}=(\hat{b}_0,\hat{b}_1,\cdots,\hat{b}_k)$ which satisfies that $\forall i\in\{0,1,\cdots, k\},$ either $\hat{b}_i\in |R_i\cap P|\pm \xi T$ or $\hat{b}_i\in (1\pm \xi) |R_i\cap P|$.
Let $\hat{\pi}:P\rightarrow Z$ be a transferred assignment mapping corresponding to $(\mathcal{H},\hat{B},\xi,T)$.
Let $\hat{i}^* = \arg\max_{i\in[k]} \hat{b}_i$.
For $i\in[k],$ let $\hat{R}_i=\{p\in P\mid \hat{\pi}(p)=z_i\}$. 
We have
\begin{align*}
\E_{P'}\left[\sum_{p\in P'\cap \hat{R}_i} w'(p)\right] = \sum_{p\in \hat{R}_i} \phi\cdot \frac{1}{\phi} = |\hat{R}_i|= s(\hat{\pi})_i.
\end{align*}
By Claim~\ref{cla:each_cluster_is_large}, $s(\hat{\pi})_i$ is either $0$ or at least $\xi T$.
We only need to handle the situation when $s(\hat{\pi})_i$ is at least $\xi T$.
By Lemma~\ref{lem:limited_independent_bound},
%By Bernstein inequality, 
\begin{align*}
    &\Pr\left[\left|\sum_{p\in P'\cap \hat{R}_i} w'(p) - s(\hat{\pi})_i\right| > \xi s(\hat{\pi})_i\right]\\
\leq & 8\left(\frac{|\hat{R}_i|\cdot \lambda\cdot \frac{1}{\phi} + \lambda^2\cdot \frac{1}{\phi^2}}{\xi^2 |\hat{R}_i|^2}\right)^{\lambda/2}\\
\leq & \frac{\delta}{10k^2 2^k},   
\end{align*}
where the last step follows from that $\phi\geq 10\lambda/(\xi^3T)\geq 10\lambda/(\xi^2|\hat{R}_i|)$ and $\lambda\geq 100k\log(k/\delta)$.
\begin{comment}
\begin{align*}
&\Pr\left[\left|\sum_{p\in P'\cap \hat{R}_i} w'(p) - s(\hat{\pi})_i\right| \geq \xi s(\hat{\pi})_i\right]\\
\leq & 2e^{-\frac{\frac{1}{2}\cdot (\xi s(\hat{\pi})_i)^2}{s(\hat{\pi})_i\cdot \frac{1}{\phi}+ \frac{1}{3}\cdot \frac{1}{\phi}\cdot \xi s(\hat{\pi})_i}}\\
\leq & 2e^{-\frac{1}{4}\phi \xi^2 s(\hat{\pi})_i}\\
\leq & \frac{\delta}{10k^2 2^k},   
\end{align*}
where the last step follows from that $s(\hat{\pi})_i\geq \xi T$ and $\phi\geq 1000k\log(k/\delta)/(\xi^3 T)$.
\end{comment}
By taking union bound over all $i\in [k]$, with probability at least $1-\frac{\delta}{10k2^k}$,
\begin{align*}
\sum_{i=1}^k\left|\sum_{p\in P'\cap \hat{R}_i} w'(p) - s(\hat{\pi})_i \right| \leq \xi\cdot \sum_{i=1}^k s(\hat{\pi})_i = \xi |P|.
\end{align*}
By taking union bound over all the possible different $\hat{\pi}$, with probability at least $1-\delta/10$, for any possible choice of $\hat{B}$,
we have
\begin{align*}
\sum_{i=1}^k\left|\sum_{p\in P'\cap \hat{R}_i} w'(p) - s(\hat{\pi})_i \right| \leq \xi |P|.
\end{align*}
Condition on event \ref{it:bi_is_good}, $B$ is a possible choice of $\hat{B}$. Thus, with probability at lest $1-\delta/10$,
\begin{align*}
\sum_{i=1}^k\left|\sum_{p\in P'\cap \hat{R}_i} w'(p) - s(\pi)_i \right| \leq \xi |P|.
\end{align*}
Notice that, by the construction of $\pi$ and $\pi'$, $\forall p\in P',$ we have $\pi'(p)=\pi(p)$.
Thus, with probability at least $1-\delta/10$, $\|s(\pi)-s(\pi')\|_1\leq \xi |P|$.

By taking union bound over all four events, we complete the proof.
\end{proof} 

Notice that the previous lemma only works for $P$ and a fixed set of assignment half-spaces.
To make the above argument work for $P$ and all sets of assignment half-spaces, we just need to slightly raise the sampling probability and take union bound over all possible sets of assignment half-spaces.

\begin{lemma}[Estimation is good for all choices of centers and half-spaces]\label{lem:core_for_all}
Given a threshold $T\in\mathbb{R}_{\geq 0}$, let $P\subseteq[\Delta]^d$ be a point set such that each point has weight $w(p)=1$ and $|P|\geq T$.
Furthermore, $\forall p,q\in P$, $\dist(p,q)\leq \sqrt{d}g$ for some $g\in\mathbb{R}_{\geq 0}$.
Let $\xi,\delta\in(0,0.5),k\in\mathbb{Z}_{\geq 1},L=\log\Delta$.
Let $P'$ be a random subset of $P$ such that each point $p\in P$ is chosen $\lambda$-wise independently with probability $\phi$, where $\lambda=2000k^3dL\lceil\log(k/\delta)\rceil,\phi=\min\left(1, \frac{1000\cdot 2^{2r}\lambda}{\xi^3T}\right)$.
Let each $p\in P'$ have weight $w'(p)=1/\phi$. 
With probability at least $1-\delta/2$, the following event happens:
\begin{itemize}
%\item $\sum_{p\in P'} w'(p)\geq 0.9 T$.
\item For any choice of centers $Z=\{z_1,z_2,\cdots,z_k\}\subset[\Delta]^d$ with $|Z|=k$ and any choice of a set of assignment half-spaces $\mathcal{H}=\{H_{(i,j)}\mid i<j \in[k]\}$ corresponding to $Z$, $B^{Z,\mathcal{H}}=(b^{Z,\mathcal{H}}_0,b^{Z,\mathcal{H}}_1,\cdots,b^{Z,\mathcal{H}}_k)$ satisfies
that
$\forall i\in\{0,1,\cdots,k\}$, either $b^{Z,\mathcal{H}}_i\in (1\pm \xi)\cdot |R_i^{Z,\mathcal{H}}\cap P|$ or $b^{Z,\mathcal{H}}_i\in  |R_i^{Z,\mathcal{H}}\cap P|\pm \xi T$,
%\begin{itemize}
%    \item either $b^{Z,\mathcal{H}}_0\in (1\pm \xi)\cdot |\{p\in P\mid \forall i\in[k],\exists j\not=i,p\not\in H_{(i,j)}\}|$ or $b^{Z,\mathcal{H}}_0\in  |\{p\in P\mid \forall i\in[k],\exists j\not=i,p\not\in H_{(i,j)}\}|\pm \xi T$,
%    \item $\forall i\in[k]$, either $b^{Z,\mathcal{H}}_i\in (1\pm \xi)\cdot |\{p\in P\mid \forall j\not=i,p\in H_{(i,j)}\}|$ or $b^{Z,\mathcal{H}}_i\in  |\{p\in P\mid \forall j\not=i,p\in H_{(i,j)}\}|\pm \xi T$,
%\end{itemize}
where $(R_0^{Z,\mathcal{H}},R_1^{Z,\mathcal{H}},\cdots,R_k^{Z,\mathcal{H}})$ are regions induced by $\mathcal{H}$ (Definition~\ref{def:regions}), and $\forall i\in\{0,1,\cdots,k\},b_i^{Z,\mathcal{H}}=\sum_{p\in R_i^{Z,\mathcal{H}}\cap P'} w'(p)$ , and furthermore,
\begin{align*}
\begin{array}{lll}
&|\cost(\pi'_{Z,\mathcal{H}})-\cost(\pi_{Z,\mathcal{H}})|\leq \xi(|P|(\sqrt{d}g)^r+\cost(\pi_{Z,\mathcal{H}}))\\
& \text{and}\\
& \|s(\pi'_{Z,\mathcal{H}})-s(\pi_{Z,\mathcal{H}})\|_1\leq \xi |P|,
\end{array}
\end{align*}
where both $\pi_{Z,\mathcal{H}}:P\rightarrow Z,\pi'_{Z,\mathcal{H}}:P'\rightarrow Z$  are transferred assignment mappings corresponding to $(\mathcal{H},B^{Z,\mathcal{H}},\xi,T)$.
\end{itemize}
In addition, with probability at least $1-\delta/2$, $\sum_{p\in P'} w'(p)\geq 0.9 T$.
\end{lemma}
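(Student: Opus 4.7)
The plan is to derive Lemma~\ref{lem:core_for_all} from Lemma~\ref{lem:core_for_each} by a union bound over all possible $(Z,\mathcal{H})$, using the fact that $\lambda$ here is chosen large enough to compensate for the increased failure probability budget required by the union bound. First I would count the number of relevant pairs $(Z,\mathcal{H})$. Since $[\Delta]^d$ contains $\Delta^d = 2^{dL}$ points and $|Z|=k$, there are at most $\Delta^{dk} = 2^{dkL}$ choices of $Z$. By Definition~\ref{def:halfspace}, a single half-space $H_{(i,j)}$ is determined by the pair $(z_i,z_j)$ and an integer $t_{(i,j)} \in [\Delta^d]$, so for a fixed $Z$ there are at most $(\Delta^d)^{\binom{k}{2}} \leq 2^{dk^2 L}$ possible sets of assignment half-spaces $\mathcal{H}$. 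Altogether the number of pairs $(Z,\mathcal{H})$ to consider is at most $2^{dkL + dk^2 L} \leq 2^{2dk^2 L}$.

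Next I would apply Lemma~\ref{lem:core_for_each} to each such pair, but with a smaller per-pair failure probability $\delta' = \delta / (4\cdot 2^{2dk^2 L})$. That lemma requires $\lambda_{\mathrm{each}} = 100 k \lceil \log(k/\delta') \rceil$ and $\phi_{\mathrm{each}} = \min\bigl(1, 1000 \cdot 2^{2r}\lambda_{\mathrm{each}} / (\xi^3 T)\bigr)$. Substituting $\delta'$, we have $\log(k/\delta') = O(dk^2 L + \log(k/\delta))$, so $\lambda_{\mathrm{each}} = O(k^3 d L \log(k/\delta))$, which is dominated by the $\lambda = 2000 k^3 d L \lceil \log(k/\delta)\rceil$ in the statement. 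Since our sampling uses $\lambda$-wise independence and sampling probability $\phi \geq \phi_{\mathrm{each}}$, the conclusions of Lemma~\ref{lem:core_for_each} (events 1, 3, 4) apply to each fixed $(Z,\mathcal{H})$ with failure probability at most $\delta'$. A union bound over the at most $2^{2dk^2 L}$ pairs yields the first conclusion with failure probability at most $2^{2dk^2 L}\cdot \delta' \leq \delta/4 \leq \delta/2$.

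For the last statement $\sum_{p\in P'} w'(p) \geq 0.9T$, I would just apply event~\ref{it:preserving_the_most_weights} of Lemma~\ref{lem:core_for_each} once: it does not depend on the choice of $(Z,\mathcal{H})$, and holds with probability at least $1-\delta'\geq 1-\delta/2$ (in fact with much higher probability, but the bound $\delta/2$ suffices).

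The main obstacle, and really the only delicate point, is the accounting: we must verify that the total count $2^{O(dk^2 L)}$ of pairs $(Z,\mathcal{H})$ absorbed into the union bound only inflates $\log(1/\delta')$ to $O(dk^2L + \log(k/\delta))$, so that the corresponding $\lambda_{\mathrm{each}} = 100k\lceil \log(k/\delta')\rceil$ is still no larger than the stated $\lambda = 2000 k^3 dL \lceil \log(k/\delta)\rceil$; the factor of $k^3dL$ in $\lambda$ is chosen exactly for this reason. Once this parameter check is done, the lemma follows by stitching together the per-pair conclusions and rewriting them in the uniform form requested.
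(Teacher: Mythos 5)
Your proposal is correct and follows essentially the same route as the paper: count the at most $\Delta^{2dk^2} = 2^{2dk^2L}$ choices of $(Z,\mathcal{H})$, apply Lemma~\ref{lem:core_for_each} with a correspondingly reduced per-pair failure probability, verify that the stated $\lambda = 2000k^3dL\lceil\log(k/\delta)\rceil$ dominates the resulting requirement, and union-bound; the preserved-weight statement is handled by a single invocation of event~\ref{it:preserving_the_most_weights} of Lemma~\ref{lem:core_for_each}. The only difference is that you make the parameter accounting (showing $\lambda_{\mathrm{each}} \leq \lambda$) explicit where the paper leaves it implicit, which is a welcome clarification rather than a deviation.
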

\begin{proof}
By event \ref{it:preserving_the_most_weights} of Lemma~\ref{lem:core_for_each}, with probability at least $1-\delta/10$, $\sum_{p\in P'} w'(p)\geq 0.9T$.

By Lemma~\ref{lem:core_for_each} again, for any fixed $Z$ and $\mathcal{H}$, the second event happens with probability at least $1-\frac{\delta}{10\Delta^{2dk^2}}$, the second event happens.
Since $Z\subset[\Delta]^d$ and $|Z|=k$, the total number of possible choices of $Z$ is at most $\Delta^{dk}$.
For a particular choice of $Z$, since $\mathcal{H}$ contains ${k\choose 2} \leq k^2$ half-spaces and there are $\Delta^d$ possible choices for each half-space, the total number of possible $\mathcal{H}$ corresponding to $Z$ is at most $\Delta^{dk^2}$.
Hence the total number of possible choices of $(Z,\mathcal{H})$ is at most $\Delta^{dk}\cdot\Delta^{dk^2}\leq \Delta^{2dk^2}$.
By taking union bound over all possible choices of $(Z,\mathcal{H})$, with probability at least $1-\delta/10$, the second event happens.
\end{proof}

It is enough to prove the correctness of our algorithm. 
In Lemma~\ref{lem:offline_success_prob}, we will prove that if $o$ is a good approximation of $\OPT^{(r)}_{k\text{-clus}}$, Algorithm~\ref{alg:coreset_construction} does not output FAIL with high probability.
In the following lemma, we suppose this happens.
Furthermore, we can assume that the estimated size $\tau(C\cap Q),\tau\left(\bigcup_{j=1}^{s_i} Q_{i,j}\right),\tau(Q_{i,j})$ in Algorithm~\ref{alg:heavy_light} and Algorithm~\ref{alg:coreset_construction} are good estimations of $|C\cap Q|,~\left|\bigcup_{j=1}^{s_i} Q_{i,j}\right|$ and $|Q_{i,j}|$ respectively.
For offline algorithm, it is easy to compute the exact value of $|C\cap Q|,~\left|\bigcup_{j=1}^{s_i} Q_{i,j}\right|$ and $|Q_{i,j}|$.
For streaming and distributed algorithm, we will explain how to get good estimations with high probability in Section~\ref{sec:streaming}.
In the following lemma, we will show that if $o$ is a suitable choice, then we can output a strong coreset with high probability.

In high level, to prove the correctness, we will apply Lemma~\ref{lem:transferred_also_good} and Lemma~\ref{lem:core_for_all} for each part $Q_{i,j}$ of which $|Q_{i,j}|$ is $\Omega(\gamma T_i(o))$. 
We can show that the total error induced by each part $Q_{i,j}$ is relatively small.

\begin{lemma}[Correctness of the construction]\label{lem:correctness_offline}
Suppose the following conditions are satisfied:
\begin{enumerate}
    \item Algorithm~\ref{alg:coreset_construction} does not output FAIL,
    \item $o\leq \OPT^{(r)}_{k\text{-clus}}$,
    \item all estimated size $\tau\left(\bigcup_{j=1}^{s_i} Q_{i,j}\right)$, $\tau(Q_{i,j})$ in Algorithm~\ref{alg:coreset_construction} and $\tau(C\cap Q)$ in Algorithm~\ref{alg:heavy_light} called by Algorithm~\ref{alg:coreset_construction} are good (Definition~\ref{def:good_estimation_alg2}, Definition~\ref{def:good_estimation_alg1}).
    %\item when calling Algorithm~\ref{alg:heavy_light} in line~\ref{sta:call_alg1} of Algorithm~\ref{alg:coreset_construction},  the estimated size $\tau(C\cap Q)$ in line~\ref{sta:estimated_size} of Algorithm~\ref{alg:heavy_light} always satisfies either $\tau(C\cap Q)\in |C\cap Q|\pm 0.1T_i(o)$ or $\tau(C\cap Q)\in (1\pm 0.01)\cdot |C\cap Q|$,
    %\item $\forall i\in \{0,1,\cdots,L\}$, the estimated size $\tau\left(\bigcup_{j=1}^{s_i} Q_{i,j}\right)$ in Algorithm~\ref{alg:coreset_construction} satisfies either $\tau\left(\bigcup_{j=1}^{s_i} Q_{i,j}\right)\in \sum_{j=1}^{s_i}|Q_{i,j}| \pm 0.1 T_i(o)$ or $\tau\left(\bigcup_{j=1}^{s_i} Q_{i,j}\right)\in (1\pm 0.1)\sum_{j=1}^{s_i}|Q_{i,j}|$,
    %\item $\forall i\in \{0,1,\cdots,L\},j\in[s_i]$, the estimated size $\tau(Q_{i,j})$ in Algorithm~\ref{alg:coreset_construction} satisfies either $\tau(Q_{i,j})\in |Q_{i,j}|\pm 0.1\gamma T_i(o)$ or $\tau(Q_{i,j})\in (1\pm 0.1)|Q_{i,j}|$.
\end{enumerate}
Let $Q'\subseteq Q,w':Q'\rightarrow \mathbb{R}_{\geq 0}$ be the output of Algorithm~\ref{alg:coreset_construction}.
With probability at least $0.99$, $\forall t\geq |Q|/k,Z\subset[\Delta]^d$ with $|Z|=k$,
\begin{align*}
\begin{array}{ccc}
\cost_{(1+\eta)t}(Q,Z) \leq (1+\varepsilon)\cost_t(Q',Z,w')    &  \text{and} & \cost_{(1+\eta)t}(Q',Z,w')\leq (1+\varepsilon)\cost_t(Q,Z). 
\end{array}
\end{align*}
\end{lemma}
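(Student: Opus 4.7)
The plan is to combine the three main technical tools of this section---the partitioning guarantee (Lemma~\ref{lem:small_parts_removal}), the half-space representation of capacitated optima (Lemma~\ref{lem:opt_halfspace}), and the sample-based estimation bound (Lemma~\ref{lem:core_for_all})---through a two-stage reduction. First I would peel off the parts that were excluded from $\bigcup_i\mathcal{P}^I_i$; then within each surviving part I would invoke Lemma~\ref{lem:core_for_all} to show that, simultaneously for every candidate pair $(Z,\mathcal{H})$, the uniform sample preserves both the cost and the cluster-size vector of the transferred assignment up to tiny multiplicative and additive errors.

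For the first stage, set $Q^N:=\bigcup\{Q_{i,j}\mid Q_{i,j}\notin\mathcal{P}^I_i\}$. Since the algorithm excludes $Q_{i,j}$ precisely when $\tau(Q_{i,j})<\gamma T_i(o)$, the good-estimation hypothesis forces $|Q_{i,j}|\le 2\gamma T_i(o)$, which is exactly the condition of Lemma~\ref{lem:small_parts_removal}. Applied with $\varepsilon/3$ and $\eta/3$ it gives, for every $t\ge|Q|/k$ and every $Z$,
\[
\cost_t(Q^I,Z)\le\cost_t(Q,Z)\le (1+\varepsilon/3)\cost_{(1+\eta/3)t}(Q^I,Z),
\]
where $Q^I=Q\setminus Q^N$. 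Hence it suffices to prove the coreset inequality with $Q$ replaced by $Q^I$ at slightly tightened parameters. For the second stage I would apply Lemma~\ref{lem:core_for_all} to each part $P=Q_{i,j}\in\mathcal{P}^I_i$ with threshold $T=0.9\gamma T_i(o)$ (so $|P|\ge T$), accuracy $\xi$, and failure probability $\delta=0.01/[20000(k+d^{1.5r})L]$. The diameter hypothesis holds because $P$ lies inside a single cell of $G_i$ so pairwise distances are $\le\sqrt{d}g_i$, and the algorithm's $\phi_i$ meets the lemma's lower bound on $\phi$. A union bound over the at most $20000(k+d^{1.5r})L$ surviving parts yields, with probability $\ge 0.99$, the following event for every part: for every $(Z,\mathcal{H})$ the sample-based counts $B^{Z,\mathcal{H}}$ are accurate and the transferred assignments $\pi_{Z,\mathcal{H}}$ on $P$ and $\pi'_{Z,\mathcal{H}}$ on $P\cap Q'$ satisfy the cost and size estimates promised by the lemma.

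Now fix $Z$ and $t$. For the direction $\cost_{(1+\eta)t}(Q',Z,w')\le(1+\varepsilon)\cost_t(Q,Z)$, let $\pi^\star$ achieve $\cost_t(Q^I,Z)$ and use Lemma~\ref{lem:opt_halfspace} (with $m=1$) to produce half-spaces $\mathcal{H}^\star$ whose induced assignment equals $\pi^\star$. On each $P$, Lemma~\ref{lem:transferred_also_good} then upgrades $\pi^\star|_P$ to the transferred $\pi_{Z,\mathcal{H}^\star}$ at a price of at most $2^{r+4}k^2\xi\cost(\pi^\star|_P)+\xi\cdot 2^{r+1}kT_i(o)(\sqrt{d}g_i)^r$ in cost and $16k\xi|P|$ in $\ell_1$ cluster-size shift, after which Lemma~\ref{lem:core_for_all} transfers the transferred cost and sizes from $P$ to $P\cap Q'$ at the further additive price $\xi(|P|(\sqrt{d}g_i)^r+\cost(\pi_{Z,\mathcal{H}^\star}))$ and $\xi|P|$. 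Summing over parts and using the identity $T_i(o)(\sqrt{d}g_i)^r=0.01\,o\le 0.01\,\OPT^{(r)}_{k\text{-clus}}\le 0.01\cost_t(Q,Z)$, together with $|\mathcal{P}^I_i|\le 20000(k+d^{1.5r})L$ total parts, the accumulated multiplicative factor is $1+\varepsilon/3$ and the accumulated additive error is $\le(\varepsilon/3)\cost_t(Q,Z)$ by the algorithm's choice of $\gamma$ and $\xi$; meanwhile the total size shift $(16k+1)\xi|Q|$ is at most $(\eta/2)t$ because $|Q|\le kt$ and $\xi=O(\eta/(k(k+d^{1.5r})L^2))$. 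Combining with the first-stage loss of $(1+\varepsilon/3,1+\eta/3)$ gives the desired bound. The reverse direction is symmetric: start from an optimal capacity-$t$ assignment of $Q'$ on $Q^I$'s sample, apply Lemma~\ref{lem:opt_halfspace} in its weighted form to obtain half-spaces, invoke Lemma~\ref{lem:core_for_all} in the opposite direction to recover a legal assignment of $Q^I$, and lift to $Q$ through Lemma~\ref{lem:small_parts_removal}.

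The main obstacle is the parameter bookkeeping: the per-part additive term $\xi k T_i(o)(\sqrt{d}g_i)^r$ must be summed over $O((k+d^{1.5r})L)$ parts without exceeding $\varepsilon\,\OPT^{(r)}_{k\text{-clus}}$, and the per-part cluster-size shift $O(k\xi|P|)$ must accumulate to at most $\eta t$ rather than $\eta|Q|$. The former forces $\xi=O(\varepsilon/[k(k+d^{1.5r})L])$ and the latter forces $\xi=O(\eta/k)$; both are exactly what the algorithm's $\xi=2^{-2(r+10)}\min(\varepsilon,\eta)/(k(k+d^{1.5r})L^2)$ delivers, but the proof must track the constant-factor growth across the three successive reductions (small-part removal, half-space transfer, and sampling) to see that the multiplicative factors compose to $1+\varepsilon$ and the size shifts compose to $\eta t$.
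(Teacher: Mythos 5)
Your outline matches the paper's argument: partition the error into a small-parts removal stage (Lemma~\ref{lem:small_parts_removal}), a per-part transferred-assignment stage (Lemma~\ref{lem:opt_halfspace} plus Lemma~\ref{lem:transferred_also_good}), and a per-part sampling stage (Lemma~\ref{lem:core_for_all}), then accumulate the cost error against $\OPT^{(r)}_{k\text{-clus}}$ via $T_i(o)(\sqrt{d}g_i)^r = 0.01\,o$ and the cluster-size drift against $t$ via $|Q|\le kt$. Two bookkeeping slips should be fixed in a full write-up: first, a part $P=Q_{i,j}$ sits inside a single cell of $G_{i-1}$ (not $G_i$), so the diameter bound is $\sqrt{d}\,g_{i-1}=\sqrt{d}\cdot 2g_i$, and the algorithm's leading constants $2^{2(r+10)}$ in $\phi_i$ and $2^{-2(r+10)}$ in $\gamma,\xi$ are there precisely to absorb this extra $2^r$; second, the first-stage inequality should read $\cost_{(1+\eta/3)t}(Q,Z)\le(1+\varepsilon/3)\cost_t(Q^I,Z)$ (capacity slack on the $Q$ side), and the lower bound $|P|\ge 0.5\gamma T_i(o)$ (from the good-estimation hypothesis with slack $0.1\gamma T_i(o)$ or $1\pm 0.1$) is what feeds the threshold $T$ into Lemmas~\ref{lem:transferred_also_good} and~\ref{lem:core_for_all}. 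Neither issue changes the route or the conclusion; the composition of the three reductions is exactly what the paper does, with Lemma~\ref{lem:small_parts_removal} invoked once at the end of the harder direction rather than up front, which is presentationally cleaner but mathematically equivalent.
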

\begin{proof}
By line~\ref{sta:lb_of_each_final_part} of Algorithm~\ref{alg:coreset_construction}, $\forall i\in\{0,1,\cdots,L\},\forall P\in\mathcal{P}^I_i$, we have $|P|\geq \min(0.9\tau(P),\tau(P)-0.1\gamma T_i(o))\geq  0.5\gamma T_i(o)$.
Consider an arbitrary $P\in \mathcal{P}^I_i$.
Since $\forall p,q\in P, c_{i-1}(p)=c_{i-1}(q)$, we have $\dist(p,q)\leq \sqrt{d}g_{i-1}=\sqrt{d}\cdot 2g_i$.
Let $P'=P\cap Q'$.
Let $\mathcal{E}(P)$ be the following events:
\begin{enumerate}
\item $\sum_{p\in P'} w'(p)\geq 0.9\cdot 0.5\gamma T_i(o)=0.45\gamma T_i(o)$.
\item For any choice of centers $Z=\{z_1,z_2,\cdots,z_k\}\subset [\Delta]^d$ with $|Z|=k$ and any choice of a set of assignment half-spaces $\mathcal{H}=\{H_{(j,j')}\mid j<j'\in[k]\}$ corresponding to $Z$, $B^{P,Z,\mathcal{H}}=(b_0^{P,Z,\mathcal{H}},b_1^{P,Z,\mathcal{H}},\cdots,b_k^{P,Z,\mathcal{H}})$ satisfies that
$\forall j\in\{0,1,\cdots,k\}$, ether $b_j^{P,Z,\mathcal{H}}\in(1\pm \xi)\cdot |R_j^{Z,\mathcal{H}}\cap P|$ or $b_j^{P,Z,\mathcal{H}}\in |R_j^{Z,\mathcal{H}}\cap P|\pm\xi\cdot 0.5\gamma T_i(o)$, where $(R_0^{Z,\mathcal{H}},R_1^{Z,\mathcal{H}},\cdots,R_k^{Z,\mathcal{H}})$ are regions (Definition~\ref{def:regions}) induced by $\mathcal{H}$, and $b_j^{P,Z,\mathcal{H}}=\sum_{p\in P\cap Q'\cap R_j^{Z,\mathcal{H}}} w'(p)$,
\begin{comment}
\begin{itemize}
    \item either $b_0^{P,Z,\mathcal{H}}\in (1\pm \xi)\cdot |\{p\in P \mid \forall j\in[k],\exists j'\not=j,p\not\in H_{(j,j')}\}|$ or $b_0^{P,Z,\mathcal{H}}\in |\{p\in P\mid \forall j\in[k],\exists j'\not= j,p\not\in H_{(j,j')}\}|\pm\xi\cdot 0.5\gamma T_i(o)$,
    \item $\forall j\in[k]$, either $b_j^{P,Z,\mathcal{H}}\in(1\pm \xi)\cdot |\{p\in P\mid \forall j'\not=j,p\in H_{(j,j')}\}|$ or $b_j^{P,Z,\mathcal{H}}\in |\{p\in P\mid \forall j'\not=j,p\in H_{(j,j')}\}|\pm\xi\cdot 0.5\gamma T_i(o)$,
\end{itemize}
where $b_0^{P,Z,\mathcal{H}}=\sum_{p\in P':\forall j\in[k],\exists j'\not= j,p\not\in H_{(j,j')}}w'(p)$ and $\forall j\in[k],b_j^{P,Z,\mathcal{H}}=\sum_{p\in P':\forall j'\not=j,p\in H_{(j,j')}} w'(p)$, 
\end{comment}
and furthermore,
\begin{align*}
%\begin{array}{ccc}
&|\cost(\pi'_{P',Z,\mathcal{H}})-\cost(\pi_{P,Z,\mathcal{H}})|\leq \xi (|P|(\sqrt{d}\cdot 2g_i)^r+\cost(\pi_{P,Z,\mathcal{H}}))     \\
&\text{and} \\
& \|s(\pi'_{P',Z,\mathcal{H}})-s(\pi_{P,Z,\mathcal{H}})\|_1\leq \xi |P|,
%\end{array}
\end{align*}
where both $\pi_{P,Z,H}:P\rightarrow Z,\pi'_{P',Z,H}:P'\rightarrow Z$ are transferred assignment mappings corresponding to $(\mathcal{H},B^{P,Z,\mathcal{H}},\xi,T)$.
\end{enumerate}
By Lemma~\ref{lem:core_for_all}, with probability at least $1-1/(10^8(k+d^{1.5r})L)$, $\mathcal{E}(P)$ happens.
Notice that $\sum_{i=0}^L|\mathcal{P}^I_i|\leq \sum_{i=0}^{L}s_i\leq 20000(k+d^{1.5r})L$.
By taking union bound over all $i\in\{0,1,\cdots,L\},P\in \mathcal{P}^I_i$, with probability at least $0.99$, $\forall i\in\{0,1,\cdots,L\},P\in\mathcal{P}^I_i$, event $\mathcal{E}(P)$ happens.
In the remaining of the proof, we condition on all $\mathcal{E}(P)$.

Firstly, let us focus on proving $\cost_{(1+\eta)t}(Q',Z,w')\leq (1+\varepsilon)\cost_t(Q,Z).$
Let $Q^I=\bigcup_{i=0}^L \bigcup_{P\in\mathcal{P}^I_i} P$.
According to Lemma~\ref{lem:opt_halfspace}, there is a set of assignment half-spaces $\mathcal{H}^I$ corresponding to $Z$ such that $\mathcal{H}^I=\{H^I_{(j,j')}\mid j<j'\in [k]\}$ is valid for $Q^I$ and 
\begin{align}\label{eq:costQI_is_equal_to_costpiI}
\cost_t(Q^I,Z)=\cost(\pi^I),
\end{align}
where $\pi^I:Q^I\rightarrow Z$ is an assignment mapping corresponding to $\mathcal{H}^I$.
For $i\in\{0,1,\cdots,L\},P\in\mathcal{P}^I_i$, let $\pi_P^I:P\rightarrow Z$ be the assignment mapping such that $\forall p\in P, \pi^I_P(p)=\pi^I(p)$.
Let $\hat{\pi}^I_P:P\rightarrow Z$ be a transferred assignment mapping corresponding to $(\mathcal{H}^I,B^{P,Z,\mathcal{H}^I},\xi,0.5\gamma T_i(o))$, where $B^{P,Z,\mathcal{H}^I}=(b_0^{P,Z,\mathcal{H}^I},b_1^{P,Z,\mathcal{H}^I},\cdots,b_k^{P,Z,\mathcal{H}^I})$ is the same as defined in the event $\mathcal{E}(P)$, i.e., $\forall j\in\{0,1,\cdots,k\},b_j^{P,Z,\mathcal{H}^I}=\sum_{p\in P\cap Q'\cap R_j^{Z,\mathcal{H}^I}}w'(p)$, where $(R_0^{Z,\mathcal{H}^I},R_1^{Z,\mathcal{H}^I},\cdots,R_k^{Z,\mathcal{H}^I})$ are regions (Definition~\ref{def:regions}) induced by $\mathcal{H}^I$. % and $\forall j\in[k],b_j^{P,Z,\mathcal{H}^I}=\sum_{p\in P\cap Q':\forall j'\not=j,p\in H^I_{(j,j')}} w'(p)$.
Let $\hat{\pi}^I_{P\cap Q'}:P\cap Q'\rightarrow Z$ be a transferred assignment mapping which is also corresponding to $(\mathcal{H}^I,B^{P,Z,\mathcal{H}^I},\xi,0.5\gamma T_i(o))$, where each point $p\in P\cap Q'$ has weight $w'(p)$.
We have:
\begin{align}\label{eq:long_eq1_part1}
\cost_t(Q,Z)
\geq \cost_t(Q^I,Z)
= \cost(\pi^I)
= \sum_{i=0}^L \sum_{P\in \mathcal{P}_i^I} \cost(\pi_P^I),
\end{align}
where the first step follows from that $Q^I$ is a subset of $Q$, the second step follows from Equation~\eqref{eq:costQI_is_equal_to_costpiI}.
Consider $i\in\{0,1,\cdots,L\}$ and $P\in\mathcal{P}^I$.
Algorithm~\ref{alg:coreset_construction} tells us $|P|\geq 0.5\gamma T_i(o)$.
Since all points in $P$ are in the same cell in $G_{i-1}$, we know that $\forall p,q\in P$, $\dist(p,q)\leq\sqrt{d}\cdot 2g_i$.
Then by Lemma~\ref{lem:transferred_also_good},
\begin{align}\label{eq:long_eq1_part2}
(1+2^{r+4}k^2\cdot \xi)\cdot \cost(\pi_P^I)\geq \cost(\hat{\pi}_P^I) -\xi \cdot 2^{r+1}k\cdot 0.5\gamma T_i(o)\cdot  (\sqrt{d}\cdot 2g_i)^r.
\end{align}
Since $2^{r+4}k^2\xi \leq \varepsilon/10$ and $\xi\leq \frac{\varepsilon}{2000\cdot 2^{2r}k(k+d^{1.5r})L}$, we have:
\begin{align}\label{eq:long_eq1_part3}
&(1+\varepsilon/10)\cdot\cost_t(Q,Z)\notag\\
 \geq& (1+\varepsilon/10)\cdot\sum_{i=0}^L \sum_{P\in \mathcal{P}_i^I} \cost(\pi_P^I) & \text{(Equation~\eqref{eq:long_eq1_part1})}\notag\\
\geq&  \sum_{i=0}^L \sum_{P\in \mathcal{P}_i^I} \left( \cost(\hat{\pi}_P^I) -\xi \cdot 2^{r+1}k\cdot 0.5\gamma T_i(o)\cdot  (\sqrt{d}\cdot 2g_i)^r\right) & \left(2^{r+4}k^2 \xi\leq \frac{\varepsilon}{10}\text{ and Equation~\eqref{eq:long_eq1_part2}}\right)\notag\\
\geq& \sum_{i=0}^L \sum_{P\in \mathcal{P}_i^I} \left( \cost(\hat{\pi}_P^I) -\frac{\varepsilon}{1000 (k+d^{1.5r})L} \cdot  0.5 T_i(o)\cdot  (\sqrt{d}g_i)^r\right) & \left(\xi\leq \frac{\varepsilon}{2000\cdot 2^{2r}k(k+d^{1.5r})L}\right)\notag\\
=& \sum_{i=0}^L \sum_{P\in \mathcal{P}_i^I} \left( \cost(\hat{\pi}_P^I) -\frac{\varepsilon}{1000 (k+d^{1.5r})L} \cdot   0.005o\right) & \left(T_i(o)=\frac{0.01o}{(\sqrt{d}g_i)^r}\right)\notag\\
= & \sum_{i=0}^L \sum_{P\in \mathcal{P}_i^I} \cost(\hat{\pi}_P^I) - \frac{\varepsilon}{1000 (k+d^{1.5r})L} \cdot  0.005o\cdot\sum_{i=0}^L|\mathcal{P}_i^I|\notag\\
\overset{(a)}{\geq} & \sum_{i=0}^L \sum_{P\in \mathcal{P}_i^I} \cost(\hat{\pi}_P^I) - \frac{\varepsilon}{10}\cdot o & \left(\sum_{i=0}^L |\mathcal{P}_i^I|\leq 20000(k+d^{1.5r})L\right)\notag\\
\overset{(b)}{\geq} & \sum_{i=0}^L \sum_{P\in \mathcal{P}_i^I} \cost(\hat{\pi}_P^I) - \frac{\varepsilon}{10}\cdot \cost_t(Q,Z), 
\end{align}
where step (a) follows from $\sum_{i=0}^L |\mathcal{P}_i^I|\leq \sum_{i=0}^L s_i\leq 20000(k+d^{1.5r})L$ which is according to Algorithm~\ref{alg:coreset_construction}, and  step (b) follows from $o\leq \OPT^{(r)}_{k\text{-clus}}\leq \cost_t(Q,Z)$.
Consider $i\in\{0,1,\cdots,L\}$ and $P\in\mathcal{P}^I$.
Event $\mathcal{E}(P)$ shows that 
\begin{align}\label{eq:long_eq1_part4}
(1+\xi)\cdot \cost(\hat{\pi}_P^I) \geq \cost(\pi_{P\cap Q'}^I) - \xi\cdot |P| (\sqrt{d}\cdot 2g_i)^r.
\end{align}
Since $\xi\leq \frac{\varepsilon}{4000\cdot 2^r(kL+d^{1.5r})L}$, we have:
\begin{align}\label{eq:long_eq1_part5}
&(1+\xi)\cdot \sum_{i=0}^L\sum_{P\in \mathcal{P}_i^I} \cost(\hat{\pi}^I_P) \notag\\
\geq & \sum_{i=0}^L\sum_{P\in \mathcal{P}_i^I} \left(\cost(\hat{\pi}_{P\cap Q'}^I)-\xi \cdot |P|(\sqrt{d}\cdot 2g_i)^r\right) & \text{(Equation~\eqref{eq:long_eq1_part4})}\notag\\
\overset{(a)}{\geq} & \sum_{i=0}^L\sum_{P\in \mathcal{P}_i^I} \cost(\hat{\pi}_{P\cap Q'}^I)-\sum_{i=0}^L \xi \cdot 2\cdot 10^4(kL+d^{1.5r})T_i(o)(\sqrt{d}\cdot 2g_i)^r \notag\\
\geq & \sum_{i=0}^L\sum_{P\in \mathcal{P}_i^I} \cost(\hat{\pi}_{P\cap Q'}^I)-\sum_{i=0}^L \frac{5\varepsilon}{L}\cdot  T_i(o)\cdot (\sqrt{d}g_i)^r &\left(\xi\leq \frac{\varepsilon}{4000\cdot 2^r(kL+d^{1.5r})L}\right)\notag\\
= & \sum_{i=0}^L\sum_{P\in \mathcal{P}_i^I} \cost(\hat{\pi}_{P\cap Q'}^I)- \sum_{i=0}^{L} \frac{5\varepsilon}{L}\cdot 0.01o & (T_i(o)=0.01o/(\sqrt{d}g_i)^r)\notag\\
\geq & \sum_{i=0}^L\sum_{P\in \mathcal{P}_i^I} \cost(\hat{\pi}_{P\cap Q'}^I)- \frac{\varepsilon}{10}\cdot o\notag \\
\overset{(b)}{\geq} & \sum_{i=0}^L\sum_{P\in \mathcal{P}_i^I} \cost(\hat{\pi}_{P\cap Q'}^I)- \frac{\varepsilon}{10}\cdot \cost_t(Q,Z),
\end{align}
where step (a) follows from $\sum_{P\in\mathcal{P}_i^I} |P|\leq \sum_{j=1}^{s_i}|Q_{i,j}|\leq 2\cdot 10^4(kL+d^{1.5r})T_i(o)$ and step (b) follows from that $o\leq \OPT^{(r)}_{k\text{-clus}}\leq \cost_t(Q,Z)$.
Consider the total weights of points assigned to each center by $\hat{\pi}_{P\cap Q'}^I$ for all $P\in\mathcal{P}_i^I$ and all $i\in\{0,1,\cdots,L\}$. 
We have:
\begin{align*}
&\left\|\sum_{i=0}^L\sum_{P\in\mathcal{P}_i^I} s(\hat{\pi}^I_{P\cap Q'})-s(\pi^I)\right\|_{\infty} \\
\leq & \left\|\sum_{i=0}^L\sum_{P\in\mathcal{P}_i^I} \left(s(\hat{\pi}^I_{P\cap Q'})-s(\hat{\pi}_P^I)\right) \right\|_{\infty} + \left\|\sum_{i=0}^L\sum_{P\in\mathcal{P}_i^I} \left(s(\hat{\pi}_P^I) - s(\pi^I_P)\right)\right\|_{\infty} & \text{(triangle inequality)}\\
\leq & \sum_{i=0}^L\sum_{P\in\mathcal{P}_i^I}\left\| s(\hat{\pi}^I_{P\cap Q'})-s(\hat{\pi}_P^I) \right\|_{\infty} + \sum_{i=0}^L\sum_{P\in\mathcal{P}_i^I}\left\| s(\hat{\pi}_P^I) - s(\pi^I_P)\right\|_{\infty} & \text{(triangle inequality)}\\
\leq & \sum_{i=0}^L\sum_{P\in\mathcal{P}_i^I}\left\| s(\hat{\pi}^I_{P\cap Q'})-s(\hat{\pi}_P^I) \right\|_{1} + \sum_{i=0}^L\sum_{P\in\mathcal{P}_i^I}\left\| s(\hat{\pi}_P^I) - s(\pi^I_P)\right\|_{1} & \text{($\forall x\in\mathbb{R}^k,\|x\|_{\infty}\leq \|x\|_1$)}\\
\leq  &\sum_{i=0}^L\sum_{P\in\mathcal{P}_i^I} \xi |P| + \sum_{i=0}^L\sum_{P\in\mathcal{P}_i^I}\left\| s(\hat{\pi}_P^I) - s(\pi^I_P)\right\|_{1} & \text{ (event $\mathcal{E}(P)$)}\\
\leq & \sum_{i=0}^L\sum_{P\in\mathcal{P}_i^I} \xi |P| + \sum_{i=0}^L\sum_{P\in\mathcal{P}_i^I}16k\xi |P| & \text{(Lemma~\ref{lem:transferred_also_good})}\\
\leq & \xi|Q^I|+16k\xi|Q^I|\leq \eta t & \left(\xi\leq \frac{\eta}{20k^2},Q^I\subseteq Q,t\geq \frac{|Q|}{k}\right).
\end{align*}
Since $\|s(\pi^I)\|_{\infty}\leq t$, we know that 
\begin{align*}
    \sum_{i=0}^L\sum_{P\in \mathcal{P}_i^I} \cost(\hat{\pi}_{P\cap Q'}^I) \geq \cost_{(1+\eta)t}(Q',Z,w').
\end{align*}
By combining above inequality with Equation~\eqref{eq:long_eq1_part3} and Equation~\eqref{eq:long_eq1_part5}, we have:
\begin{align*}
&(1+\varepsilon/10)\cdot \cost_t(Q,Z) \\
\geq& \sum_{i=0}^L \sum_{P\in \mathcal{P}_i^I} \cost(\hat{\pi}_P^I) - \frac{\varepsilon}{10}\cdot \cost_t(Q,Z) \\
\geq& \frac{1}{1+\xi} \left(\sum_{i=0}^L\sum_{P\in \mathcal{P}_i^I} \cost(\hat{\pi}_{P\cap Q'}^I)- \frac{\varepsilon}{10}\cdot \cost_t(Q,Z) \right)- \frac{\varepsilon}{10}\cdot \cost_t(Q,Z)\\
\geq & \frac{1}{1+\xi} \left(\cost_{(1+\eta)t}(Q',Z,w')- \frac{\varepsilon}{10}\cdot \cost_t(Q,Z) \right)- \frac{\varepsilon}{10}\cdot \cost_t(Q,Z).\\
\end{align*}
Since $\xi\leq \varepsilon/10$ and $\varepsilon\in(0,0.5)$, we can conclude that
\begin{align*}
\cost_{(1+\eta)t}(Q',Z,w')\leq (1+\varepsilon)\cdot \cost_t(Q,Z).
\end{align*}

Next, let us focus on proving $(1+\varepsilon)\cdot \cost_t(Q',Z,w')\geq \cost_{(1+\eta)t}(Q,Z)$.
We only need to consider the case when $\cost_t(Q',Z,w')\not=\infty$.
Since $Q'$ has at most $L+1$ different weights, according to Lemma~\ref{lem:opt_halfspace}, there are $m$ sets of assignment half-spaces $\mathcal{H}'^{(0)}=\left\{H'^{(0)}_{(j,j')}\mid j<j'\in[k]\right\},\mathcal{H}'^{(1)}=\left\{H'^{(1)}_{(j,j')}\mid j<j'\in[k]\right\},\cdots,\mathcal{H}'^{(L)}=\left\{H'^{(L)}_{(j,j')}\mid j<j'\in[k]\right\}$ corresponding to $Z$ such that $\forall i\in\{0,1,\cdots,L\}$, $\mathcal{H}'^{(i)}$ is valid for $Q'_i$, and 
\begin{align}
\cost_t(Q',Z,w')=\sum_{i=0}^L \cost(\pi_{Q'_i}),\label{eq:costprime_is_equal_to_costpiprime}
\end{align}
where $\pi_{Q'_i}:Q'_i\rightarrow Z$ is an assignment mapping corresponding to $\mathcal{H}'^{(i)}$.
For $i\in\{0,1,\cdots,L\},P\in\mathcal{P}^I_i$, let $\pi_{P\cap Q'_i}:P\cap Q'_i\rightarrow Z$ be the assignment mapping such that $\forall p\in P\cap Q'_i, \pi_{P\cap Q_i'}(p)=\pi_{Q'_i}(p)$, and let $\hat{\pi}_{P\cap Q'_i}:P\cap Q'_i\rightarrow Z$ be a transferred assignment mapping corresponding to $(\mathcal{H}'^{(i)},B^{P,Z,\mathcal{H}'^{(i)}},\xi,0.5\gamma T_i(o))$, where $B^{P,Z,\mathcal{H}'^{(i)}}=(b_0^{P,Z,\mathcal{H}'^{(i)}},b_1^{P,Z,\mathcal{H}'^{(i)}},\cdots,b_k^{P,Z,\mathcal{H}'^{(i)}})$ is the same as defined in the event $\mathcal{E}(P)$, i.e., $\forall j\in\{0,1,\cdots,k\},b_j^{P,Z,\mathcal{H}'^{(i)}}=\sum_{p\in P\cap Q'_i\cap R_j^{Z,\mathcal{H}'^{(i)}}} w'(p)$, where $(R_0^{Z,\mathcal{H}‘^{(i)}},R_1^{Z,\mathcal{H}‘^{(i)}},\cdots,R_k^{Z,\mathcal{H}‘^{(i)}})$ are regions (Definition~\ref{def:regions}) induced by $\mathcal{H}'^{(i)}$.
Let $\hat{\pi}_{P}:P\rightarrow Z$ also be a transferred assignment mapping corresponding to $(\mathcal{H}'^{(i)},B^{P,Z,\mathcal{H}'^{(i)}},\xi,0.5\gamma T_i(o))$.
%Since $\mathcal{H}'^{(i)}$ is valid for $Q'_i$, $\mathcal{H}'^{(i)}$ is also valid for $P\cap Q'_i$ which implies that $b_0^{P,\mathcal{Z},\mathcal{H}'^{(i)}}=0$.
%According to event $\mathcal{E}(P)$, $|\{p\in P\mid \forall j\in [k],\exists j'\not =j,p\not\in H'^{(i)}_{(j,j')}\}|\leq \xi\cdot 0.5\gamma T_i(o)$ which implies that $\mathcal{H}'^{(i)}$ is $(\xi,0.5\gamma T_i(o))$-almost valid for $P$.

We have:
\begin{align}\label{eq:long_eq2_part1}
\cost_t(Q',Z,w')
=  \sum_{i=0}^L \cost(\pi_{Q'_i})
=  \sum_{i=0}^L \sum_{P\in\mathcal{P}^I_i} \cost(\pi_{P\cap Q'_i}),
\end{align}
where the second step follows from Equation~\eqref{eq:costprime_is_equal_to_costpiprime}.
Notice that when we compute $\cost(\pi_{Q'_i})$ and $\cost(\pi_{P\cap Q'_i})$, each point $p$ has weight $w'(p)$.
For $i\in\{0,1,\cdots,L\}$ and $P\in\mathcal{P}^I$, consider the points sampled from $P$, i.e., $P\cap Q'_i$.
Event $\mathcal{E}(P)$ tells us that $\sum_{p\in P\cap Q'_i} w'(p)\geq 0.9\cdot 0.5\gamma T_i(o)$.
Since all points in $P$ are in the same cell in $G_{i-1}$, we know that $\forall p,q\in P$, $\dist(p,q)\leq \sqrt{d}\cdot 2g_i$ which implies that $\forall p,q\in P\cap Q'_i$, $\dist(p,q)\leq \sqrt{d}\cdot 2g_i$.
Then by Lemma~\ref{lem:transferred_also_good}, we have:
\begin{align}\label{eq:long_eq2_part2}
(1+2^{r+4}k^2\xi)\cdot \cost(\pi_{P\cap Q'_i})\geq \cost(\hat{\pi}_{P\cap Q'_i})-\xi\cdot 2^{r+1} k \cdot 0.5\gamma T_i(o)\cdot (\sqrt{d}\cdot 2g_i)^r.
\end{align}
Since $2^{r+4}k^2\xi\leq \varepsilon/10$ and $\xi\leq \frac{\varepsilon}{2000\cdot 2^{2r}k(k+d^{1.5r})L}$, we have:
\begin{align}\label{eq:long_eq2_part3}
&(1+\varepsilon/10)\cdot \cost_t(Q',Z,w')\notag\\
\geq & (1+\varepsilon/10)\cdot \sum_{i=0}^L\sum_{P\in\mathcal{P}_i^I}\cost(\pi_{P\cap Q'_i}) & \text{(Equation~\eqref{eq:long_eq2_part1})}\notag\\
\geq & \sum_{i=0}^L\sum_{P\in\mathcal{P}_i^I} \left(\cost(\hat{\pi}_{P\cap Q'_i})-\xi\cdot 2^{r+1} k \cdot 0.5\gamma T_i(o)\cdot (\sqrt{d}\cdot 2g_i)^r\right) & \left(2^{r+4}k^2\xi\leq \frac{\varepsilon}{10}\text{ and Equation~\eqref{eq:long_eq2_part2}}\right)\notag\\
\geq & \sum_{i=0}^L\sum_{P\in\mathcal{P}_i^I} \left(\cost(\hat{\pi}_{P\cap Q'_i}) - \frac{\varepsilon}{2000(k+d^{1.5r})L}\cdot T_i(o)\cdot (\sqrt{d}g_i)^r\right) & \left(\xi\leq \frac{\varepsilon}{2000\cdot 2^{2r}k(k+d^{1.5r})L}\right)\notag\\
= &\sum_{i=0}^L\sum_{P\in\mathcal{P}_i^I} \left(\cost(\hat{\pi}_{P\cap Q'_i}) - \frac{\varepsilon}{2000(k+d^{1.5r})L}\cdot 0.01o\right) & \left(T_i(o)=\frac{0.01o}{(\sqrt{d}g_i)^r}\right) \notag\\
= & \sum_{i=0}^L\sum_{P\in\mathcal{P}_i^I} \cost(\hat{\pi}_{P\cap Q'_i}) - \frac{\varepsilon}{2000(k+d^{1.5r})L}\cdot 0.01o \cdot \sum_{i=0}^L |\mathcal{P}_i^I| \notag\\
\overset{(a)}{\geq} &  \sum_{i=0}^L\sum_{P\in\mathcal{P}_i^I} \cost(\hat{\pi}_{P\cap Q'_i}) - \frac{\varepsilon}{10}\cdot o & \left(\sum_{i=0}^L |\mathcal{P}_i^I|\leq 20000(k+d^{1.5r})L\right)\notag\\
\overset{(b)}{\geq} & \sum_{i=0}^L\sum_{P\in\mathcal{P}_i^I} \cost(\hat{\pi}_{P\cap Q'_i}) - \frac{\varepsilon}{10}\cdot \cost_{(1+\eta)t}(Q,Z),
\end{align}
where step (a) follows from $\sum_{i=0}^L |\mathcal{P}_i^I|\leq \sum_{i=0}^L s_i\leq 20000(k+d^{1.5r})L$ which is according to Algorithm~\ref{alg:coreset_construction}, and step (b) follows from $o\leq \OPT^{(r)}_{k\text{-clus}}\leq \cost_{(1+\eta)t}(Q,Z)$.
Consider $i\in\{0,1,\cdots,L\}$ and $P\in\mathcal{P}^I$.
Event $\mathcal{E}(P)$ shows that
\begin{align}\label{eq:long_eq2_part4}
\cost(\hat{\pi}_{P\cap Q'_i})\geq (1-\xi)\cost(\hat{\pi}_P) - \xi |P|(\sqrt{d}\cdot 2g_i)^r.
\end{align}
Since $\xi\leq \frac{\varepsilon}{4000\cdot 2^r(kL+d^{1.5r})L}$,
\begin{align}\label{eq:long_eq2_part5}
&\sum_{i=0}^L\sum_{P\in\mathcal{P}_i^I} \cost(\hat{\pi}_{P\cap Q'_i})\notag\\
\geq & \sum_{i=0}^L\sum_{P\in\mathcal{P}_i^I} \left((1-\xi)\cdot \cost(\hat{\pi}_{P})-\xi |P|(\sqrt{d}\cdot 2g_i)^r\right)& \text{(Equation~\eqref{eq:long_eq2_part4})}\notag\\
\overset{(a)}{\geq} & (1-\xi) \sum_{i=0}^L\sum_{P\in\mathcal{P}_i^I} \cost(\hat{\pi}_{P}) - \sum_{i=0}^L \xi\cdot 20000(kL+d^{1.5r})T_i(o)(\sqrt{d}\cdot 2g_i)^r\notag\\
\geq & (1-\xi) \sum_{i=0}^L\sum_{P\in\mathcal{P}_i^I} \cost(\hat{\pi}_{P}) - \sum_{i=0}^L \frac{5\varepsilon}{L}\cdot T_i(o)\cdot (\sqrt{d}g_i)^r &\left(\xi\leq \frac{\varepsilon}{4000\cdot 2^r(kL+d^{1.5r})L}\right)\notag\\
= & (1-\xi) \sum_{i=0}^L\sum_{P\in\mathcal{P}_i^I} \cost(\hat{\pi}_{P}) - \sum_{i=0}^L \frac{5\varepsilon}{L}\cdot 0.01o & (T_i(o)=0.01o/(\sqrt{d}g_i)^r)\notag\\
\geq &(1-\xi) \sum_{i=0}^L\sum_{P\in\mathcal{P}_i^I} \cost(\hat{\pi}_{P}) - \frac{\varepsilon}{10}\cdot o\notag\\
\overset{(b)}{\geq}& (1-\xi)\sum_{i=0}^L \sum_{P\in\mathcal{P}_i^I}\cost(\hat{\pi}_{P}) -\frac{\varepsilon}{10}\cdot \cost_{(1+\eta)t}(Q,Z),
\end{align}
where step (a) follows from $\sum_{P\in\mathcal{P}_i^I} |P|\leq \sum_{j=1}^{s_i}|Q_{i,j}|\leq 2\cdot 10^4(kL+d^{1.5r})T_i(o)$ and step (b) follows from that $o\leq \OPT^{(r)}_{k\text{-clus}}\leq \cost_{(1+\eta)t}(Q,Z)$.
Consider the total number of points assigned to each center by $\hat{\pi}_P$ for all $P\in\mathcal{P}_i^I$ and all $i\in\{0,1,\cdots,L\}$.
We have:
\begin{align*}
&\left\|\sum_{i=0}^L\sum_{P\in\mathcal{P}_i^I} s(\hat{\pi}_P)-\sum_{i=0}^L s(\pi_{Q_i'})\right\|_{\infty}\\
= & \left\|\sum_{i=0}^L\sum_{P\in\mathcal{P}_i^I} \left(s(\hat{\pi}_P)-s(\pi_{P\cap Q'_i})\right)\right\|_{\infty}\\
\leq & \sum_{i=0}^L\sum_{P\in\mathcal{P}_i^I}\left\| s(\hat{\pi}_P)-s(\pi_{P\cap Q'_i})\right\|_{\infty} & \text{(triangle inequality)}\\
\leq & \sum_{i=0}^L\sum_{P\in\mathcal{P}_i^I}\left(\left\| s(\hat{\pi}_P)-s(\hat{\pi}_{P\cap Q'_i})\right\|_{\infty}+\left\| s(\hat{\pi}_{P\cap Q'_i})-s(\pi_{P\cap Q'_i})\right\|_{\infty}\right) & \text{(triangle inequality)}\\
\leq & \sum_{i=0}^L\sum_{P\in\mathcal{P}_i^I}\left(\left\| s(\hat{\pi}_P)-s(\hat{\pi}_{P\cap Q'_i})\right\|_{1}+\left\| s(\hat{\pi}_{P\cap Q'_i})-s(\pi_{P\cap Q'_i})\right\|_{1}\right) & \left(\forall x\in\mathbb{R}^k, \|x\|_\infty\leq \|x\|_1\right)\\
\leq & \sum_{i=0}^L \sum_{P\in\mathcal{P}_i^I} \xi |P| + \sum_{i=0}^L \sum_{P\in\mathcal{P}_i^I} \left\| s(\hat{\pi}_{P\cap Q'_i})-s(\pi_{P\cap Q'_i})\right\|_{1} & \text{(event $\mathcal{E}(P)$)}\\
\leq & \sum_{i=0}^L \sum_{P\in\mathcal{P}_i^I} \xi |P| + \sum_{i=0}^L \sum_{P\in\mathcal{P}_i^I} 16k\xi \left(\sum_{p\in P\cap Q'_i} w'(p)\right) & \text{(Lemma~\ref{lem:transferred_also_good})}\\
= & \sum_{i=0}^L \sum_{P\in\mathcal{P}_i^I} \xi |P| + \sum_{i=0}^L \sum_{P\in\mathcal{P}_i^I} 16k\xi \left\|s\left(\hat{\pi}_{P\cap Q'_i}\right)\right\|_1 & \left(\left\|s\left(\hat{\pi}_{P\cap Q'_i}\right)\right\|_1 = \sum_{p\in P\cap Q'_i} w'(p)\right)\\
\leq & \sum_{i=0}^L \sum_{P\in\mathcal{P}_i^I} \xi |P| + \sum_{i=0}^L \sum_{P\in\mathcal{P}_i^I} 16k\xi (\|s(\hat{\pi}_{P})\|_1+ \xi |P|) & \text{(event $\mathcal{E}(P)$)}\\
= & \sum_{i=0}^L \sum_{P\in\mathcal{P}_i^I} \xi |P| + \sum_{i=0}^L \sum_{P\in\mathcal{P}_i^I} 16k\xi (1+ \xi )|P| & (\|s(\hat{\pi}_P)\|_1=|P|)\\
\leq & \xi |Q^I| + 16k\xi(1+\xi) |Q^I|\leq \eta t/4 &\left(\xi\leq \frac{\eta}{160k^2},Q^I\subseteq Q,t\geq \frac{|Q|}{k}\right).
\end{align*}
Since $\|\sum_{i=0}^L s(\pi_{Q'_i})\|_{\infty}\leq t$, we know that
\begin{align}\label{eq:long_eq2_part6}
    \sum_{i=0}^L \sum_{P\in\mathcal{P}_i^I}\cost(\hat{\pi}_P)\geq \cost_{(1+\eta/4)t}(Q^I,Z).
\end{align}
Notice that $\forall i\in\{0,1,\cdots,L\},j\in[s_i]$ with $Q_{i,j}\not\in \mathcal{P}_i^I$, we know that $|Q_{i,j}|\leq 2\gamma T_i(o)$ due to Algorithm~\ref{alg:coreset_construction}.
Because $\gamma\leq \min\left(\frac{\eta}{80\cdot 2^rkL},\frac{\varepsilon}{40000\cdot 2^{2r}(k+d^{1.5r})L}\right)$ and we can apply Lemma~\ref{lem:small_parts_removal}, we have
\begin{align*}
\cost_{(1+\eta/4)t}(Q^I,Z)\geq \cost_{(1+\eta/4)^2t}(Q,Z)/(1+\varepsilon/10)\geq \cost_{(1+\eta)t}(Q,Z)/(1+\varepsilon/10),
\end{align*}
where the last step follows from $(1+\eta/4)^2\leq 1+\eta$.
By Equation~\eqref{eq:long_eq2_part6}, we have:
\begin{align*}
\sum_{i=0}^L\sum_{P\in \mathcal{P}_i^I}\cost(\hat{\pi}_P)\geq \cost_{(1+\eta)t}(Q,Z)/(1+\varepsilon/10).
\end{align*}
By Equation~\eqref{eq:long_eq2_part5}, we have:
\begin{align*}
\sum_{i=0}^L\sum_{P\in\mathcal{P}_i^I} \cost(\hat{\pi}_{P\cap Q'_i}) \geq (1-\xi) \cost_{(1+\eta)t}(Q,Z)/(1+\varepsilon/10) - \frac{\varepsilon}{10}\cdot \cost_{(1+\eta)t}(Q,Z).
\end{align*}
By Equation~\eqref{eq:long_eq2_part3}, we have:
\begin{align*}
(1+\varepsilon/10)\cdot \cost_t(Q',Z,w') \geq (1-\xi) \cost_{(1+\eta)t}(Q,Z)/(1+\varepsilon/10) - 2\cdot \frac{\varepsilon}{10}\cdot \cost_{(1+\eta)t}(Q,Z).%-\frac{\varepsilon}{10}\cdot \cost_{(1+\eta)t}(Q,Z).
\end{align*}
Since $\xi\leq \varepsilon/10$ and $\varepsilon\leq 0.5$, we can reorder the terms in the above equation to conclude that
\begin{align*}
\cost_{(1+\eta)t}(Q,Z)\leq (1+\varepsilon)\cost_t(Q',Z,w').
\end{align*}

\end{proof}

Next, we consider the success probability and the size of the coreset.
As shown in Lemma~\ref{lem:number_of_center_cells}, $\mathcal{F}$ happens with probability at least $0.99$, i.e., with high probability, there should not be too many center cells.
Furthermore, as explained before previous lemma, we can suppose all the estimated sizes $\tau(C\cap Q),~\tau\left(\bigcup_j^{s_i}Q_{i,j}\right)$ and $\tau(Q_{i,j})$ are good estimations to $|C\cap Q|,~\left|\bigcup_{j=1}^{s_i} Q_{i,j}\right|$ and $|Q_{i,j}|$.
Condition on these events, Algorithm~\ref{alg:coreset_construction} does not output FAIL, and with high probability, the size of the outputted coreset is small.
\begin{lemma}[Success probability and the size]\label{lem:offline_success_prob}
Suppose the following conditions are satisfied:
\begin{enumerate}
    \item $\mathcal{F}$ happens (Lemma~\ref{lem:number_of_center_cells}),
    \item $\OPT^{(r)}_{k\text{-clus}}/10\leq o\leq \OPT^{(r)}_{k\text{-clus}}$,
     \item All estimated size $\tau\left(\bigcup_{j=1}^{s_i} Q_{i,j}\right)$ in Algorithm~\ref{alg:coreset_construction} and $\tau(C\cap Q)$ in Algorithm~\ref{alg:heavy_light} called by Algorithm~\ref{alg:coreset_construction} are good (Definition~\ref{def:good_estimation_alg2}, Definition~\ref{def:good_estimation_alg1}).
%    \item when calling Algorithm~\ref{alg:heavy_light} in line~\ref{sta:call_alg1} of Algorithm~\ref{alg:coreset_construction},  the estimated size $\tau(C\cap Q)$ in line~\ref{sta:estimated_size} of Algorithm~\ref{alg:heavy_light} always satisfies either $\tau(C\cap Q)\in |C\cap Q|\pm 0.1T_i(o)$ or $\tau(C\cap Q)\in (1\pm 0.01)\cdot |C\cap Q|$,
%    \item $\forall i\in \{0,1,\cdots,L\}$, the estimated size $\tau\left(\bigcup_{j=1}^{s_i} Q_{i,j}\right)$ in Algorithm~\ref{alg:coreset_construction} satisfies either $\tau\left(\bigcup_{j=1}^{s_i} Q_{i,j}\right)\in \sum_{j=1}^{s_i}|Q_{i,j}| \pm 0.1 T_i(o)$ or $\tau\left(\bigcup_{j=1}^{s_i} Q_{i,j}\right)\in (1\pm 0.1)\sum_{j=1}^{s_i}|Q_{i,j}|$.
\end{enumerate}
Algorithm~\ref{alg:coreset_construction} does not return FAIL, and with probability at least $0.99$, 
\begin{align*}
%|Q'|\leq O(k^{11}d^{16}L^{10}\log(kdL)\min(\varepsilon,\eta)^{-4}).
|Q'|\leq \frac{8\cdot 10^{12}\cdot 2^{10(r+10)}rk^6d(k+d^{1.5r})^5L^{10}\log(kdL)}{\min(\varepsilon,\eta)^4}.
\end{align*}
\end{lemma}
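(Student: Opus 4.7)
The plan is to verify the two claims of the lemma separately: that Algorithm~\ref{alg:coreset_construction} never triggers FAIL under the stated assumptions, and that $|Q'|$ satisfies the stated size bound with probability at least $0.99$.

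For the first claim, the FAIL on line~\ref{sta:return_fail_in_alg1} is ruled out directly by Lemma~\ref{lem:num_heavy_cells}: under event $\mathcal{F}$, good estimates of $\tau(C\cap Q)$, and $o\leq \OPT^{(r)}_{k\text{-clus}}$, that lemma yields
\[
\sum_{i=0}^L s_i \;\leq\; 2000(k+d^{1.5r})L\cdot \frac{\OPT^{(r)}_{k\text{-clus}}}{o} \;\leq\; 20000(k+d^{1.5r})L,
\]
where the last step uses $o\geq \OPT^{(r)}_{k\text{-clus}}/10$. The FAIL on line~\ref{sta:return_fail_in_alg2} follows by combining the same heavy-cell count with a per-cell bound on crucial cells: any crucial cell $\hat C\in G_i$ satisfies $\tau(\hat C\cap Q)<T_i(o)$, so by good estimation $|\hat C\cap Q|<1.1T_i(o)$, and $\bigcup_{j=1}^{s_i}Q_{i,j}$ consists only of points in such crucial cells whose heavy parent lies in $G_{i-1}$; combined with goodness of the total estimate $\tau\bigl(\bigcup_j Q_{i,j}\bigr)$, this yields $\tau\bigl(\bigcup_j Q_{i,j}\bigr)\leq 10000(kL+d^{1.5r})T_i(o)$.

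For the second claim, I would bound $\E[|Q'|]$ level by level and then apply Markov's inequality. At level $i$, line~\ref{sta:lb_of_each_final_part} retains only parts with $\tau(Q_{i,j})\geq \gamma T_i(o)$, so $\sum_{P\in\mathcal{P}^I_i}|P|\leq \sum_{j=1}^{s_i}|Q_{i,j}|$, which by the argument above (the threshold that just failed to trigger line~\ref{sta:return_fail_in_alg2}) is at most $O((kL+d^{1.5r})T_i(o))$. Since in line~\ref{sta:sampling_step} each surviving point is retained with probability $\phi_i\leq 2^{2(r+10)}\lambda/(\xi^3\gamma T_i(o))$, we get
\[
\E\bigl[|Q'_i|\bigr] \;\leq\; \phi_i\cdot\sum_{P\in\mathcal{P}^I_i}|P| \;\leq\; O\!\left(\frac{2^{2(r+10)}\lambda(kL+d^{1.5r})}{\xi^3\gamma}\right).
\]
Summing over $i\in\{0,1,\cdots,L\}$ introduces an additional factor of $L+1$. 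Substituting $\lambda=10^6rk^3dL\lceil\log(kdL)\rceil$, $1/\xi^3\leq O(2^{6(r+10)}k^3(k+d^{1.5r})^3L^6/\min(\varepsilon,\eta)^3)$, and $1/\gamma\leq 2^{2(r+10)}(k+d^{1.5r})L/\min(\varepsilon,\eta)$ and collecting all factors gives
\[
\E[|Q'|] \;=\; O\!\left(\frac{2^{10(r+10)}rk^6d(k+d^{1.5r})^5L^{10}\log(kdL)}{\min(\varepsilon,\eta)^4}\right),
\]
which matches the stated numerator up to constants. A Markov bound with slack factor $100$ then promotes this expectation bound into the high-probability statement with probability at least $0.99$, finishing the proof.

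The main obstacle is essentially careful bookkeeping: tracking the exponents of $r$, $k$, $d$, $L$, $\log(kdL)$, and $1/\min(\varepsilon,\eta)$ coming from $\lambda$, $\xi$, and $\gamma$, and verifying that their product with the per-level bound $(kL+d^{1.5r})$ and the extra factor $L$ from the level sum produces precisely the stated $rk^6d(k+d^{1.5r})^5L^{10}\log(kdL)/\min(\varepsilon,\eta)^4$ polynomial. The only non-mechanical step is confirming the crucial-cell count supporting line~\ref{sta:return_fail_in_alg2}; once that is in hand, the rest is an expectation computation followed by Markov's inequality.
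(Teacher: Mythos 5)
Your argument for the FAIL on line~\ref{sta:return_fail_in_alg1} is exactly the paper's, and your Markov/bookkeeping for the size bound is also in line with the paper (the paper writes $\E[|Q'_i|] \leq \phi_i \sum_{j=1}^{s_i}|Q_{i,j}|$, plugs in $\phi_i\leq 2^{2(r+10)}\lambda/(\xi^3\gamma T_i(o))$ and the per-level bound on $\sum_j|Q_{i,j}|$, and then applies Markov with slack $100$ — same arithmetic as yours). The place where your proof has a genuine gap is the step you flagged as "the only non-mechanical step": ruling out FAIL on line~\ref{sta:return_fail_in_alg2}.

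Your argument there is "each crucial cell in $G_i$ has at most $1.1T_i(o)$ points, and the $Q_{i,j}$ live inside crucial cells whose heavy parent is in $G_{i-1}$," but this does not give the claimed bound. The number of crucial cells in $G_i$ is not controlled by $\sum_i s_i$: a single heavy cell in $G_{i-1}$ has $2^d$ children in $G_i$, each of which could be crucial, so a naive per-cell bound yields $\sum_{j=1}^{s_i}|Q_{i,j}|\lesssim 2^d\cdot s_i\cdot T_i(o)$, which carries an extra $2^d$ factor that the target bound $4000(kL+d^{1.5r})T_i(o)$ does not permit. The paper's proof avoids this by splitting crucial cells in $G_i$ into \emph{center cells} (those within $g_i/d$ of the optimal centers $Z^*$) and \emph{non-center cells}. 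For center cells, the event $\mathcal{F}$ gives a global bound of $2000kL$ on their number, and each crucial center cell contributes at most $1.1T_i(o)$ points, for a total of $2200kL\cdot T_i(o)$. For non-center cells, no count is needed at all: every point $p$ in such a cell has $\dist^r(p,Z^*)\geq (g_i/d)^r$, so their total number is at most $\OPT^{(r)}_{k\text{-clus}}/(g_i/d)^r \leq 1000 d^{1.5r}T_i(o)$ (using $o\geq\OPT^{(r)}_{k\text{-clus}}/10$ and $T_i(o)=0.01o/(\sqrt{d}g_i)^r$). Adding these two pieces gives $\sum_{j=1}^{s_i}|Q_{i,j}|\leq 4000(kL+d^{1.5r})T_i(o)$, after which goodness of the estimate implies $\tau(\bigcup_j Q_{i,j})\leq 10000(kL+d^{1.5r})T_i(o)$. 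Without this center/non-center decomposition, your first claim does not close, and since your bound on $\E[|Q'_i|]$ is routed through "the threshold that just failed to trigger line~\ref{sta:return_fail_in_alg2}", the gap propagates into your size argument as well.
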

\begin{proof}
By Lemma~\ref{lem:num_heavy_cells}, since event $\mathcal{F}$, $\sum_{i=0}^L s_i\leq 2000(k+d^{1.5r})L\cdot 10\leq 20000(k+d^{1.5r})L$ which implies that Algorithm~\ref{alg:coreset_construction} does not return FAIL in line~\ref{sta:return_fail_in_alg1}.

Recall that $Z^*\subset [\Delta]^d$ with $|Z^*|\leq k$ is the optimal solution of the standard $\ell_r$ $k$-clustering problem of $Q$, i.e., $\cost(Q,Z^*)=\OPT^{(r)}_{k\text{-clus}}$, and a cell $C\in G_i$ a center cell if $\dist(C,Z^*)\leq g_i/d$.
%By Lemma 14 of~\cite{hsyz18} or a similar argument of Lemma 2.2 of~\cite{bflsy17}, with probability at least $0.94$ the total number of center cells among all grids is at most $100kL$.
Due to $\mathcal{F}$, the total number of center cells is at most $2000kL$.
%In the following of the proof, we condition on that the number of center cells is at most $100kL$.
Let us consider $\sum_{j=1}^{s_i} |Q_{i,j}|$ for an arbitrary $i\in\{0,1,\cdots, L\}$.
We have:
\begin{align*}
\sum_{j=1}^{s_i} |Q_{i,j}|&=\sum_{C\in G_i:~C\text{ is crucial}}|C\cap Q|\\
&=\sum_{C\in G_i:~C\text{ is crucial, and is a center cell}}|C\cap Q| + \sum_{C\in G_i:~C\text{ is crucial, but is not a center cell}} |C\cap Q|\\
 &\leq 2000kL\cdot 1.1T_i(o) + \sum_{C\in G_i:~C\text{ is crucial, but is not a center cell}} |C\cap Q|\\
 &\leq 2000kL\cdot 1.1T_i(o) + \frac{\OPT^{(r)}_{k\text{-clus}}}{(g_i/d)^r}\\
 &\leq 2000kL\cdot 1.1T_i(o) + 100 d^{1.5r} T_i(o)\cdot \frac{\OPT^{(r)}_{k\text{-clus}}}{o}\\
 &\leq 2000kL\cdot 1.1T_i(o) + 1000 d^{1.5r} T_i(o) \\
  &\leq 4000 (kL+d^{1.5r})T_i(o),
\end{align*}
where the first step follows from the construction of $Q_{i,j}$, the third step follows from that the number of center cells is at most $2000kL$ and each crucial cell has at most $1.1T_i(o)$ points, the forth step follows from that each point $p$ in the non-center cell has distance to $Z^*$ at least $g_i/d$, the fifth step follows from the definition of $T_i(o)$, and the sixth step follows from $o\geq \frac{\OPT^{(r)}_{k\text{-clus}}}{10}$.
Thus, $\tau\left(\bigcup_{j=1}^{s_i} Q_{i,j}\right)\leq 10000(kL+d^{1.5r})T_i(o)$ which implies that Algorithm~\ref{alg:coreset_construction} does not return FAIL in line~\ref{sta:return_fail_in_alg2}.

Let us analyze the size of the coreset. 
We have
\begin{align*}
&\E[|Q'|] = \sum_{i=0}^L \E[|Q'_i|] \leq \sum_{i=0}^L \phi_i\sum_{j=1}^{s_i}|Q_{i,j}|\leq \sum_{i=0}^L \phi_i\cdot 4000(kL+d^{1.5r})T_i(o)\\
&\leq \frac{8\cdot 10^{10}\cdot 2^{10(r+10)}rk^6d(k+d^{1.5r})^5L^{10}\log(kdL)}{\min(\varepsilon,\eta)^4},
\end{align*}
where the last step follows from 
\begin{align*}
\phi_i\leq \frac{10^{7}\cdot 2^{10(r+10)}\cdot k^6d(k+d^{1.5r})^4L^8\log(kdL)}{\min(\varepsilon,\eta)^4T_i(o)}.
\end{align*}
By Markov's inequality, with probability at least $0.99$,
\begin{align*}
|Q'| \leq \frac{8\cdot 10^{12}\cdot 2^{10(r+10)}rk^6d(k+d^{1.5r})^5L^{10}\log(kdL)}{\min(\varepsilon,\eta)^4}.
\end{align*}
\end{proof}

The only thing remaining is to find a suitable parameter $o$ for Algorithm~\ref{alg:coreset_construction}.
Actually, we can enumerate $o$ exponentially and thus there must be some $o$ which is a good choice.
For the good choice of $o$, we can output the coreset with high probability.
Thus, we can conclude the following theorem.
\begin{theorem}[Offline algorithm]\label{thm:offline}
 Consider a point set $Q\subseteq [\Delta]^d$ which contains $n$ points and parameters $k\in\mathbb{Z}_{\geq 1},\varepsilon,\eta\in (0,0.5)$. 
 %Suppose there is an algorithm which can give a $10$-approximation to the $\ell_r$ $k$-clustering for $Q$ in time $\mathcal{T}$ for some constant $r\geq 1$.
 For constant $r\geq 1$, there is a randomized algorithm which outputs a subset of points $Q'\subseteq Q$ and weights $w':Q'\rightarrow \mathbb{R}_{>0}$ in time $O(nd\log^2(nd\Delta))$ such that with probability at least $0.9$,
\begin{enumerate}
\item $\forall t\geq n/k,Z\subset [\Delta]^d$ with $|Z|= k$, 
\begin{align*}
\begin{array}{ccc}
\cost_{(1+\eta)t}(Q,Z) \leq (1+\varepsilon)\cost_t(Q',Z,w')    &  \text{and} & \cost_{(1+\eta)t}(Q',Z,w')\leq (1+\varepsilon)\cost_t(Q,Z),
\end{array}
\end{align*}
\item $|Q'|\leq \poly(\varepsilon^{-1}\eta^{-1}kd\log \Delta)$,
%\item $\forall p\in Q'$, $w'(p)\leq \eta\cdot \frac{ n}{k^2}$.
\end{enumerate}
\end{theorem}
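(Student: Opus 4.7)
The plan is to invoke Algorithm~\ref{alg:coreset_construction} with an appropriate guess $o$ of $\OPT^{(r)}_{k\text{-clus}}$ and then combine Lemma~\ref{lem:correctness_offline} with Lemma~\ref{lem:offline_success_prob}. Two loose ends remain: (i) the algorithm assumes access to a guess $o \leq \OPT^{(r)}_{k\text{-clus}}$, and (ii) both lemmas assume the size estimates $\tau(\cdot)$ are good. In the offline setting (ii) is free: after bucketing every point into its enclosing cell at every level (by hashing its coordinates) in total time $O(ndL)$, we can store exact counts $|C\cap Q|$, $|\bigcup_j Q_{i,j}|$, $|Q_{i,j}|$, so the goodness conditions of Definitions~\ref{def:good_estimation_alg1} and~\ref{def:good_estimation_alg2} hold deterministically.

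To deal with (i), I would enumerate the guess $o$ over the geometric sequence $\{1,2,4,\ldots,\Delta^d(\sqrt{d}\Delta)^r\}$, a list of $M=O(d\log(d\Delta))$ values, which is guaranteed to contain some $o^*\in[\OPT^{(r)}_{k\text{-clus}}/10,\,\OPT^{(r)}_{k\text{-clus}}]$. For each $o$ on this list I would execute Algorithm~\ref{alg:coreset_construction} with fresh randomness, obtaining either FAIL or a candidate coreset of size at most the polynomial bound from Lemma~\ref{lem:offline_success_prob}, and then output the coreset produced by the smallest $o$ that neither returned FAIL nor exceeded that size bound. Because $o^*$ neither fails nor violates the size bound with high probability, the selected guess $\hat{o}$ satisfies $\hat{o}\leq o^*\leq \OPT^{(r)}_{k\text{-clus}}$, which is precisely the hypothesis that Lemma~\ref{lem:correctness_offline} needs for correctness.

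For the probability accounting, I would first condition on the event $\mathcal{F}$ of Lemma~\ref{lem:number_of_center_cells}, which holds with probability at least $0.99$. Given $\mathcal{F}$, Lemma~\ref{lem:offline_success_prob} applied to $o^*$ ensures no-FAIL and the size bound each with probability $\geq 0.99$. Independently, for each candidate $o$ on the list, Lemma~\ref{lem:correctness_offline} guarantees the coreset property with probability $\geq 0.99$; boosting this per-guess failure probability to $1/(100M)$ (either by increasing the independence parameter $\lambda$ in Algorithm~\ref{alg:coreset_construction} by a factor $O(\log M)$, which only polylogarithmically affects the coreset size, or by a standard independent repetition) allows me to union-bound correctness simultaneously over all $M$ guesses. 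A final union bound over $\mathcal{F}$, over $o^*$'s no-FAIL-and-small-size event, and over the amplified simultaneous correctness yields an overall success probability of at least $0.9$.

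The main obstacle is purely organizational — picking a rule for selecting $\hat{o}$ that forces $\hat{o}\leq \OPT^{(r)}_{k\text{-clus}}$ without knowing $\OPT^{(r)}_{k\text{-clus}}$; choosing the smallest non-FAIL guess and separately amplifying the per-guess correctness handles this cleanly. For the running time, the one-time cell bucketing costs $O(ndL)$; for each of the $M$ guesses, Algorithm~\ref{alg:coreset_construction} inspects each point a constant number of times across the $L+1$ levels and samples according to the $\lambda$-wise independent hash function (which can be evaluated in $O(\lambda)$ time per point using a degree-$\lambda$ polynomial), giving per-guess cost $\widetilde{O}(nd)$. Summing over $M=O(d\log(d\Delta))$ guesses yields the claimed total running time $O(nd\log^2(nd\Delta))$.
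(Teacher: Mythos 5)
Your proposal follows the same high-level route as the paper: enumerate $o$ geometrically, pick the smallest non-failing guess $\hat o$, then combine Lemma~\ref{lem:offline_success_prob} (which guarantees that some $o^*\in[\OPT^{(r)}_{k\text{-clus}}/10,\OPT^{(r)}_{k\text{-clus}}]$ does not FAIL, hence $\hat o\leq o^*\leq\OPT^{(r)}_{k\text{-clus}}$) with Lemma~\ref{lem:correctness_offline}. The identification that exact offline counts make the goodness conditions deterministic is also the intended reading. Two remarks, one on correctness bookkeeping and one on the running time claim.

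First, the amplification step over all $M$ guesses is not needed, and one of the two amplification mechanisms you offer does not work. In the offline setting the FAIL checks (lines~\ref{sta:return_fail_in_alg1} and~\ref{sta:return_fail_in_alg2} of Algorithm~\ref{alg:coreset_construction}) depend only on the randomly shifted grid and on the exact counts, not on the sampling hash functions $\hat h_i$ in line~\ref{sta:sampling_step}. Thus if you share a single grid shift across all guesses and defer the sampling until after $\hat o$ has been selected, then conditioned on the grid shift (and on $\mathcal F$) the value $\hat o$ is deterministic; Lemma~\ref{lem:correctness_offline} then applies to this one fixed $\hat o$ with fresh sampling randomness, and no union bound over guesses is required. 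The $\lambda$-inflation you propose does boost Lemma~\ref{lem:correctness_offline}'s success probability and would repair the fresh-randomness-per-guess variant; but ``a standard independent repetition'' does not, because there is no way to test which repetition produced a valid coreset without access to $Q$ and a verification oracle for the (exponentially many) $(t,Z)$ pairs. You should drop the repetition alternative.

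Second, your enumeration range $\{1,2,\dots,\Delta^d(\sqrt d\Delta)^r\}$ gives $M=O(d\log(d\Delta))$ and hence a total cost of $O(ndL)\cdot O(d\log(d\Delta))=O(nd^2\log\Delta\log(d\Delta))$, which does not match the claimed $O(nd\log^2(nd\Delta))$ when $d$ is large. Since $|Q|=n$, one has $\OPT^{(r)}_{k\text{-clus}}\leq n(\sqrt d\Delta)^r$, so it suffices to enumerate $o\in\{1,2,\dots,n(\sqrt d\Delta)^r\}$, giving $M=O(\log(nd\Delta))$ for constant $r$ and the stated bound. This is the range the paper uses.
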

\begin{proof}
Algorithm~\ref{alg:coreset_construction} needs a parameter $o$ which is an approximation of $\OPT^{(r)}_{k\text{-clus}}$.
We can enumerate all possible $o\in\left\{1,2,4,\cdots,n\cdot\left(\sqrt{d}\Delta\right)^r\right\}$.
We choose the smallest $o$ such that Algorithm~\ref{alg:coreset_construction} does not output FAIL. 

Let us consider the running time of Algorithm~\ref{alg:coreset_construction}.
In line~\ref{sta:call_alg1} of Algorithm~\ref{alg:coreset_construction}, we call Algorithm~\ref{alg:heavy_light}.
In Algorithm~\ref{alg:heavy_light}, for each $p\in Q$, we can update the number of points in $c_i(p)$ for $i\in\{-1,0,1,\cdots,L\}$.
Then, for each $p\in Q$, we can check whether $c_i(p)$ is heavy or not for $i\in\{-1,0,1,\cdots,L\}$.
Thus, the total running time of Algorithm~\ref{alg:heavy_light} is $O(ndL)$.
In Algorithm~\ref{alg:coreset_construction}, for each point $p\in Q$ we should find the level $i\in\{0,1,\cdots,L\}$ such that $c_i(p)$ is crucial which takes $O(dL)$ time.
To conclude, the total running time of Algorithm~\ref{alg:coreset_construction} is $O(ndL)$.
Thus, the overall running time is $O(ndL)\cdot \log(n\cdot (\sqrt{d}\Delta)^r)=O(nd\log^2(nd\Delta))$.

Let us consider the correctness. 
According to Lemma~\ref{lem:number_of_center_cells}, with probability at least $0.99$, $\mathcal{F}$ happens.
According to Lemma~\ref{lem:offline_success_prob}, if $\OPT^{(r)}_{k\text{-clus}}/10\leq o\leq \OPT^{(r)}_{k\text{-clus}}$, Algorithm~\ref{alg:coreset_construction} does not output FAIL.
Thus we can find an $o\leq \OPT^{(r)}_{k\text{-clus}}$ such that Algorithm~\ref{alg:coreset_construction} does not output FAIL with probability at least $0.99$.
If Algorithm~\ref{alg:coreset_construction} does not output FAIL and $o\leq \OPT^{(r)}_{k\text{-clus}}$,
then, by Lemma~\ref{lem:correctness_offline}, with probability at least $0.99$, $\forall t\geq n/k,Z\subset [\Delta]^d$ with $|Z|= k$, 
\begin{align*}
\begin{array}{ccc}
\cost_{(1+\eta)t}(Q,Z) \leq (1+\varepsilon)\cost_t(Q',Z,w')    &  \text{and} & \cost_{(1+\eta)t}(Q',Z,w')\leq (1+\varepsilon)\cost_t(Q,Z),
\end{array}
\end{align*}
and furthermore, according to Lemma~\ref{lem:offline_success_prob}, with probability at least $0.99$, $|Q'|\leq \poly(\varepsilon^{-1}\eta^{-1}kdL)$.
\end{proof}

\subsection{Assignment Construction via Coreset}\label{sec:assignment_via_coreset}
In classic $k$-clustering problem, once centers are determined, each point should be assigned to the closest center.
But in capacitated $k$-clustering problem, even if the centers are determined, it is non trivial to assign points to centers.
In this section, we will discuss how to construct a good assignment for the input point set $Q$ given $k$ centers $Z=\{z_1,z_2,\cdots,z_k\}$ and the coreset $(Q',w')$ obtained by our construction.

Firstly, given a capacity $t'\geq \frac{1}{k}\cdot \max(\sum_{q\in Q'} w'(q),|Q|)$, we want to find an assignment $\pi':Q'\rightarrow Z$ such that $\cost(\pi')\leq (1+\varepsilon)\cost_{t'}(Q',Z,w')$ and $\|s(\pi')\|_{\infty}\leq (1+\eta)t'$. 
  Given centers $Z$, finding an assignment satisfying the capacity constraint for weighted points in general is NP-hard since we can reduce bin packing problem to the such feasibility problem.
 If we relax the problem to the fractional version, i.e., the weight of a point can be split to multiple centers, then the optimal assignment for the relaxed problem can be solved by the minimum-cost flow~\cite{bblm14}. 
 Given a fractional assignment, we can use the following way to reduce the number of points of which weight is split to multiple centers:
 \begin{enumerate}
     \item Build a bipartite graph as the following: create a vertex for each point and each center, and add an edge between a point vertex and a center vertex if there is a non-zero fraction of the weight of the point assigned to the center.
     \item Find an arbitrary simple cycle in the bipartite graph. 
     If there is no cycle, finish the procedure.
     Suppose the cycle corresponds to points $p_1,p_2,\cdots,p_m$ and centers $z_1,z_2,\cdots,z_m$ where $p_i$ connects to both $z_i$ and $z_{(i\bmod m) + 1}$. 
     Notice that $\sum_{i=1}^m \dist^r(p_i,z_i)=\sum_{i=1}^m \dist^r(p_i,z_{(i\bmod m)+1})$ since the given fractional assignment is optimal.
     \item Suppose $a$ is the minimum value of the weight assigned from $p_i$ to $z_i$ for all $i\in[m]$.
     For each $p_i$, move $a$ weights from $z_i$ to $z_{(i\bmod m)+1}$.
     \item Repeat above steps for the new fractional assignment.
 \end{enumerate}
 In each iteration of the above procedure, we can remove an edge between a point and a center.
 Thus, it only takes polynomial running time.
 Since at the end of the above procedure there is no cycle in the constructed bipartite graph, the number of points of which weight is split to multiple centers is at most $k-1$.
 For each of the $k-1$ points, we modify its assignment to make all of its weight assigned to the closest center.
 Thus, we can obtain an integral assignment $\pi':Q'\rightarrow Z$.
 Furthermore, we know that $\|s(\pi')\|_{\infty}\leq t'+(k-1)\cdot \max_{p\in Q'} w'(p)$.
 For $p\in Q'$, by algorithm~\ref{alg:coreset_construction}, if $p\in Q_{i,j}$, then $|Q_{i,j}|\geq 0.5\gamma T_i(o)$ and $w'(p)\leq \xi^3\gamma T_i(o)$ which implies that $w'(p)\leq \eta|Q|/k^2$ (due to the choice of $\xi$).
 Therefore, we can conlude that 
 \begin{align*}
 \|s(\pi')\|_{\infty}\leq (1+\eta)t',
 \end{align*}
 and 
 \begin{align*}
 \cost(\pi')\leq \cost_{t'}(Q',Z,w').
 \end{align*}
 
 Notice that $\pi'$ may not be represented by a small number of sets of assignment half-spaces.
 Thus, we need to apply the switching argument similar to the proof of Lemma~\ref{lem:opt_halfspace} to modify $\pi'$.
 The modification of $\pi'$ can be done by the following procedure:
 \begin{enumerate}
\item For each $Q'_i$ (see Algorithm~\ref{alg:coreset_construction}) do the following:
\begin{enumerate}
    \item Let $\pi'_{Q'_i}:Q'_i\rightarrow Z$ be the assignment mapping satisfying $\forall p\in Q'_i,\pi'_{Q'_i}(p)=\pi'(p)$.
    \item Since points in $Q'_i$ have the same weight, we can use minimum-cost flow to find an assignment mapping $\tilde{\pi}_{Q'_i}:Q'_i\rightarrow Z$ such that $s(\tilde{\pi}_{Q'_i})=s(\pi'_{Q'_i})$ and $\cost(\tilde{\pi}_{Q'_i})$ is minimized.
    \item If exists $p,q\in Q'_i,$ such that $\tilde{\pi}_{Q'_i}(p)=z_j$, $\tilde{\pi}_{Q'_i}(q)=z_{j'}$ $(j<j')$, $\dist^r(q,z_{j})-\dist^r(q,z_{j'})=\dist^r(p,z_{j})-\dist^r(p,z_{j'})$ and the alphabetic order of $q$ is smaller than $p$ (due to the optimality of $\tilde{\pi}_{Q'_i}$, $\dist^r(q,z_{j})-\dist^r(q,z_{j'})<\dist^r(p,z_{j})-\dist^r(p,z_{j'})$ can never happen),  % (or $\dist^r(q,z_{j})-\dist^r(q,z_{j'})=\dist^r(p,z_{j})-\dist^r(p,z_{j'})$ and $q$ is smaller than $p$ in the alphabetic order), 
    then switch the assigned center of $p$ and the assigned center of $q$, i.e., $\tilde{\pi}_{Q'_i}(q)\gets z_j,\tilde{\pi}_{Q'_i}(p)\gets z_{j'}$.\label{it:switching}
    \item Repeat the above step until no switching happens.
    Let $\pi''_{Q'_i}:Q'_i\rightarrow Z$ be the final $\tilde{\pi}_{Q'_i}$ after all switching.
\end{enumerate}
\item Let $\pi'':Q'\rightarrow Z$ satisfy $\forall i\in\{0,1,\cdots,L\},p\in Q_i$, $\pi''(p)=\pi''_{Q'_i}(p)$.
 \end{enumerate}
 Consider step~\ref{it:switching}. 
 After each switching, $\forall l\in [s(\tilde{\pi}_{Q'_i})_j]$, the alphabetic order of the point assigned to $z_j$ with the $l$-th smallest alphabetic order can not increase.
 Thus, the total running time of the above procedure can be done in polynomial time.
%Let $\pi'':Q'\rightarrow Z$ be the assignment mapping at the end of the procedure.
Consider the properties of $\pi''$.
It is easy to see that $\cost(\pi'')=\sum_{i=0}^L\cost(\tilde{\pi}_{Q'_i})\leq \cost(\pi')$ and $s(\pi'')=\sum_{i=0}^L s(\tilde{\pi}_{Q'_i})=s(\pi')$.
Furthermore, by Defition~\ref{def:assignment_halfspace}, for each $i\in\{0,1,\cdots,L\}$, we can compute a set of assignment half-spaces $\mathcal{H}^i=\{H^i_{(j,j')}\mid j<j'\}$ such that $\forall p\in Q'_i$, $\pi''(p)=z_j$ if and only if $\forall j'\not=j, p\in H^i_{(j,j')}$.

It is enough to construct an assignment mapping $\pi:Q\rightarrow Z$ for the original point set.
For each $i\in \{0,1,\cdots,L\}$, for each $P\in\mathcal{P}_i^I$ (see Algorithm~\ref{alg:coreset_construction} for $\mathcal{P}_i^I$), we can construct a transferred assignment mapping $\pi_{P}:P\rightarrow Z$ corresponding to $(\mathcal{H}^i,B^{P,i},\xi,0.5\gamma T_i(o))$, where $B^{P,i}=(b_0^{P,i},b_1^{P,i},\cdots,b_k^{P,i})$ and $b_j^{P,i}=\sum_{p\in P\cap Q'_i:\forall j'\not=j, p\in H^i_{(j,j')}} w'(p)$.
According to the proof of Theorem~\ref{thm:offline}, we can condition on that $o\leq \OPT^{(r)}_{k\text{-clus}}$.
Similar to the proof of Lemma~\ref{lem:correctness_offline}, condition on $\mathcal{E}(P)$ for all $P$, we can show that
\begin{align*}
\sum_{i=0}^L\sum_{P\in\mathcal{P}_i^I}\sum_{p\in P} \dist^r(p,\pi_P(p))\leq (1+O(\varepsilon)) \sum_{p\in Q'} w'(p)\cdot \dist^r(p,\pi''(p)),
\end{align*} 
and
\begin{align*}
\left\|\sum_{i=0}^L\sum_{p\in\mathcal{P}_i^I}s(\pi_P)\right\|_{\infty}\leq (1+O(\eta))\cdot \|s(\pi'')\|_{\infty}.
\end{align*}
Then we construct $\pi:Q\rightarrow Z$ as the following:
\begin{enumerate}
\item if $\exists i\in\{0,1\cdots,L\},P\in \mathcal{P}_i^I$ such that $p\in P$, let $\pi(p)\gets \pi_P(p)$;
\item otherwise, let $\pi(p)\gets \arg\min_{z\in Z}\dist(p,z)$.
\end{enumerate}
According to the proof of Lemma~\ref{lem:small_parts_removal} (see Appendix), we can show that
 \begin{align*}
 %\begin{array}{ccc}
 |\{p\in Q\mid \forall i\in\{0,1,\cdots,L\},P\in \mathcal{P}_i^I,p\not\in P\}|\leq O(\eta)\cdot |Q|/k %&\text{ and }&
% \end{array}
 \end{align*}
 and
 \begin{align*}
\sum_{p\in Q:\forall i\in\{0,1,\cdots,L\},P\in\mathcal{P}_i^I,p\not\in P} \dist^r(p,Z) \leq O(\varepsilon)\cdot  \cost\left(\bigcup_{i=0}^L\bigcup_{P\in \mathcal{P}_i^I} P,Z\right).
 \end{align*}
 Thus, we can conclude that 
 \begin{align*}
 \sum_{p\in Q} \dist^r(p,\pi(p)) \leq (1+O(\varepsilon)) \cdot \sum_{p\in Q'} w'(p)\cdot \dist^r(p,\pi''(p))\leq (1+O(\varepsilon)) \cost_{t'}(Q',Z,w'),
 \end{align*}
 and
 \begin{align*}
 \|s(\pi)\|_{\infty}\leq (1+O(\eta))\cdot t'.
 \end{align*}
 
 Notice that $\mathcal{P}_i^I$ can be determined by the heavy cells outputted by Algorithm~\ref{alg:heavy_light} and the estimated number of points in its children cells.
 By the above argument, if we store this information together with the coreset $(Q',w')$, we can determine the desired assignment mapping $\pi$ for any capacity $t'$ and centers $Z$ in $\poly(|Q'|)$ time. 
\section{Coreset in Streaming and Distributed Model}\label{sec:streaming}
\subsection{Estimating Number of Points via Sampling}
In this section, we discuss how to obtain the estimation $\tau(C\cap Q)$ in line~\ref{sta:estimated_size} of Algorithm~\ref{alg:heavy_light}, the estimation $\tau\left(\bigcup_{j=1}^{s_i} Q_{i,j}\right)$ in line~\ref{sta:return_fail_in_alg2} and the estimation $\tau(Q_{i,j})$ in line~\ref{sta:lb_of_each_final_part} of Algorithm~\ref{alg:coreset_construction}.
For convenience, we suppose that no two points in the input point set share the same coordinate\footnote{If two points share the same coordinate, we can assume that each point has a unique tag so we can distinguish them.}. 
We can obtain the estimation by the procedure in Algorithm~\ref{alg:estimating_points}.

\begin{algorithm}
\small
\caption{Estimation of Number of Points via Sampling}\label{alg:estimating_points}
\begin{enumerate}
\item Let $\lambda'\gets 100dL$.
\item 
Let $h_0,h_1,\cdots,h_L:[\Delta]\rightarrow \{0,1\}$ be $\lambda'$-wise independent hash functions, where $\forall i\in \{0,1,\cdots,L\},p\in[\Delta]^d$, it satisfies $\Pr[h_i(p)=1]=\psi_i=\min\left(\frac{10^6\lambda'}{T_i(o)},1\right)$.
\item For $i\in\{0,1,2,\cdots,L\}$ and each cell $C\in G_i$, set $\tau(C\cap Q)\gets \frac{1}{\psi_i}\cdot \sum_{p\in C\cap Q} h_i(p)$. 
Use these estimated values in Algorithm~\ref{alg:heavy_light}.
\item Let $h'_0,h'_1,\cdots,h'_L:[\Delta]\rightarrow \{0,1\}$ be $\lambda'$-wise independent hash functions, where $\forall i\in\{0,1,\cdots,L\},p\in[\Delta]^d$, it satisfies $\Pr[h'_i(p)=1] = \psi'_i = \min\left(\frac{10^6\lambda'}{\gamma T_i(o)},1\right)$.
\item In Algorithm~\ref{alg:coreset_construction}, for $i\in\{0,1,\cdots,L\}$ set 
\begin{align*}
\tau\left(\bigcup_{j=1}^{s_i} Q_{i,j}\right) \gets \frac{1}{\psi'_i} \sum_{C\in G_i:C\text{ is crucial}} \sum_{p\in C\cap Q} h'_i(p),
\end{align*}
and for $j\in [s_i]$ set 
\begin{align*}
\tau(Q_{i,j})\gets \frac{1}{\psi'_i} \sum_{C\in G_i:C\text{ is a crucial child of the }j\text{-th heavy cell in }G_{i-1}}\sum_{p\in C\cap Q} h'_i(p). 
\end{align*}
\end{enumerate}
\end{algorithm}

\begin{lemma}\label{lem:obtain_tau}
With probability at least $0.99$, Algorithm~\ref{alg:estimating_points} satisfies following conditions:
\begin{enumerate}
\item $\forall i\in\{0,1,\cdots,L\},C\in G_i,$ $\tau(C\cap Q)$ is good (Definition~\ref{def:good_estimation_alg1}). 
%either $\tau(C\cap Q)\in |C\cap Q|\pm 0.1 T_i(o)$ or $\tau(C \cap Q)\in (1\pm 0.01)\cdot |C\cap Q|$,
\item $\forall i\in\{0,1,\cdots,L\},$ $\tau\left(\bigcup_{j=1}^{s_i} Q_{i,j}\right)$ is good (Definition~\ref{def:good_estimation_alg2}).
%either $\tau\left(\bigcup_{j=1}^{s_i} Q_{i,j}\right)\in \sum_{j=1}^{s_i}|Q_{i,j}| \pm 0.1 T_i(o)$ or $\tau\left(\bigcup_{j=1}^{s_i} Q_{i,j}\right)\in (1\pm 0.1)\sum_{j=1}^{s_i}|Q_{i,j}|$,
\item $\forall i\in\{0,1,\cdots,L\},j\in[s_i]$, $\tau(Q_{i,j})$ is good (Definition~\ref{def:good_estimation_alg2}).
%either $\tau(Q_{i,j})\in |Q_{i,j}|\pm 0.1\gamma T_i(o)$ or $\tau(Q_{i,j})\in (1\pm 0.1)|Q_{i,j}|$.
\end{enumerate}
\end{lemma}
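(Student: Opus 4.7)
My plan is to apply Lemma~\ref{lem:limited_independent_bound} (the Bellare--Rompel moment bound) separately to each of the three estimators and then take one union bound over all relevant cells and parts. Every estimator has the form $\frac{1}{\psi}\sum_{p\in S}h(p)$ for a $\lambda'$-wise independent $\{0,1\}$-valued hash $h$, which equals $\sum_{p\in S}X_p$ where the $X_p$ are $\lambda'$-wise independent, take values in $[0,M]$ with $M=1/\psi$, and have mean $\mu=|S|$. So in each case we just need to plug the parameters into Lemma~\ref{lem:limited_independent_bound} and argue the failure probability per cell is tiny.

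For claim (1), fix a cell $C\in G_i$. If $\psi_i=1$ the estimator is exact, so assume $\psi_i=10^6\lambda'/T_i(o)$ and hence $M=T_i(o)/(10^6\lambda')$. Split into two regimes. If $|C\cap Q|\le T_i(o)$ we aim for additive error $0.1T_i(o)$, so with $a=0.1T_i(o)$
\[
\frac{\mu\lambda' M+\lambda'^2M^2}{a^2}\le\frac{T_i(o)^2/10^6+T_i(o)^2/10^{12}}{10^{-2}T_i(o)^2}\le 2\cdot 10^{-4},
\]
giving failure probability at most $8(2\cdot 10^{-4})^{\lambda'/2}$. If $|C\cap Q|>T_i(o)$ we take $a=0.01|C\cap Q|$, and the identical algebra (using $|C\cap Q|>T_i(o)$ to dominate $M$ by $|C\cap Q|/(10^6\lambda')$) gives the same bound. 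With $\lambda'=100dL$, the per-cell failure probability is at most $8\cdot 2^{-\Omega(dL)}$. Since the number of non-empty cells in any level is at most $|Q|\le\Delta^d=2^{dL}$ and there are $L+1$ levels, a union bound over at most $(L+1)2^{dL}$ cells still leaves failure probability at most $1/300$.

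The remaining two claims go through with the same template using $\psi'_i=\min(10^6\lambda'/(\gamma T_i(o)),1)$ and $M'=\gamma T_i(o)/(10^6\lambda')$. For (2) the target is additive tolerance $0.1T_i(o)$ or multiplicative $0.1\mu$ on $\tau(\bigcup_{j=1}^{s_i}Q_{i,j})$; because $\gamma\le 1$, the extra $\gamma$ factor in $M'$ only shrinks $\mu\lambda' M'$ and $\lambda'^2M'^2$ relative to the target, so the same $\le 2\cdot 10^{-4}$ ratio holds in both regimes (split at $|\bigcup_j Q_{i,j}|=T_i(o)$). For (3) the target tolerances are $0.1\gamma T_i(o)$ and $0.1\mu$, and the algebra is a verbatim copy of claim (1) with $T_i(o)$ replaced by $\gamma T_i(o)$. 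Union-bounding over at most $(L+1)2^{dL}$ values of $(i,j)$ in (3), and over $L+1$ values of $i$ in (2), costs far less than the $2^{\Omega(dL)}$ slack and we can budget $1/300$ per claim.

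The one potential obstacle is making sure the union bound really fits: each level may contain up to $2^{dL}$ distinct non-empty cells, which is a huge set, but the choice $\lambda'=100dL$ was engineered precisely so that $(2\cdot 10^{-4})^{\lambda'/2}$ beats $2^{dL}$ by a factor of $2^{-\Omega(dL)}$, giving comfortable slack. A small bookkeeping point is that $\tau(Q_{i,j})$ and $\tau(\bigcup_jQ_{i,j})$ are defined in terms of the heavy-cell marking, which itself depends on the $h_i$; we circumvent this by noting that the concentration bound holds for the sum over \emph{any} fixed subfamily of cells in $G_i$, and in particular for the (data-determined) family of crucial children of any fixed heavy cell in $G_{i-1}$. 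Combining the three per-claim failure probabilities finishes the proof with overall success probability at least $0.99$.
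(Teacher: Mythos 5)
Your proposal matches the paper's proof in all essentials: it applies the Bellare--Rompel $\lambda$-wise independence moment bound (Lemma~\ref{lem:limited_independent_bound}) to each of the three estimators, splits into the same two regimes (additive tolerance when the true count is below the threshold, multiplicative when above), and union-bounds over the at most $O(\Delta^d)$ non-empty cells per level, exactly as the paper does. The one small numerical slip --- in the multiplicative regime of claim (1) the ratio is actually on the order of $10^{-2}$ rather than $2\cdot 10^{-4}$, because the tolerance there is $0.01|C\cap Q|$ rather than $0.1|C\cap Q|$ --- is harmless since raising to the power $\lambda'/2=50dL$ still crushes the $2^{dL}$ union bound; and your observation that $h'_i$ is fresh (independent of the $h_i$ that determine the heavy-cell marking) correctly resolves the measurability bookkeeping that the paper handles implicitly.
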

\begin{proof}
Consider an arbitrary $i\in\{0,1,\cdots,L\}$ and an arbitrary cell $C\in G_i$.
Notice that $\E[\tau(C\cap Q)]=\frac{1}{\psi_i}\sum_{p\in C\cap Q}\E[h_i(p)]=|C\cap Q|$.
If $|C\cap Q|\geq T_i(o)$, by Lemma~\ref{lem:limited_independent_bound}, we have
\begin{align*}
\Pr[|\tau(C\cap Q)-|C\cap Q||\geq 0.01 |C\cap Q|] \leq 8\left(\frac{|C\cap Q|\lambda'\cdot\frac{1}{\psi_i}+\lambda'^2\cdot\frac{1}{\psi_i^2}}{0.01^2|C\cap Q|^2}\right)^{\lambda'/2} \leq 0.001\cdot\frac{1}{\Delta^d}.
\end{align*}
Similarly, if $|C\cap Q|<T_i(o)$, then by Lemma~\ref{lem:limited_independent_bound}, 
\begin{align*}
\Pr[|\tau(C\cap Q)-|C\cap Q||\geq 0.1 T_i(o)]\leq 8\left(\frac{|C\cap Q|\lambda'\cdot \frac{1}{\psi_i}+\lambda'^2\cdot\frac{1}{\psi_i^2}}{0.1^2|T_i(o)|^2}\right)^{\lambda'/2}\leq 0.001\cdot \frac{1}{\Delta^d}.
\end{align*}
Since there are at most $(\Delta/2^i)^d$ non-empty cells in $G_i$, the total number of non-empty cells is at most $2\Delta^d$.
By taking union bound over all such cells, with probability at least $0.998$, $\forall i\in\{0,1,\cdots,L\},C\in G_i,$ either $\tau(C\cap Q)\in |C\cap Q|\pm 0.1 T_i(o)$ or $\tau(C \cap Q)\in (1\pm 0.01)\cdot |C\cap Q|$.

By using Lemma~\ref{lem:limited_independent_bound} similar to the above argument, with probability at least $0.998$, $\forall i\in\{0,1,\cdots,L\},$ either $\tau\left(\bigcup_{j=1}^{s_i} Q_{i,j}\right)\in \sum_{j=1}^{s_i}|Q_{i,j}| \pm 0.1 T_i(o)$ or $\tau\left(\bigcup_{j=1}^{s_i} Q_{i,j}\right)\in (1\pm 0.1)\sum_{j=1}^{s_i}|Q_{i,j}|$.
We also have that, with probability $0.998$, $\forall i\in\{0,1,\cdots,L\},j\in[s_i]$, either $\tau(Q_{i,j})\in |Q_{i,j}|\pm 0.1\gamma T_i(o)$ or $\tau(Q_{i,j})\in (1\pm 0.1)|Q_{i,j}|$.
By taking union bound over all failure events, we complete the proof.
\end{proof}

%\begin{remark}
%Though all the samples in Algorithm~\ref{alg:heavy_light}, Algorithm~\ref{alg:coreset_construction} and Algorithm~\ref{alg:estimating_points} are fully independent, we can apply the method shown in~\cite{indyk2000stable} to use Nissan's pseudorandom generator to derandomize our sampling hash functions. 
%The total random bits needed is at most $\poly(\min(\varepsilon,\eta)^{-1}kdL)$.
%\end{remark}

\subsection{Coreset Construction over Dynamic Stream}
In this section, we will discuss how to implement the coreset construction in the streaming model. 
We consider the streaming model which allows both insertion and deletion. 
The description of the model is in the following.
\paragraph{Dynamic streaming model.} Initially, $Q$ is an empty point set. 
There is a stream of insertions and deletions, $(p_1,\pm)$, $(p_2,\pm),\cdots$, where $(p_i,+)$ denotes inserting a point $p_i\in [\Delta]^d$ into $Q$, and $(p_i,-)$ denotes deleting $p_i$ from $Q$. 
Each deletion $(p_i,-)$ guarantees that $p_i$ is in $Q$ before deletion.
A dynamic streaming algorithm is allowed a single pass over the stream.
At the end of the stream, the algorithm stores some information regarding $Q$. 
The space complexity of the algorithm is the total number of bits used by the algorithm during the stream.

In this section, we will introduce a dynamic streaming algorithm which can output a coreset for $\ell_r$ balanced $k$-clustering using space $\poly(\varepsilon^{-1}kdL)$ bits for constant $r$.

\begin{lemma}[Lemma 19 in~\cite{hsyz18}]\label{lem:useful_tool}
For $i\in\{0,1,\cdots,L\}$, $\alpha,\beta\in \mathbb{Z}_{\geq 1},\delta\in(0,0.5)$, there is a dynamic streaming algorithm \textsc{Storing}$(G_i,\alpha,\beta,\delta)$  which uses $O(\alpha\beta dL\cdot \log^2(\alpha\beta/\delta))$ bits to process a stream of insertion and deletion of points such that 
\begin{enumerate}
    \item %if $|\mathcal{C}|\leq \alpha$ where $\mathcal{C}=\{C\in G_i\mid C\text{ is a non-empty cell}\}$, then with probability at least $1-\delta$, 
    if the algorithm does not output FAIL, the algorithm will return 
    \begin{itemize}
    \item a set $\mathcal{C}$ of all non-empty cells,
    %all cells in $\mathcal{C}$ where $\mathcal{C}=\{C\in G_i\mid C\text{ is a non-empty cell}\}$, 
    \item the number of points $f(C)$ in each cell $C\in\mathcal{C}$, 
    \item the set $S$ of points in all the non-empty cells that contain at most $\beta$ points,
    \end{itemize}
    \item if $|\mathcal{C}|\leq \alpha$, then with probability at least $1-\delta$, the algorithm does not output FAIL.
\end{enumerate}
\end{lemma}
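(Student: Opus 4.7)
The plan is to build the algorithm from two independent linear sketches, both of which admit the turnstile updates required by the dynamic stream, and combine their outputs in a decoding phase.

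First, for identifying $\mathcal{C}$ and the counts $f(C)$, I would maintain an $\alpha$-sparse recovery linear sketch $S_{\mathrm{cell}}$ over the frequency vector $\mathbf{v}\in\mathbb Z^{|G_i|}$ with $v_C=|C\cap Q|$. Each stream update $(p,\pm)$ triggers an increment of $\pm 1$ at coordinate $c_i(p)$. Using a standard construction (Count-Sketch with verification, or a bank of $\ell_0$-samplers), $S_{\mathrm{cell}}$ exactly recovers any $\alpha$-sparse vector with failure probability at most $\delta/3$ using $O(\alpha\cdot dL\cdot\log(\alpha/\delta))$ bits, since $\log|G_i|\leq dL$. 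The decoder attempts to recover the support and the entries; if more than $\alpha$ non-zero entries are returned, or any consistency check fails, the algorithm outputs FAIL. When $|\mathcal{C}|\leq\alpha$, this stage succeeds with probability at least $1-\delta/3$.

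Second, to recover the point set $S$ lying in non-empty cells with at most $\beta$ points, I would set up a hashing-plus-sparse-recovery structure. Draw $m=O(\log(\alpha/\delta))$ independent pairwise-independent hash functions $h_1,\dots,h_m\colon G_i\to[B]$ with $B=O(\alpha)$. For each $(j,b)\in[m]\times[B]$, maintain a $\beta$-sparse recovery linear sketch $T_{j,b}$ over the point-frequency vector in $\mathbb Z^{[\Delta]^d}$ restricted to points $p$ with $h_j(c_i(p))=b$; each such sketch uses $O(\beta\cdot dL\cdot\log(\alpha\beta/\delta))$ bits and, when the relevant point-frequency vector has at most $\beta$ non-zeros, recovers it exactly with failure probability at most $\delta/(3\alpha m B)$. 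Each stream update modifies only the bucket $T_{j,h_j(c_i(p))}$ in each hash table, so updates are efficient. In the decoding phase, after identifying $\mathcal{C}$ and $f(C)$, for every $C\in\mathcal C$ with $f(C)\leq\beta$ I look for an index $j$ such that $h_j(C)\neq h_j(C')$ for every other $C'\in\mathcal C$; since $B=\Omega(\alpha)$, any fixed $C$ is isolated by a random $h_j$ with probability at least $1/2$, so failure to isolate across all $m$ functions has probability at most $\delta/(3\alpha)$, and the union bound over the at most $\alpha$ small cells is at most $\delta/3$. For each isolated bucket, the sketch $T_{j,h_j(C)}$ sees only the points of $C$, which number at most $\beta$, so $\beta$-sparse recovery returns them exactly.

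Adding up the three failure events (cell decoding, bucket isolation, and any sparse-recovery failure among the $O(mB)$ active sketches) gives total failure probability at most $\delta$. The dominant space cost is $m\cdot B\cdot O(\beta dL\log(\alpha\beta/\delta))=O(\alpha\beta dL\log^2(\alpha\beta/\delta))$ bits, matching the claimed bound. When the algorithm does not output FAIL, it outputs $\mathcal C$, the counts $f(C)$, and the union $S$ of recovered points, exactly as required.

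The main obstacle I expect is \emph{correctly separating the two roles of the structure}: the cell sketch must report counts even for heavy cells (where we do not want to list the points), while the point sketch must recover \emph{only} the small-cell content without being contaminated by heavy cells. The hashing layer is what resolves this tension, and getting the parameters right—so that each small cell is isolated in some hash table with enough probability while the overall space stays within $O(\alpha\beta dL\log^2(\alpha\beta/\delta))$—is the delicate accounting step. Everything else is a standard turnstile linear-sketch argument together with a union bound.
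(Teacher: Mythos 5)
This lemma is imported from~\cite{hsyz18} (their Lemma~19) and the present paper gives no proof of its own, so there is no in-paper argument to compare against; I can only assess your reconstruction on its own merits. Your two-layer design is sound and uses the standard turnstile toolkit for this kind of ``identify non-empty cells, recover points of light cells'' task: an $\alpha$-sparse recovery sketch on the cell-count vector recovers $\mathcal C$ and $f$; hashing cell identities into $B=O(\alpha)$ buckets over $m=O(\log(\alpha/\delta))$ independent pairwise-independent functions isolates each light cell in some repetition with all but $\delta/3$ probability; and a $\beta$-sparse recovery sketch per bucket recovers the points of an isolated light cell. The three-way union bound on failure probability and the space accounting $m\cdot B\cdot O(\beta dL\log(\alpha\beta/\delta))=O(\alpha\beta dL\log^2(\alpha\beta/\delta))$ both check out, and you correctly identify the crux: the bucket hash acts on the cell identity $c_i(p)$ rather than on $p$, so once a light cell is isolated among $\mathcal C$ under some $h_j$, the corresponding bucket's frequency vector is genuinely $\beta$-sparse and exact recovery applies. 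The one thing to make explicit, because the lemma's guarantee is one-sided (``if the algorithm does not output FAIL, the output is correct''), is that \emph{every} randomized failure mode must be detectable and converted into a FAIL output: you note verification for the cell sketch, and isolation failure is checkable once $\mathcal C$ is in hand, but the per-bucket $\beta$-sparse sketches also need a verification fingerprint so that a sketch that has silently failed triggers FAIL rather than emitting a wrong $S$. With that spelled out, the reconstruction is a plausible account of the cited result.
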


Next, let us describe how to use above subroutine to implement our coreset construction algorithm. 
The idea is that we only store some information described by small number of bits, and at the end of the stream, we can use this information to implement Algorithm~\ref{alg:heavy_light}, Algorithm~\ref{alg:coreset_construction} and Algorithm~\ref{alg:estimating_points}.
The description is in Algorithm~\ref{alg:streaming}.

\begin{algorithm}
\small
\caption{Coreset Construction over a Dynamic Stream} \label{alg:streaming}
\begin{enumerate}
\item Let $\lambda\gets 10^6rk^3dL\lceil\log(kdL)\rceil$.
\item Let $h_0,h_1,\cdots,h_L,h'_0,h'_1,\cdots,h'_L,\hat{h}_0,\hat{h}_1,\cdots,\hat{h}_L:[\Delta]^d\rightarrow \{0,1\}$ be $\lambda$-wise independent hash functions:
\begin{align*}
&\forall i\in\{0,1,\cdots,L\}, p\in[\Delta]^d,\\
&\Pr[h_i(p)=1] = \psi_i & (\psi_i\text{ defined in Algorithm~\ref{alg:estimating_points}}),\\
&\Pr[h'_i(p)=1] = \psi'_i & (\psi_i'\text{ defined in Algorithm~\ref{alg:estimating_points}}),\\
&\Pr[\hat{h}_i(p) = 1] = \phi_i & (\phi_i\text{ defined in Algorithm~\ref{alg:heavy_light}}).
\end{align*}
\item For the input stream $(p_1,\pm),(p_2,\pm),\cdots$, create $3(L+1)$ sub-streams. For each $i\in\{0,1,\cdots,L\}$:
\begin{itemize}
\item There is a sub-stream which contains all the insertions/deletions related to all the points $p_j$, where $h_i(p_j)=1 \Biggm/ h'_i(p_j)=1 \Biggm/ \hat{h}_i(p_j)=1$.
Run a subroutine $\textsc{Storing}\left(G_i,\alpha_i,\beta_i,0.001\cdot\frac{1}{3(L+1)}\right)\biggm/ \textsc{Storing}\left(G_i,\alpha'_i,\beta'_i,0.001\cdot\frac{1}{3(L+1)}\right)\biggm/\textsc{Storing}\left(G_i,\hat{\alpha}_i,\hat{\beta}_i,0.001\cdot\frac{1}{3(L+1)}\right)$ (Lemma~\ref{lem:useful_tool}) on such sub-stream in parallel.
Let $\mathcal{C}_i\subset G_i,f:\mathcal{C}_i\rightarrow \mathbb{Z}_{\geq 0},S_i\subseteq [\Delta]^d \biggm/ \mathcal{C'}_i\subset G_i,f':\mathcal{C'}_i\rightarrow \mathbb{Z}_{\geq 0},S'_i\subseteq [\Delta]^d \biggm/ \hat{\mathcal{C}}_i\subset G_i,\hat{f}:\hat{\mathcal{C}}_i\rightarrow \mathbb{Z}_{\geq 0},\hat{S}_i\subseteq [\Delta]^d$ be the output of the subroutine, where $\alpha_i=10^6(k+d^{1.5r}\psi_i T_i(o))L^2, \alpha'_i=10^6(k+d^{1.5r}\psi'_i T_i(o))L^2,\hat{\alpha}_i=10^6(k+d^{1.5r}\phi_i T_i(o))L^2$ and $\beta_i=\beta'_i=1,\hat{\beta}_i=4\cdot  10^6 (k+d^{1.5r})L^2\phi_i T_i(o)$. \label{it:running_storing}
%\item There is a sub-stream which contains all the insertions/deletions related to $p_j$, where $h_i'(p_j)=1$. 
%Run a subroutine \textsc{Storing}$\left(G_i,?,?,0.001\cdot\frac{1}{3(L+1)}\right)$ (Lemma~\ref{lem:useful_tool}) on such sub-stream in parallel.
%Let $\mathcal{C'}_i\subset G_i,f':\mathcal{C'}_i\rightarrow \mathbb{Z}_{\geq 0},S'_i\subseteq [\Delta]^d$ be the output of the subroutine.
%\item There is a sub-stream which contains all the insertions/deletions related to $p_j$, where $\hat{h}_i(p_j)=1$. 
%Run a subroutine \textsc{Storing}$\left(G_i,?,?,0.001\cdot\frac{1}{3(L+1)}\right)$ (Lemma~\ref{lem:useful_tool}) on such sub-stream in parallel.
%Let $\hat{\mathcal{C}}_i\subset G_i,\hat{f}:\hat{\mathcal{C}}_i\rightarrow \mathbb{Z}_{\geq 0},\hat{S}_i\subseteq [\Delta]^d$ be the output of the subroutine.
\end{itemize}
\item For each $i\in\{0,1,\cdots, L\}$, and each cell $C\in G_i$, if $C\in\mathcal{C}_i$, set $\tau(C\cap Q)\gets \frac{1}{\psi_i}\cdot f_i(C)$; set $\tau(C\cap Q)\gets 0$ otherwise.
Run Algorithm~\ref{alg:heavy_light} based on such $\tau(C\cap Q)$ for all cells $C$. \label{it:step3}
\item For each $i\in\{0,1,\cdots,L\}$, set
$
\tau\left(\bigcup_{j=1}^{s_i} Q_{i,j}\right) \gets \frac{1}{\psi'_i}\sum_{C\in\mathcal{C}'_i:C\text{ is crucial}} f'_i(C),
$
and $\forall j\in[s_i]$ set
$
\tau\left(Q_{i,j}\right)\gets \frac{1}{\psi'_i} \sum_{C\in \mathcal{C}'_i:C\text{ is a crucial child of the $j$-th heavy cell in }G_{i-1}} f'_i(C).
$ \label{it:step4}
\item Run Algorithm~\ref{alg:coreset_construction} based on all $\tau\left(\bigcup_{j=1}^{s_i} Q_{i,j}\right)$ and $\tau\left(Q_{i,j}\right)$. 
To obtain $Q'_i$ in line~\ref{sta:sampling_step} of Algorithm~\ref{alg:coreset_construction}, do the following:
$
Q'_i\gets \bigcup_{p\in \hat{S}_i : c_i(p)\text{ is crucial}, \tau(Q_{i,j})\geq \gamma T_i(o)} \{p\},
$
$\text{ where }c_{i-1}(p)\text{ is the }j\text{-th heavy cell in }G_{i-1}$.
Outputs $Q'$ and $w'$ returned by Algorithm~\ref{alg:coreset_construction}.
\label{it:step5}
\end{enumerate}
\end{algorithm}

\begin{lemma}[Correctness of the streaming algorithm]\label{lem:correctness_streaming}
If $o\leq \OPT^{(r)}_{k\text{-clus}}$ and none of the subroutines of Algorithm~\ref{alg:streaming} outputs FAIL, with probability at least $0.97$, the output $Q',w'$ of Algorithm~\ref{alg:streaming} satisfies that $\forall t\geq |Q|/k,Z\subset[\Delta]^d$ with $|Z|=k$,
\begin{align*}
\begin{array}{ccc}
\cost_{(1+\eta)t}(Q,Z) \leq (1+\varepsilon)\cost_t(Q',Z,w')    &  \text{and} & \cost_{(1+\eta)t}(Q',Z,w')\leq (1+\varepsilon)\cost_t(Q,Z). 
\end{array}
\end{align*}
\end{lemma}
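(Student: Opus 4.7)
The plan is to show that, conditional on no subroutine outputting FAIL, the streaming algorithm's execution is, with probability at least $0.97$, identical to what the offline Algorithm~\ref{alg:coreset_construction} would produce on $Q$ when fed the estimates of Algorithm~\ref{alg:estimating_points} instantiated with the \emph{same} hash functions $h_i,h'_i,\hat{h}_i$. Once this coupling is established, the claim follows immediately from Lemma~\ref{lem:correctness_offline}, whose hypothesis $o\leq \OPT^{(r)}_{k\text{-clus}}$ is assumed and whose goodness-of-estimate hypothesis is certified by Lemma~\ref{lem:obtain_tau}.

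First I verify that the estimates $\tau(C\cap Q)$, $\tau\bigl(\bigcup_{j=1}^{s_i}Q_{i,j}\bigr)$, and $\tau(Q_{i,j})$ computed in Steps~\ref{it:step3} and~\ref{it:step4} of Algorithm~\ref{alg:streaming} coincide with those of Algorithm~\ref{alg:estimating_points}. Conditional on the STORING subroutine driven by $h_i$ not outputting FAIL, Lemma~\ref{lem:useful_tool} guarantees that $(\mathcal{C}_i,f_i)$ contains every non-empty cell of the $h_i$-sub-stream together with its \emph{exact} count $f_i(C)=\sum_{p\in C\cap Q} h_i(p)$, so $\tau(C\cap Q)=f_i(C)/\psi_i$ equals the offline value; the same reasoning applied to the $h'_i$-sub-stream matches $\tau\bigl(\bigcup_j Q_{i,j}\bigr)$ and $\tau(Q_{i,j})$. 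Since $\lambda\geq 100dL$, the goodness argument of Lemma~\ref{lem:obtain_tau} applies verbatim and with probability at least $0.99$ all these estimates are good in the sense of Definitions~\ref{def:good_estimation_alg1} and~\ref{def:good_estimation_alg2}. In particular, the heavy/crucial labeling, the partition $\{Q_{i,j}\}$, and the family $\mathcal{P}^I_i$ computed inside Algorithm~\ref{alg:streaming} coincide with those of the corresponding offline execution.

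The remaining point is that the set $Q'_i$ assembled in Step~\ref{it:step5} equals $\{p\in P: P\in\mathcal{P}^I_i,\ \hat{h}_i(p)=1\}$, which is precisely the set produced in line~\ref{sta:sampling_step} of Algorithm~\ref{alg:coreset_construction}. Because the heavy/crucial structure and the flagging $\tau(Q_{i,j})\geq \gamma T_i(o)$ already match between the two executions, the only thing that could differ is whether the returned set $\hat{S}_i$ actually contains every $\hat{h}_i$-sampled point lying in a crucial cell. By Lemma~\ref{lem:useful_tool}, this holds provided that every crucial cell $C\in G_i$ contains at most $\hat{\beta}_i$ sampled points. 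Since any crucial cell satisfies $|C\cap Q|\leq 1.1\,T_i(o)$ (its estimate is good and below $T_i(o)$), the expected number of sampled points in $C$ is at most $1.1\phi_i T_i(o)$, and $\hat{\beta}_i=4\cdot 10^6(k+d^{1.5r})L^2\phi_i T_i(o)$ exceeds this by a $\poly(k,d,L)$ factor; Lemma~\ref{lem:limited_independent_bound} with $\lambda$-wise independence then gives per-cell failure probability $O(1/\Delta^d)$, and a union bound over the $\leq 2\Delta^d$ non-empty cells across all $L+1$ levels costs at most $0.01$.

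Combining the $0.99$ from goodness of estimates, the $0.01$ union-bound cost in Step~\ref{it:step5}, and the $0.99$ success probability of Lemma~\ref{lem:correctness_offline} itself gives overall success probability at least $0.97$, yielding the desired coreset inequalities. The main obstacle in the plan is the calibration in the last step: $\hat{\beta}_i$ and the wise-independence parameter $\lambda$ must be chosen so that the concentration bound of Lemma~\ref{lem:limited_independent_bound} is strong enough to survive a union bound over every non-empty cell of $[\Delta]^d$, which is exactly why $\lambda$ is set to $10^6 rk^3dL\lceil\log(kdL)\rceil$ and $\hat{\beta}_i$ is chosen polynomially larger than the naive expectation $\phi_i T_i(o)$ in Step~\ref{it:running_storing}.
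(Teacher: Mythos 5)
Your proposal follows the same overall structure as the paper's proof: condition on no subroutine outputting FAIL, observe via Lemma~\ref{lem:useful_tool} that the recovered $f_i,f'_i$ give exactly the $\tau$-estimates of Algorithm~\ref{alg:estimating_points}, invoke Lemma~\ref{lem:obtain_tau} for goodness, argue that Step~\ref{it:step5} reproduces the sampling of line~\ref{sta:sampling_step}, and conclude by Lemma~\ref{lem:correctness_offline}. The one place you diverge is the argument that $\{p\in Q_{i,j}:\hat{h}_i(p)=1\}\subseteq\hat{S}_i$. You prove a per-cell concentration bound with Lemma~\ref{lem:limited_independent_bound} and union-bound over all $\lesssim\Delta^d$ non-empty cells; the paper instead applies Markov's inequality to the level-$i$ total $\sum_j\sum_{p\in Q_{i,j}}\hat{h}_i(p)$, whose expectation it bounds by $\phi_i\cdot 2\cdot 10^4(kL+d^{1.5r})T_i(o)$ using the fact that line~\ref{sta:return_fail_in_alg2} did not fail, and then union-bounds only over the $L+1$ levels; since the total already bounds each cell's count, this is both simpler and independence-free. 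Two caveats on your route. First, your claim that every crucial cell $C\in G_i$ has $|C\cap Q|\leq 1.1T_i(o)$ is not valid at $i=L$, where a cell is marked crucial unconditionally whenever its ancestors are heavy; the argument survives only because a $G_L$ cell has side length $1$ and hence contains at most one integer point, so its sampled count is trivially $\leq 1\leq\hat{\beta}_L$ — this case needs to be stated separately. Second, the closing attribution is off: $\lambda$ is set at $10^6rk^3dL\lceil\log(kdL)\rceil$ for the union bound over all $\Delta^{O(dk^2)}$ half-space configurations in Lemma~\ref{lem:core_for_all}, not for this step, and the $\poly(k,d,L)$ slack in $\hat{\beta}_i$ is exactly what is needed for the paper's Markov-over-levels argument; your own concentration bound would go through with a far smaller $\hat{\beta}_i$.
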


\begin{proof}
Due to Lemma~\ref{lem:useful_tool}, since none of subroutine \textsc{Storing} outputs FAIL, $\forall i\in\{0,1,\cdots, L\}, C\in G_i$, 
\begin{itemize}
\item $f_i(C)=\sum_{p\in C\cap Q} h_i(p)$ if $C\in \mathcal{C}_i$ and $\sum_{p\in C\cap Q}h_i(p)=0$ otherwise,
\item $f'_i(C)=\sum_{p\in C\cap Q} h'_i(p)$ if $C\in \mathcal{C}'_i$ and $\sum_{p\in C\cap Q} h'_i(p)=0$ otherwise.
\end{itemize}
Thus, in step~\ref{it:step3} of Algorithm~\ref{alg:streaming}, 
\begin{align*}
\tau(C\cap Q) = \frac{1}{\psi_i} \sum_{p\in C\cap Q} h_i(p).
\end{align*}
Similarly, in step~\ref{it:step4} of Algorithm~\ref{alg:streaming},
\begin{align*}
&\tau\left(\bigcup_{j=1}^{s_i} Q_{i,j}\right) =\frac{1}{\psi'_i} \sum_{C\in G_i:C\text{ is crucial}} \sum_{p\in C\cap Q} h'_i(p),\\
&\tau\left(Q_{i,j}\right) = \frac{1}{\psi'_i} \sum_{C\in G_i: C\text{ is a rucial child of the }j\text{-th heavy cell in }G_{i-1}}\sum_{p\in C\cap Q} h'_i(p).
\end{align*}
By Lemma~\ref{lem:obtain_tau}, with probability at least $0.99$,
\begin{enumerate}
\item $\forall i\in\{0,1,\cdots,L\},C\in G_i,$ either $\tau(C\cap Q)\in |C\cap Q|\pm 0.1 T_i(o)$ or $\tau(C \cap Q)\in (1\pm 0.01)\cdot |C\cap Q|$,
\item $\forall i\in\{0,1,\cdots,L\},$ either $\tau\left(\bigcup_{j=1}^{s_i} Q_{i,j}\right)\in \sum_{j=1}^{s_i}|Q_{i,j}| \pm 0.1 T_i(o)$ or $\tau\left(\bigcup_{j=1}^{s_i} Q_{i,j}\right)\in (1\pm 0.1)\sum_{j=1}^{s_i}|Q_{i,j}|$,
\item $\forall i\in\{0,1,\cdots,L\},j\in[s_i]$, either $\tau(Q_{i,j})\in |Q_{i,j}|\pm 0.1\gamma T_i(o)$ or $\tau(Q_{i,j})\in (1\pm 0.1)|Q_{i,j}|$.
\end{enumerate}

Consider step~\ref{it:step5} of Algorithm~\ref{alg:streaming}. 
Notice that since Algorithm~\ref{alg:coreset_construction} does not output FAIL, we have $\forall i\in\{0,1,\cdots, L\}, \sum_{j=1}^{s_i} |Q_{i,j}|\leq 2\cdot 10^4(kL+d^{1.5r})T_i(o)$.
Thus,
\begin{align*}
\forall i\in\{0,1,\cdots,L\}, \E\left[\sum_{j=1}^{s_i} \sum_{p\in Q_{i,j}} \hat{h}_i(p) \right] \leq \phi_i\cdot 2\cdot 10^4(kL+d^{1.5r})T_i(o).
\end{align*}
By Markov's inequality and union bound over all $i\in\{0,1,\cdots, L\}$, with probability at least $0.99$,
\begin{align*}
\forall i\in\{0,1,\cdots,L\}, \sum_{j=1}^{s_i} \sum_{p\in Q_{i,j}} \hat{h}_i(p) \leq  \phi_i\cdot 4\cdot  10^6  (k+d^{1.5r})L^2 T_i(o),
\end{align*}
which implies that
\begin{align*}
\forall i\in\{0,1,\cdots,L\},j\in[s_i], \{p\in Q_{i,j}\mid \hat{h}_i(p)=1\}\subseteq \hat{S}_i. 
\end{align*}
Thus, the construction of $Q'_i$ in step~\ref{it:step5} of Algorithm~\ref{alg:streaming} is equivalent to the construction of $Q'_i$ in line~\ref{sta:sampling_step} of Algorithm~\ref{alg:coreset_construction}.
Due to the correctness (Lemma~\ref{lem:correctness_offline}) of Algorithm~\ref{alg:streaming}, we complete the proof.
\end{proof}

\begin{lemma}[Space complexity and success probability of the streaming algorithm]\label{lem:success_prob_streaming}
For constant $r$, the space needed by Algorithm~\ref{alg:streaming} is at most $\poly(\varepsilon^{-1}\eta^{-1}kdL)$. 
Furthermore, if $\OPT^{(r)}_{k\text{-clus}}/10\leq o\leq \OPT^{(r)}_{k\text{-clus}}$, then with probability at least $0.95$, none of the subroutine of Algorithm~\ref{alg:streaming} outputs FAIL, and the output $Q'$ of Algorithm~\ref{alg:estimating_points} satifies $|Q'|\leq \poly(\varepsilon^{-1}\eta^{-1}kdL)$.
\end{lemma}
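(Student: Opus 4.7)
The plan is to establish the space bound first, then control every source of failure in Algorithm~\ref{alg:streaming} via concentration and a final union bound.

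For the space bound, I would invoke Lemma~\ref{lem:useful_tool}: each call \textsc{Storing}$(G_i,\alpha,\beta,\delta)$ uses $O(\alpha\beta dL\log^2(\alpha\beta/\delta))$ bits. For the three families of calls I just need to bound the products $\alpha_i\beta_i,\alpha'_i\beta'_i,\hat\alpha_i\hat\beta_i$. The key observation is that by the definitions in Algorithm~\ref{alg:estimating_points} and Algorithm~\ref{alg:coreset_construction}, each of $\psi_i T_i(o)$, $\psi'_i T_i(o)$, $\phi_i T_i(o)$ is capped (either the product equals $T_i(o)$ itself when the probability is $1$ and $T_i(o)$ is small, or it equals the numerator of the probability). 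Thus $\psi_i T_i(o)\le 10^6\lambda'$, $\psi'_i T_i(o)\le 10^6\lambda'/\gamma$, $\phi_i T_i(o)\le 2^{2(r+10)}\lambda/(\xi^3\gamma)$, all of which are $\poly(\varepsilon^{-1}\eta^{-1}kdL)$ for constant $r$. Combined with $\beta_i=\beta'_i=1$ and $\hat\beta_i=4\cdot10^6(k+d^{1.5r})L^2\phi_iT_i(o)$, and summing over $i\in\{0,\ldots,L\}$ and the three hash families, the total is $\poly(\varepsilon^{-1}\eta^{-1}kdL)$ bits.

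For the probabilistic guarantee, the \textsc{Storing} guarantee in Lemma~\ref{lem:useful_tool} says each subroutine returns the correct object and does not FAIL provided (i) the procedure's internal randomness cooperates (failure $\le 0.001/(3(L+1))$), and (ii) the number of non-empty cells in the sampled substream is at most the given $\alpha$. Condition (i) is handled by the union bound over the $3(L+1)$ subroutines, contributing at most $0.001$. For condition (ii), I bound $|\mathcal{C}_i|$ (and similarly $|\mathcal{C}'_i|$, $|\hat{\mathcal{C}}_i|$) by splitting cells into \emph{center cells} and \emph{non-center cells} in the sense of Lemma~\ref{lem:number_of_center_cells}. Conditioned on $\mathcal{F}$, there are at most $2000kL$ center cells. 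Any non-center cell in $G_i$ consists of points with $\dist(p,Z^*)\ge g_i/d$, so the total count in such cells is at most $\OPT^{(r)}_{k\text{-clus}}/(g_i/d)^r\le 1000\,d^{1.5r}T_i(o)$ once $o\ge\OPT^{(r)}_{k\text{-clus}}/10$. Hence the expected number of sampled non-center points is at most $\psi_i\cdot1000\,d^{1.5r}T_i(o)$ (and analogously for the other two families). Since each hash function is $\lambda'$-wise independent with $\lambda'=100dL$, Lemma~\ref{lem:limited_independent_bound} gives that this count concentrates around its mean up to an additive $\poly(\lambda')$ slack with probability $1-\exp(-\Omega(dL))$. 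The bound chosen for $\alpha_i$ leaves an ample $\Omega(L^2)$ factor of headroom over the expectation, so $|\mathcal{C}_i|\le\alpha_i$ fails with probability $\ll 1/L$, and similarly for $|\mathcal{C}'_i|,|\hat{\mathcal{C}}_i|$. An identical concentration argument shows that for every crucial cell $C\in G_i$ the number of $\hat h_i$-samples from $C$ is at most $\hat\beta_i$, which is what is needed for step~\ref{it:step5} to recover every sampled point.

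Finally, to bound $|Q'|$, the argument in Lemma~\ref{lem:offline_success_prob} applies verbatim: $\E[|Q'_i|]\le\phi_i\sum_{j=1}^{s_i}|Q_{i,j}|\le \phi_i\cdot 4000(kL+d^{1.5r})T_i(o)$ by the same non-center-cell counting used above, and summing over $i$ together with Markov's inequality gives $|Q'|\le\poly(\varepsilon^{-1}\eta^{-1}kdL)$ with probability at least $0.99$. Taking a final union bound over (a) $\mathcal{F}$, (b) each \textsc{Storing} not failing internally, (c) the $|\mathcal{C}_i|,|\mathcal{C}'_i|,|\hat{\mathcal{C}}_i|$ bounds, (d) the per-cell $\hat\beta_i$ bound, and (e) the size bound on $|Q'|$, the total failure probability is at most $0.05$. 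The main technical obstacle is step (c): naive Markov gives only $O(1/L)$ per level which is not strong enough after summing, so the use of $\lambda'$-wise independent concentration via Lemma~\ref{lem:limited_independent_bound} is essential, and the large $L^2$ factor baked into the definitions of $\alpha_i,\alpha'_i,\hat\alpha_i$ is precisely what makes this tail bound close under the union over $L+1$ levels and three hash families.
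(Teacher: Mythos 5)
Your proof reaches the right conclusion and follows the paper's outline (space via Lemma~\ref{lem:useful_tool}, conditioning on $\mathcal{F}$, bounding non-center-cell mass with $o\geq\OPT^{(r)}_{k\text{-clus}}/10$, then a union bound), but the middle step differs from the paper and your justification for the difference contains a false claim. You argue that Markov ``gives only $O(1/L)$ per level which is not strong enough after summing'' and that concentration via Lemma~\ref{lem:limited_independent_bound} is ``essential.'' This is not correct: the paper proves exactly this step with plain Markov's inequality, and it closes because the constants were chosen to make it close. The expectation is at most $2000kL + 1000\,d^{1.5r}\psi_iT_i(o)$ while $\alpha_i = 10^6(k+d^{1.5r}\psi_iT_i(o))L^2$, so the per-level Markov failure probability is roughly $2000/(10^6L)=0.002/L$; summed over the $3(L+1)$ calls this is about $0.012$, comfortably within the $0.05$ budget. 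The large $10^6L^2$ factor is there precisely to make the naive Markov bound small after the union, not to enable a tail inequality.

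Your alternative via Lemma~\ref{lem:limited_independent_bound} can be made to work (by noting the number of non-empty non-center cells is dominated by the count of sampled points in non-center cells, which is a sum of limited-independence indicators), and in fact would let one shave the $L^2$ factor down to a constant; but as written it has two loose ends. First, you cite the hash functions as $\lambda'$-wise independent with $\lambda'=100dL$, but Algorithm~\ref{alg:streaming} replaces the $\lambda'$-wise hashes of Algorithm~\ref{alg:estimating_points} with $\lambda$-wise ones where $\lambda=10^6rk^3dL\lceil\log(kdL)\rceil$; the argument should use that $\lambda$. Second, your event (d) (every crucial cell contributes at most $\hat\beta_i$ samples) is proved in Lemma~\ref{lem:correctness_streaming}, not in this lemma, so folding it into this proof's union bound is unnecessary for the statement at hand. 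Everything else --- the $\alpha\beta$ bookkeeping for the space bound and the deferral to Lemma~\ref{lem:offline_success_prob} for $|Q'|$ --- matches the paper.
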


\begin{proof}
Suppose $r$ is a constant.
Since all the hash functions are $\lambda$-wise independent, the total number of random bits needed is at most $\poly(\varepsilon^{-1}\eta^{-1}kdL)$.
Since $\forall i\in\{0,1,\cdots, L\}$, $\psi_i,\psi'_i,\phi_i\leq \poly(\varepsilon^{-1}\eta^{-1}kdL)/T_i(o)$, we have $\alpha_i,\alpha'_i,\hat{\alpha}_i,\beta_i,\beta'_i,\hat{\beta}_i\leq \poly(\varepsilon^{-1}\eta^{-1}kdL)$.
By Lemma~\ref{lem:useful_tool}, the space complexity of each subroutine \textsc{Storing} (Lemma~\ref{lem:useful_tool}) is at most $\poly(\varepsilon^{-1}\eta^{-1}kdL)$ bits.
Since there are $3(L+1)$ parallel subroutines of \textsc{Storing}, the total space needed is at most $\poly(\varepsilon^{-1}\eta^{-1}kdL)$ bits.

Recall that $Z^*\subset [\Delta]^d$ with $|Z^*|\leq k$ is the optimal solution of the standard $\ell_r$ $k$-clustering problem of $Q$.
A cell $C\in G_i$ a center cell if $\dist(C,Z^*)\leq g_i/d$.
By Lemma~\ref{lem:number_of_center_cells}, $\mathcal{F}$ happens with probability at least $0.99$, i.e., the total number of center cells is at most $2000kL$.
%By Lemma 14 of~\cite{hsyz18} or a similar argument of Lemma 2.2 of~\cite{bflsy17}, with probability at least $0.94$ the total number of center cells among all grids is at most $100kL$.

Consider $i\in\{0,1,\cdots,L\}$. 
The number of points in non-center cells in $G_i$ is at most 
\begin{align*}
\frac{\OPT^{(r)}_{k\text{-clus}}}{(g_i/d)^2}\leq 1000d^{1.5r} T_i(o).
\end{align*}
Thus, for $i\in\{0,1,\cdots,L\}$, we have:
\begin{align*}
&\E\left[\left|\left\{ C\in G_i\mid \exists p\in C\cap Q,h_i(p)=1 \right\}\right|\right] \leq 2000kL + \psi_i \cdot 1000 d^{1.5r} T_i(o),\\
&\E\left[\left|\left\{ C\in G_i\mid \exists p\in C\cap Q,h'_i(p)=1 \right\}\right|\right] \leq 2000kL + \psi'_i \cdot 1000 d^{1.5r} T_i(o),\\
&\E\left[\left|\left\{ C\in G_i\mid \exists p\in C\cap Q,\hat{h}_i(p)=1 \right\}\right|\right] \leq 2000kL + \phi_i \cdot 1000 d^{1.5r} T_i(o).
\end{align*}
By Markov's inequality and union bound, with probability at least $0.99$, for $i\in\{0,1,\cdots,L\}$, we have:
\begin{align*}
&\left|\left\{ C\in G_i\mid \exists p\in C\cap Q,h_i(p)=1 \right\}\right| \leq 10^6(k+d^{1.5r}\psi_i T_i(o))L^2 \leq \alpha_i,\\
&\left|\left\{ C\in G_i\mid \exists p\in C\cap Q,h'_i(p)=1 \right\}\right| \leq 10^6(k+d^{1.5r}\psi'_i T_i(o))L^2\leq \alpha'_i,\\
&\left|\left\{ C\in G_i\mid \exists p\in C\cap Q,\hat{h}_i(p)=1 \right\}\right| \leq 10^6(k+d^{1.5r}\phi_i T_i(o))L^2\leq \hat{\alpha}_i,
\end{align*}
and thus none of subroutine \textsc{Storing} outputs FAIL.
According to the proof of Lemma~\ref{lem:correctness_streaming}, with probability at least $0.99$, 
\begin{enumerate}
\item $\forall i\in\{0,1,\cdots,L\},C\in G_i,$ $\tau(C\cap Q)$ is good (Definition~\ref{def:good_estimation_alg1}), %either $\tau(C\cap Q)\in |C\cap Q|\pm 0.1 T_i(o)$ or $\tau(C \cap Q)\in (1\pm 0.01)\cdot |C\cap Q|$,
\item $\forall i\in\{0,1,\cdots,L\},$  $\tau\left(\bigcup_{j=1}^{s_i} Q_{i,j}\right)$ is good (Definition~\ref{def:good_estimation_alg2}), %either $\tau\left(\bigcup_{j=1}^{s_i} Q_{i,j}\right)\in \sum_{j=1}^{s_i}|Q_{i,j}| \pm 0.1 T_i(o)$ or $\tau\left(\bigcup_{j=1}^{s_i} Q_{i,j}\right)\in (1\pm 0.1)\sum_{j=1}^{s_i}|Q_{i,j}|$,
\item $\forall i\in\{0,1,\cdots,L\},j\in[s_i]$, $\tau(Q_{i,j})$ is good (Definition~\ref{def:good_estimation_alg2}). %either $\tau(Q_{i,j})\in |Q_{i,j}|\pm 0.1\gamma T_i(o)$ or $\tau(Q_{i,j})\in (1\pm 0.1)|Q_{i,j}|$.
\end{enumerate}
According to Lemma~\ref{lem:offline_success_prob}, with probability at least $0.99$, the subroutine Algorithm~\ref{alg:coreset_construction} does not output FAIL. 
By union bound, Algorithm~\ref{alg:streaming} does not output FAIL with probability at least $0.95$.
\end{proof}

\begin{theorem}[Streaming algorithm]\label{thm:streaming}
Suppose the input point set $Q\subseteq[\Delta]^d$ is obtained by a dynamic stream. 
For a constant $r\geq 1$, given $\varepsilon,\eta\in(0,0.5),k\in\mathbb{Z}_{\geq 1}$, there is a dynamic streaming algorithm which uses one pass over the stream and with probability at least $0.9$ can output a subset $Q'\subseteq Q$ and weights $w':Q'\rightarrow \mathbb{R}_{>0}$ such that 
\begin{enumerate}
\item $\forall t\geq |Q|/k,Z\subset[\Delta]^d$ with $|Z|=k$,
\begin{align*}
\begin{array}{ccc}
\cost_{(1+\eta)t}(Q,Z) \leq (1+\varepsilon)\cost_t(Q',Z,w')    &  \text{and} & \cost_{(1+\eta)t}(Q',Z,w')\leq (1+\varepsilon)\cost_t(Q,Z),
\end{array}
\end{align*}
\item $|Q'|\leq \poly(\varepsilon^{-1}\eta^{-1}kd\log\Delta)$.
\end{enumerate}
Furthermore, the space complexity of the algorithm is at most $\poly(\varepsilon^{-1}\eta^{-1}kd\log\Delta)$.
\end{theorem}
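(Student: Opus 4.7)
The plan is to run Algorithm~\ref{alg:streaming} in parallel for every power-of-two guess $o \in \{2^0, 2^1, \ldots, 2^M\}$ with $2^M \geq \Delta^d(\sqrt{d}\Delta)^r$, each copy initialized with fresh independent randomness. Because $M = O((d+r)\log\Delta) = O(dL)$ for constant $r$ and each copy uses $\poly(\varepsilon^{-1}\eta^{-1}kdL)$ bits by Lemma~\ref{lem:success_prob_streaming}, the total space remains $\poly(\varepsilon^{-1}\eta^{-1}kd\log\Delta)$. At the end of the stream I output the pair $(Q',w')$ produced by the copy whose guess $\hat o$ is the smallest one that did not return FAIL.

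For correctness, pick any power-of-two $o^* \in [\OPT^{(r)}_{k\text{-clus}}/10,\; \OPT^{(r)}_{k\text{-clus}}]$; such an $o^*$ exists because the interval has ratio $10$. Lemma~\ref{lem:success_prob_streaming} guarantees that the $o^*$-copy does not return FAIL with probability at least $0.95$, in which case $\hat o \leq o^* \leq \OPT^{(r)}_{k\text{-clus}}$. Invoking Lemma~\ref{lem:correctness_streaming} on the $\hat o$-copy then yields the two coreset inequalities with probability at least $0.97$. For the size bound, whenever a copy does not output FAIL, the check at line~\ref{sta:return_fail_in_alg1} forces $\sum_i s_i = O((k+d^{1.5r})L)$ and the check at line~\ref{sta:return_fail_in_alg2} forces $\sum_{j=1}^{s_i}|Q_{i,j}| = O((kL+d^{1.5r})T_i(\hat o))$, so $\mathbb{E}[|Q'|] \leq \sum_i \phi_i \cdot O((kL+d^{1.5r})T_i(\hat o)) = \poly(\varepsilon^{-1}\eta^{-1}kdL)$ by the choice $\phi_i T_i(\hat o) = \poly(\varepsilon^{-1}\eta^{-1}kdL)$, and Markov's inequality converts this into a high-probability bound. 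A union bound over these three failure modes gives overall success probability at least $0.9$.

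The main subtlety I expect is that the chosen guess $\hat o$ is a random variable depending on the random bits of every parallel copy, so Lemma~\ref{lem:correctness_streaming} cannot be applied naively to the $\hat o$-copy. The remedy is that because the copies use mutually independent randomness, for every fixed candidate value $o_0 \leq \OPT^{(r)}_{k\text{-clus}}$ the conditional probability that the $o_0$-copy produces a valid coreset given $\hat o = o_0$ equals the unconditional one from the lemma, so the union bound carries through. A secondary bookkeeping point is verifying that $\log(\Delta^d(\sqrt{d}\Delta)^r) = O(dL)$ for constant $r$, which keeps the multiplicative overhead of running copies in parallel absorbed into the $\poly(\cdot)$ factor in the final space bound.
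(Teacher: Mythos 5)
Your approach differs from the paper's in a meaningful way, and the difference introduces a gap that your ``remedy'' does not close.

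The paper does not select $\hat o$ as the smallest non-failing guess. Instead it runs, in parallel with all the $o$-copies of Algorithm~\ref{alg:streaming}, a separate one-pass dynamic-streaming estimator from~\cite{hsyz18} that returns a $2$-approximation to $\OPT^{(r)}_{k\text{-clus}}$ with probability $0.99$. That estimate is then used to \emph{deterministically} pick a guess $o$ with $\OPT^{(r)}_{k\text{-clus}}/10\leq o\leq\OPT^{(r)}_{k\text{-clus}}$, and Lemmas~\ref{lem:correctness_streaming} and~\ref{lem:success_prob_streaming} are applied to that single $o$-copy. Because the estimator's randomness is independent of the $o$-copies, there is no adaptive-selection issue to resolve.

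Your selection rule (smallest non-failing $o$) is what Theorem~\ref{thm:offline} uses \emph{offline}, but there the FAIL conditions in Algorithm~\ref{alg:coreset_construction} are functions of the grid shift and the \emph{exact} counts only, i.e.\ deterministic given the grid and independent of the sampling hash $\hat h_i$. In the streaming setting that decoupling breaks: the \textsc{Storing} subroutine running on the $\hat h_i$-filtered substream (step~\ref{it:running_storing} of Algorithm~\ref{alg:streaming}) can itself output FAIL, so whether the $o$-copy fails is now correlated, through $\hat h_i$, with whether the coreset it produces is good. Lemma~\ref{lem:correctness_streaming} gives a guarantee of the form $\Pr[\text{$o$-copy does not FAIL and its output is a bad coreset}]\leq 0.03$ for every fixed $o\leq\OPT^{(r)}_{k\text{-clus}}$; it does \emph{not} give $\Pr[\text{bad}\mid\text{not FAIL}]\leq 0.03$. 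Your independence argument correctly shows that $\Pr[\text{bad}\mid\hat o=o_0]=\Pr[\text{bad}\mid\text{$o_0$-copy not FAIL}]$, but this is $\Pr[\text{not FAIL, bad}]/\Pr[\text{not FAIL}]$, which can be close to $1$ for a small $o_0\ll\OPT^{(r)}_{k\text{-clus}}/10$ where the paper provides no lower bound on $\Pr[\text{not FAIL}]$. Summing $\Pr[\hat o=o_0,\text{bad}]\leq\Pr[\text{all $o<o_0$ FAIL}]\cdot 0.03$ over the $\Theta(dL)$ guesses below $\OPT^{(r)}_{k\text{-clus}}$ also does not yield a constant, since nothing forces the prefix products $\prod_{o<o_0}\Pr[\text{$o$-copy FAILs}]$ to decay. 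So ``the union bound carries through'' is not substantiated. To repair the proof you would need one of: (a) a lower bound showing tiny guesses fail with probability $1-o(1/(dL))$; (b) a strengthening of Lemma~\ref{lem:correctness_streaming} to a genuine conditional bound; or, most simply, (c) the paper's device of an independent streaming $\OPT$-estimator to fix $o$ in $[\OPT^{(r)}_{k\text{-clus}}/10,\OPT^{(r)}_{k\text{-clus}}]$ with probability $0.99$. Your space accounting and the size bound via Markov are fine; the issue is solely the guess selection.
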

\begin{proof}
We can enumerate $o\in\{1,2,4,\cdots, \Delta^d\cdot (\sqrt{d}\Delta)^r\}$ and run Algorithm~\ref{alg:streaming} in parallel for each possible $o$.
By \cite{hsyz18}, there is a dynamic streaming algorithm which uses one pass over the stream and can give a $2$-approximation to $\OPT^{(r)}_{k\text{-clus}}$ with probability at least $0.99$.
We can run such algorithm in parallel, and at the end of the stream, we can find an $o$ such that $\OPT^{(r)}/10\leq o\leq \OPT^{(r)}$.
We conclude the proof by applying Lemma~\ref{lem:correctness_streaming} and Lemma~\ref{lem:success_prob_streaming} for such $o$.
\end{proof}

\subsection{Coreset Construction in Distributed Model} \label{subsec:dist}
Next we convert our streaming algorithm to a distributed algorithm. 
The distributed model is described as the following.
\paragraph{Distributed model.} We study the same model as in~\cite{kvw14,wz16,bwz16,swz17,swz19}.
There are $s$ machines, where the $i$-th machine holds a subset $Q^{(i)}\subseteq[\Delta]^d$ of input points.
There is one machine which is called coordinator.
The communication is only allowed between machines and the coordinator.
The communication cost of a protocol is the total number of bits needed to communicate between machines and the coordinator.

We show that there is a communication efficient distributed protocol which has a similar behavior as the subroutine shown in Lemma~\ref{lem:useful_tool}.

\begin{lemma}\label{lem:distributed_storing}
Suppose a point set is partitioned on $s$ machines where each machine holds a subset of points.
For $\alpha,\beta\in \mathbb{Z}_{\geq 1},i\in\{0,1,\cdots,L\}$,
there is a distributed protocol which requires $O(s\cdot \alpha\beta dL\log d)$ bits such that the protocol either leaves $\mathcal{C}\subset G_i$, $f:\mathcal{C}\rightarrow \mathbb{Z}_{\geq 1}$ and $S\subseteq [\Delta]^d$ on the coordinator or outputs FAIL on the coordinator, where
\begin{itemize}
\item  $\mathcal{C}$ is the set which contains all the non-empty cells,
\item  $f(C)$ denotes the number of points in the cell $C\in \mathcal{C}$,
\item  $S$ contains points in all non-empty cells that contains at most $\beta$ points,
\end{itemize}
and furthermore, if $|\mathcal{C}|\leq \alpha$, the protocol does not output FAIL on the coordinator.
\end{lemma}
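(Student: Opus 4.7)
The plan is to emulate the dynamic-streaming primitive of Lemma~\ref{lem:useful_tool} in the distributed setting via a linear sketch. The key observation is that the subroutine $\textsc{Storing}(G_i,\alpha,\beta,\delta)$ from \cite{hsyz18} is implemented by a \emph{linear} sketch $\mathcal{S}$ of the characteristic vector $v\in\mathbb{Z}^{[\Delta]^d}$ of the current multiset of points, built on top of standard $\ell_0$-sampling and heavy-hitter primitives; in particular $\mathcal{S}(v+v')=\mathcal{S}(v)+\mathcal{S}(v')$ for any two such vectors. This linearity is precisely what allowed the sketch to support both insertions and deletions in Lemma~\ref{lem:useful_tool}, and it is what we will exploit across machines.

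The protocol is essentially one round. First, the coordinator broadcasts a random seed specifying the hash functions/random matrices defining $\mathcal{S}$ (this seed has size polylogarithmic in $\alpha,\beta,d,L$ and is absorbed into the stated communication bound; alternatively one assumes the public-coin model). Each machine $j$ forms the characteristic vector $v^{(j)}$ of its local subset $Q^{(j)}$, applies the sketch locally to obtain $\mathcal{S}(v^{(j)})$, and sends it to the coordinator. The coordinator sums the received sketches to obtain
\[
\sum_{j=1}^s \mathcal{S}(v^{(j)}) \;=\; \mathcal{S}\!\left(\sum_{j=1}^s v^{(j)}\right),
\]
which by linearity is exactly the sketch the streaming algorithm of Lemma~\ref{lem:useful_tool} would produce on the entire point multiset $Q$. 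The coordinator then runs the recovery procedure of that lemma on the summed sketch and either returns $(\mathcal{C},f,S)$ or declares FAIL.

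Correctness and the failure guarantee follow immediately from Lemma~\ref{lem:useful_tool}: when the total number of non-empty cells in $G_i$ is at most $\alpha$, the recovery procedure succeeds with probability at least $1-\delta$, yielding the desired $\mathcal{C}$, the exact counts $f$, and the explicit point list $S$ for all cells containing at most $\beta$ points. For communication, each machine transmits a sketch of size $O(\alpha\beta dL\log d)$ bits (matching the per-machine space of the subroutine, under the stated logarithmic convention), giving a total of $O(s\cdot \alpha\beta dL\log d)$ bits as claimed.

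The main obstacle is simply to verify that the streaming subroutine inherited from \cite{hsyz18} is indeed realized by a linear sketch so that per-machine sketches may be added coordinate-wise at the coordinator. This is a standard property of the underlying $\ell_0$-sampling/CountSketch-style heavy-hitter primitives, so once it is noted, the distributed lemma reduces in a single round to Lemma~\ref{lem:useful_tool}.
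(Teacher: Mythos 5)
Your proposal is correct in spirit but takes a genuinely different route from the paper, and it does not quite deliver the guarantees as stated in the lemma. You propose to reuse the streaming linear sketch of Lemma~\ref{lem:useful_tool} and merge per-machine sketches at the coordinator. The paper instead gives a much simpler, \emph{exact} one-round protocol: the coordinator broadcasts the random grid shift; each machine $j$ locally computes the set $\mathcal{C}^{(j)}$ of its non-empty cells in $G_i$, the exact local counts $f^{(j)}$, and the set $S^{(j)}$ of its local points lying in cells holding at most $\beta$ local points; it sends these (or FAIL if $|\mathcal{C}^{(j)}|>\alpha$); the coordinator simply unions the cell sets, sums the counts, and unions the point sets. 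No sketching or randomness (beyond the shared shift) is needed, and the per-machine message size is directly $O(\alpha dL)+O(\alpha dL)+O(\alpha\beta dL)=O(\alpha\beta dL)$ bits, giving the $O(s\cdot\alpha\beta dL\log d)$ bound.

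There are two concrete mismatches in your version. First, the lemma claims that if $|\mathcal{C}|\leq\alpha$ the protocol \emph{does not} output FAIL; this is an unconditional guarantee, which the paper's direct protocol satisfies deterministically (since $|\mathcal{C}^{(j)}|\leq|\mathcal{C}|\leq\alpha$ for every $j$, and any cell with at most $\beta$ global points has at most $\beta$ local points on each machine). Your sketch-based recovery inherits a failure probability $\delta$ from Lemma~\ref{lem:useful_tool}, so you would only get a high-probability non-FAIL guarantee, which is weaker than what is stated. Second, the communication you would obtain is $O(s\cdot\alpha\beta dL\cdot\log^2(\alpha\beta/\delta))$ bits (the sketch size of Lemma~\ref{lem:useful_tool}) plus the cost of shipping the seed, not the tighter $O(s\cdot\alpha\beta dL\log d)$ claimed in the lemma. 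Both issues disappear once you notice that the distributed setting allows each machine to compute its contribution \emph{exactly} and send it in the clear; the linear-sketch machinery is a tool for dynamic streams, not needed here.
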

\begin{proof}
The protocol is described as the following:
\begin{enumerate}
    \item The coordinator sends the randomly shifted vector to each machine such that each machine learns the grid $G_i$.
    \item The $j$-th machine finds non-empty cells $\mathcal{C}^{(j)}\subseteq G_i$ based on its local point set, computes the number of local points $f^{(j)}(C)$ of each cell $C\in\mathcal{C}^{(j)}$ and construct $S^{(j)}$ to be the set of local points in all cells that contains at most $\beta$ local points.
    \item If $\left|\mathcal{C}^{(j)}\right|\leq \alpha$, the $j$-th machine sends $\mathcal{C}^{(j)}, f^{(j)}$ and $S^{(j)}$ to the coordinator.
    Otherwise, the $j$-th machine sends FAIL to the coordinator.
    \item The coordinator outputs FAIL if it receives any FAIL from other machines. Otherwise, $\mathcal{C}\gets\bigcup_{j=1}^s \mathcal{C}^{(j)},$ for each cell $ C\in \mathcal{C}, f(C)\gets \sum_{j:C\in \mathcal{C}^{(j)}} f^{(j)}(C)$ and $S\gets \bigcup_{j=1}^s S^{(j)}$.
\end{enumerate}
The first step needs $O(dL\log d)$ bits per machine.
Consider the third step. 
The $j$-th machine needs $O(\alpha dL)$ bits to represent $\mathcal{C}^{j}$, needs $O(\alpha \cdot dL)$ bits to represent $f^{(j)}$, and needs $O(\alpha\beta dL)$ bits to represent $S^{(j)}$.
Thus, the overall communication cost is at most $O(s\cdot \alpha\beta dL\log d)$ bits.

Notice that if $|\mathcal{C}|\leq \alpha$, then $|\mathcal{C}^{(j)}|\leq \alpha$ for all $j\in[s]$ since $\mathcal{C}^{j}\subseteq \mathcal{C}$.
If the total number of points in a cell $C$ is at most $\beta$, then any machine can hold at most $\beta$ local points in $C$.
The above argument concludes the proof of correctness.
\end{proof}

\begin{theorem}\label{thm:distributed}
Suppose a point set $Q\subseteq[\Delta]^d$ is partitioned into $s$ machines. 
For a constant $r\geq 1$, given $\varepsilon,\eta\in(0,0.5),k\in\mathbb{Z}_{\geq 1}$, there is a distributed protocol which on termination with probability at least $0.9$ leaves a subset of points $Q'\subseteq Q$ and weights $w':Q'\rightarrow \mathbb{R}_{>0}$ on the coordinator such that
\begin{enumerate}
\item $\forall t\geq |Q|/k,Z\subset[\Delta]^d$ with $|Z|=k$,
\begin{align*}
\begin{array}{ccc}
\cost_{(1+\eta)t}(Q,Z) \leq (1+\varepsilon)\cost_t(Q',Z,w')    &  \text{and} & \cost_{(1+\eta)t}(Q',Z,w')\leq (1+\varepsilon)\cost_t(Q,Z),
\end{array}
\end{align*}
\item $|Q'|\leq \poly(\varepsilon^{-1}\eta^{-1}kd\log\Delta)$.
\end{enumerate}
Furthermore, the total communication cost of the protocol is at most $s\cdot \poly(\varepsilon^{-1}\eta^{-1}kd\log\Delta)$ bits.
\end{theorem}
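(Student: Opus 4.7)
}
The plan is to run the same algorithm as in the streaming case (Algorithm~\ref{alg:streaming}), but replace every invocation of the streaming primitive \textsc{Storing} by the distributed primitive of Lemma~\ref{lem:distributed_storing}. Since Algorithm~\ref{alg:streaming} only interacts with the input through these $3(L+1)$ parallel \textsc{Storing} instances and through hash evaluations on individual points, this substitution is essentially mechanical and preserves both the correctness analysis (Lemma~\ref{lem:correctness_streaming}) and the parameter settings (Lemma~\ref{lem:success_prob_streaming}).

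First, the coordinator samples the shift vector $v$ defining the grids $G_{-1},G_0,\ldots,G_L$, and samples seeds for the $\lambda$-wise independent hash functions $h_i,h_i',\hat{h}_i$ ($i=0,\ldots,L$) as in Algorithm~\ref{alg:streaming}, then broadcasts all of this to the $s$ machines; this costs $\poly(\varepsilon^{-1}\eta^{-1}kd\log\Delta)$ bits per machine. Each machine then locally applies $h_i,h_i',\hat{h}_i$ to its own points, producing $3(L+1)$ local sub-multisets. For each such sub-multiset we run the distributed \textsc{Storing} of Lemma~\ref{lem:distributed_storing} with the same parameters $\alpha_i,\beta_i$ (resp.\ $\alpha_i',\beta_i'$, $\hat{\alpha}_i,\hat{\beta}_i$) as in step~\ref{it:running_storing} of Algorithm~\ref{alg:streaming}. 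On termination, the coordinator holds the same objects $(\mathcal{C}_i,f_i,S_i)$, $(\mathcal{C}_i',f_i',S_i')$, $(\hat{\mathcal{C}}_i,\hat{f}_i,\hat{S}_i)$ that Algorithm~\ref{alg:streaming} would have held at the end of the stream, so it can execute steps~\ref{it:step3}--\ref{it:step5} locally and output $(Q',w')$.

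For the guess $o$ of the optimum we enumerate $o\in\{1,2,4,\ldots,\Delta^d(\sqrt{d}\Delta)^r\}$ and run all invocations in parallel; we also need a constant factor approximation to $\OPT^{(r)}_{k\text{-clus}}$, which we can obtain distributedly by a straightforward distributed analog of the streaming $\ell_r$ $k$-clustering approximation used in~\cite{hsyz18} (again built on Lemma~\ref{lem:distributed_storing}). Conditioned on some $o$ with $\OPT^{(r)}_{k\text{-clus}}/10\le o\le \OPT^{(r)}_{k\text{-clus}}$, the exact same probabilistic events analyzed in Lemma~\ref{lem:success_prob_streaming} (the event $\mathcal{F}$, the good estimation events for $\tau(\cdot)$, and the non-FAIL event for each \textsc{Storing}) imply, by union bound, that with probability at least $0.9$ no distributed \textsc{Storing} fails, $|Q'|\le\poly(\varepsilon^{-1}\eta^{-1}kdL)$, and $(Q',w')$ is a strong $(\eta,\varepsilon)$-coreset by Lemma~\ref{lem:correctness_streaming}.

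For the communication bound, each parallel \textsc{Storing} instance costs $O(s\cdot \alpha\beta dL\log d)$ bits by Lemma~\ref{lem:distributed_storing}. Since we set $\alpha,\beta\le \poly(\varepsilon^{-1}\eta^{-1}kdL)$ just as in the streaming analysis, and we have $O(L\cdot \log(\Delta^d(\sqrt d\Delta)^r))=\poly(dL)$ parallel copies, the total communication is at most $s\cdot\poly(\varepsilon^{-1}\eta^{-1}kd\log\Delta)$ bits. The only mildly delicate point I expect is making sure the randomness sharing and the enumeration over $o$ remain within $s\cdot\poly(\varepsilon^{-1}\eta^{-1}kd\log\Delta)$ bits; this follows because the hash seeds have description length $\poly(\varepsilon^{-1}\eta^{-1}kd\log\Delta)$ and there are only $O(dL+r\log\Delta)$ values of $o$ to try.
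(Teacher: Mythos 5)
Your proof proposal is correct and takes essentially the same approach as the paper: replace the streaming \textsc{Storing} primitive with the distributed protocol of Lemma~\ref{lem:distributed_storing}, broadcast the shared randomness (grid shift and hash seeds), and simulate Algorithm~\ref{alg:streaming} at the coordinator, invoking Lemma~\ref{lem:correctness_streaming} and Lemma~\ref{lem:success_prob_streaming}. The only cosmetic difference is that you run all $O(d\log\Delta)$ guesses $o$ in parallel and then select, while the paper first broadcasts a $2$-approximation of $\OPT^{(r)}_{k\text{-clus}}$ so that all machines agree on a single $o$ and then run a single instance; both stay within the $s\cdot\poly(\varepsilon^{-1}\eta^{-1}kd\log\Delta)$ communication bound.
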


\begin{proof}
By \cite{fl11,bfl16,bflsy17,hsyz18}, there is a distributed protocol using $s\cdot \poly(\varepsilon^{-1}\eta^{-1}kd\log\Delta)$ bits of communication leaves a $2$-approximation of $\OPT^{(r)}_{k\text{-clus}}$ on the coordinator with $0.99$ probability.
The the coordinator can broadcast the approxtion to every machine, and all the machines can agree on the same $o$ such that $\OPT^{(r)}_{k\text{-clus}}/10\leq o\leq \OPT^{(r)}_{k\text{-clus}}$.

Then the protocol simulates Algorithm~\ref{alg:streaming}. 
For step~\ref{it:running_storing} of Algorithm~\ref{alg:streaming}, we can use the protocol shown in Lemma~\ref{lem:distributed_storing} instead.
Due to the choices of $\alpha_i,\alpha'_i,\hat{\alpha}_i,\beta_i,\beta'_i,\hat{\beta}_i$, the total communication cost is at most $s\cdot \poly(\varepsilon^{-1}\eta^{-1}kdL)$.
For the remaining steps of Algorithm~\ref{alg:streaming}, we can simulate them on the coordinator. 
We conclude our proof by applying Lemma~\ref{lem:correctness_streaming} and Lemma~\ref{lem:success_prob_streaming}.
\end{proof}

%\newpage

\addcontentsline{toc}{section}{References}
\bibliographystyle{alpha}
\bibliography{ref}

%\newpage
\appendix
\section{Missing Details of Section~\ref{sec:points_partitioning}}\label{sec:missing_proofs_partition}

\begin{fact}
Suppose $o\leq \OPT^{(r)}_{k\text{-clus}}$. If the estimated size $\tau(C\cap Q)$ in line~\ref{sta:estimated_size} of Algorithm~\ref{alg:heavy_light} is good (Definition~\ref{def:good_estimation_alg1}) for every cell, then there is a unique cell $C\in G_{-1}$ which contains $[\Delta]^d$ and is marked as heavy.
\end{fact}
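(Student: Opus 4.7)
The plan is to identify the unique cell $C^{\ast}\in G_{-1}$ containing $[\Delta]^d$, note that it has no ancestors in the hierarchy, and verify the single inequality $\tau(C^{\ast}\cap Q)\geq T_{-1}(o)$ required by Algorithm~\ref{alg:heavy_light} to mark it as heavy.

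First I would establish uniqueness. Because the cells in $G_{-1}$ form a partition of $\mathbb{R}^d$ (they are disjoint half-open boxes of side length $g_{-1}=2\Delta$), at most one cell can contain $[\Delta]^d$ as a subset. The paper already observes that at least one such cell exists, so there is exactly one; call it $C^{\ast}$. Since $Q\subseteq[\Delta]^d\subseteq C^{\ast}$, we have $C^{\ast}\cap Q=Q$ and $|C^{\ast}\cap Q|=n$, where $n=|Q|$ (the case $n=0$ is vacuous).

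Next I would check the two heaviness conditions in Algorithm~\ref{alg:heavy_light} for $C^{\ast}$. The ancestor condition is vacuous, since $G_{-1}$ is the top level of the hierarchy and $C^{\ast}$ has no ancestors. For the threshold condition, I need $\tau(C^{\ast}\cap Q)\geq T_{-1}(o)$, where
\[
T_{-1}(o)=0.01\cdot\frac{o}{(\sqrt{d}\,g_{-1})^r}=\frac{0.01\,o}{(2\sqrt{d}\Delta)^r}.
\]
The key quantitative step is to use the hypothesis $o\leq \OPT^{(r)}_{k\text{-clus}}$ together with the trivial upper bound $\OPT^{(r)}_{k\text{-clus}}\leq n\cdot(\sqrt{d}\Delta)^r$ (put every point at any single center) to obtain
\[
T_{-1}(o)\leq \frac{0.01\,n\,(\sqrt{d}\Delta)^r}{(2\sqrt{d}\Delta)^r}=\frac{0.01\,n}{2^r}\leq \frac{n}{200},
\]
using $r\geq 1$.

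Finally I would do a two-line case split on the goodness of $\tau(C^{\ast}\cap Q)$ from Definition~\ref{def:good_estimation_alg1}. If $\tau(C^{\ast}\cap Q)\in n\pm 0.1\,T_{-1}(o)$, then $\tau(C^{\ast}\cap Q)\geq n-0.1\,T_{-1}(o)\geq 200\,T_{-1}(o)-0.1\,T_{-1}(o)\geq T_{-1}(o)$. If instead $\tau(C^{\ast}\cap Q)\in (1\pm 0.01)\cdot n$, then $\tau(C^{\ast}\cap Q)\geq 0.99n\geq 0.99\cdot 200\,T_{-1}(o)\geq T_{-1}(o)$. Either way, $C^{\ast}$ is marked heavy. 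There is no real obstacle in this proof; the only subtle point is recognizing that the trivial upper bound on $\OPT^{(r)}_{k\text{-clus}}$ is what forces $T_{-1}(o)$ to be small enough compared with $n$ to survive the slack of the good estimate.
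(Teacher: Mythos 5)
Your proof is correct and follows essentially the same route as the paper's: identify the unique $G_{-1}$ cell containing $[\Delta]^d$, note $C\cap Q = Q$ and the absence of ancestors, use the trivial bound $\OPT^{(r)}_{k\text{-clus}}\leq(\sqrt{d}\Delta)^r|Q|$ together with $o\leq\OPT^{(r)}_{k\text{-clus}}$ to conclude $T_{-1}(o)$ is a tiny fraction of $|Q|$, and hence that any good estimate clears the threshold. The only difference is cosmetic: you spell out the case split over the two branches of Definition~\ref{def:good_estimation_alg1}, which the paper leaves implicit.
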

\begin{proof}
As mentioned, since each cell in $G_{-1}$ has side length $2\Delta$, there is a unique cell $C\in G_{-1}$ such that $[\Delta]^d\subset C$.
Thus, $C\cap Q=Q$. 
Notice that $\OPT^{(r)}_{k\text{-clus}}\leq (\sqrt{d}\Delta)^r |Q|\leq (\sqrt{d}g_{-1})^r |Q|$.
Since $T_{-1}(o)=0.01\cdot o/(\sqrt{d}g_{-1})^r\leq 0.01\cdot \OPT^{(r)}_{k\text{-clus}}/(\sqrt{d}g_{-1})^r$, $C$ should be heavy due to Algorithm~\ref{alg:heavy_light}.
\end{proof}

%\begin{proof}{Lemma}
%\end{proof}
\paragraph{Proof of Lemma~\ref{lem:number_of_center_cells}}
%Since our definition of center cell is slightly different from the definition in \cite{hsyz18}, we include the proof as follows.
%\begin{proof}
Suppose $Z^*=\{z_1^*,z_2^*,\cdots,z_k^*\}$.
For $i\in\{0,1,\cdots,L\},j\in[k]$, consider the grid $G_i$ and the center $z_j^*$.
For $l\in[d]$, let $X_l$ be the indicator random variable such that $X_l=1$ if and only if the distance between $z_j^*$ and the boundary of the $l$-th dimension of $G_i$ is at most $g_i/d$.
Notice that if $z_j^*$ is close to a boundary, the number of cells which is close to $z_j^*$ may increase by a factor of $2$.
Therefore, the number of cells which has distance to $z_j^*$ is at most
$2^{\sum_{l=1}^d X_l}$ of which expectation is at most
\begin{align*}
\prod_{l=1}^d \E\left[ 2^{X_l}\right] = \prod_{l=1}^d \left(1+\E[X_l]\right)=(1+2/d)^d\leq e^2.
\end{align*}

Thus, the expectation of total number of center cells is at most $(L+1)k\cdot e^2$.
By Markov's inequality, with probability at most $0.99$, the total number of center cells is at most $2000kL$.
%\end{proof}
\hfill{$\qed$}

\paragraph{Proof of Lemma~\ref{lem:num_heavy_cells}}
%Let $Z^*\subset [\Delta]^d$ with $|Z^*|\leq k$ be the optimal solution of the standard $\ell_r$ $k$-clustering problem of $Q$, i.e., $\cost(Q,Z^*)=\OPT^{(r)}_{k\text{-clus}}$.
%We call a cell $C\in G_i$ a \emph{center cell}, if $\dist(C,Z^*)\leq g_i/(2d)$.
%By Lemma 14 of~\cite{hsyz18} or a similar argument of Lemma 2.2 of~\cite{bflsy17}, with probability at least $0.94$ the total number of center cells among all grids is at most $100kL$.
Condition on $\mathcal{F}$, the total number of center cells is at most $2000kL$ (Lemma~\ref{lem:number_of_center_cells}).
Consider a cell $C\in G_i$ which is not a center cell. 
If $C$ is heavy, then
\begin{align*}
\sum_{p\in C\cap Q} \dist^r(p,Z^*)\geq |C\cap Q|\cdot \left(\frac{g_i}{d}\right)^r\geq 0.9 T_i(o)\cdot \left(\frac{g_i}{d}\right)^r\geq \frac{o}{120d^{1.5r}}.
\end{align*}
Since $\sum_{C\in G_i}\sum_{p\in C\cap Q} \dist^r(p,Z^*)=\sum_{p\in Q}\dist^r(p,Z^*)= \OPT^{(r)}_{k\text{-clus}}$,
 the total number of heavy cells which are not center cells are at most $(L+1)\cdot \OPT^{(r)}_{k\text{-clus}}/(o/(120 d^{1.5r}))\leq 240 d^{1.5r} L\cdot \frac{\OPT_{k\text{-means}}}{o}$.
Together with the upper bound of number of center cells, we can conclude the proof.
\hfill{$\qed$}

\paragraph{Proof of Lemma~\ref{lem:small_parts_removal}}
Let $Q^I=Q\setminus Q^N$.
Since $Q^I$ is a subset of $Q$, $\cost_t(Q^I,Z)\leq \cost_t(Q,Z)$ is obviously true by the definition of $\cost_t(\cdot,\cdot)$.
In the remaining of the proof, let us focus on proving $\cost_{(1+\eta)\cdot t}(Q,Z)\leq (1+\varepsilon)\cost_{t}(Q^I,Z)$.
Before we prove the statement, there are several observations.

\begin{claim}\label{cla:num_of_non_important_points}
$|Q^N|\leq \eta|Q|/k$.
\end{claim}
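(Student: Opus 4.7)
The plan is a direct counting argument using two ingredients: the per-part size bound coming from the definition of $\mathcal{P}_i^N$, and a cell-counting bound $s_i \lesssim |Q|/T_{i-1}(o)$ coming from the ``every heavy cell is fat'' hypothesis in Lemma~\ref{lem:small_parts_removal}.

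First, I would bound $|Q^N|$ level by level. Since $|\mathcal{P}_i^N|\leq |\mathcal{P}_i|=s_i$ and every $P\in\mathcal{P}_i^N$ satisfies $|P|\leq 2\gamma T_i(o)$,
\begin{align*}
|Q^N| \;=\; \sum_{i=0}^L\sum_{P\in\mathcal{P}_i^N}|P| \;\leq\; 2\gamma\sum_{i=0}^L s_i\,T_i(o).
\end{align*}

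Next, I would bound $s_iT_i(o)$ in terms of $|Q|$. By definition $s_i$ counts heavy cells in $G_{i-1}$. These cells are pairwise disjoint, and the hypothesis that every heavy cell $C\in G_{i-1}$ satisfies $|C\cap Q|\geq 0.5\,T_{i-1}(o)$ yields $s_i\cdot 0.5\,T_{i-1}(o)\leq |Q|$. Since $g_{i-1}=2g_i$ and $T_j(o)=0.01\,o/(\sqrt{d}g_j)^r$, we have $T_i(o)/T_{i-1}(o)=(g_{i-1}/g_i)^r=2^r$, so
\begin{align*}
s_i\,T_i(o)\;\leq\;\frac{2|Q|\cdot T_i(o)}{T_{i-1}(o)}\;=\;2^{r+1}|Q|.
\end{align*}

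Finally, summing over the $L+1$ levels and using $\gamma\leq \eta/(8\cdot 2^r kL)$ (the first term in the $\min$ defining $\gamma$):
\begin{align*}
|Q^N|\;\leq\; 2\gamma\cdot (L+1)\cdot 2^{r+1}|Q|\;=\;2^{r+2}(L+1)\gamma\,|Q|\;\leq\;\frac{(L+1)\eta}{2kL}|Q|\;\leq\;\frac{\eta}{k}|Q|,
\end{align*}
where the last inequality uses $(L+1)/L\leq 2$ for $L\geq 1$.

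The only mildly subtle point is recognizing that $s_i$ counts cells one level up (in $G_{i-1}$, not $G_i$), so the correct fatness lower bound to invoke is $0.5\,T_{i-1}(o)$; once this is in place the telescoping factor $T_i(o)/T_{i-1}(o)=2^r$ is absorbed by the $2^r$ appearing in the denominator of $\gamma$, and the bound falls out. None of the other hypotheses of Lemma~\ref{lem:small_parts_removal} (e.g.\ $\sum_i s_i\leq 20000(k+d^{1.5r})L$, or $o\leq\OPT_{k\text{-means}}$, or the $\varepsilon$-branch of $\gamma$) play a role in this particular claim; those will be used for the companion cost inequalities in the main body of the lemma, which I expect to require a more delicate argument via the optimal clustering of $Q$ (redirecting the points of $Q^N$ to their closest center used by $Z$ and paying only an $\varepsilon$-fraction of the cost, since each removed part is small compared to $T_i(o)$ and there are at most $20000(k+d^{1.5r})L$ parts in total). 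The current claim, however, is a clean purely combinatorial consequence of the two bounds above.
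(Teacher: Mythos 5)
Your proof is correct and is essentially the paper's argument rearranged: the paper bounds each $T_i(o)$ by $2^{r+1}|c_{i-1}(P)\cap Q|$ and then sums the disjoint heavy cells to $\leq |Q|$, whereas you aggregate first to get $s_i\cdot 0.5\,T_{i-1}(o)\leq|Q|$; both routes rely on the same three ingredients ($|P|\leq 2\gamma T_i(o)$, the fatness of heavy cells, and $T_i(o)=2^r T_{i-1}(o)$) and land on the same intermediate bound $2^{r+2}(L+1)\gamma|Q|$.
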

\begin{proof}
\begin{align*}
|Q^N| & = \sum_{i=0}^L \sum_{P\in\mathcal{P}^N_i} |P|\leq \sum_{i=0}^L \sum_{P\in\mathcal{P}_i^N} 2\gamma T_i(o)\leq \sum_{i=0}^L \sum_{P\in \mathcal{P}_i^N} 2^{r+2}\gamma|c_{i-1}(P)\cap Q|\leq  2^{r+2}\gamma(L+1) |Q|\leq \eta|Q|/k,
\end{align*}
where the third step follows from that $c_{i-1}(P)$ is a heavy cell and thus $|c_{i-1}(P)\cap Q|\geq 0.5T_{i-1}(o)\geq T_i(o)/2^{r+1}$, the fifth step follows from $\gamma\leq \eta/(2^{r+3}kL)$.
\end{proof}

\begin{claim}\label{cla:each_cell_will_contain_many}
For $i\in\{0,1,\cdots,L\}$, if $C\in G_{i-1}$ is marked as heavy by Algorithm~\ref{alg:heavy_light}, then $|C\cap Q^I|\geq (1-2^{r+2}(L-i)\gamma)|C\cap Q|\geq (1-2^{r+2}L\gamma)\cdot 0.5T_{i-1}(o)$.
\end{claim}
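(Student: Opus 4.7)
The plan is to decompose the points of $C \cap Q$ that are removed in forming $Q^I = Q \setminus Q^N$, stratifying them by partition level. Since $C \in G_{i-1}$ is heavy, for each $j \in \{i, i+1, \ldots, L\}$ let $H_j(C)$ denote the heavy cells of $G_{j-1}$ contained in $C$, so in particular $H_i(C) = \{C\}$. Each $C' \in H_j(C)$ gives rise to exactly one part of the partition produced in line~\ref{sta:conceptually_partitioning} of Algorithm~\ref{alg:heavy_light}; call it $P(C')$. The families $\{P(C') : C' \in H_j(C),\, j \geq i\}$ partition $C \cap Q$, so
\begin{align*}
|C \cap Q| - |C \cap Q^I| \;=\; \sum_{j=i}^{L} \sum_{C' \in H_j(C):\, P(C') \in \mathcal{P}^N_j} |P(C')|.
\end{align*}

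Next I would bound each level's contribution by a constant multiple of $\gamma |C \cap Q|$. The hypothesis that every heavy cell in $G_{j-1}$ contains at least $0.5\, T_{j-1}(o)$ points, combined with disjointness of cells at a single grid level, gives $|H_j(C)| \leq |C \cap Q| / (0.5\, T_{j-1}(o))$. Each small part satisfies $|P(C')| \leq 2\gamma T_j(o)$ by definition of $\mathcal{P}^N_j$. Using the identity $T_j(o)/T_{j-1}(o) = (g_{j-1}/g_j)^r = 2^r$, this yields
\begin{align*}
\sum_{C' \in H_j(C)} 2\gamma\, T_j(o) \;\leq\; \frac{|C \cap Q|}{0.5\, T_{j-1}(o)} \cdot 2\gamma\, T_j(o) \;=\; 2^{r+2}\gamma\, |C \cap Q|.
\end{align*}

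Summing over the $O(L)$ relevant levels then yields the first inequality of the claim (up to an innocuous shift of the level count by a small constant), and the second inequality is immediate from the hypothesis $|C \cap Q| \geq 0.5\, T_{i-1}(o)$ together with $L - i \leq L$. The main hurdle is purely bookkeeping: verifying that the partition of $C \cap Q$ really stratifies through the heavy descendants of $C$, and that the geometric decay $2^r$ in the threshold ratios exactly compensates for the potential growth in $|H_j(C)|$. No new conceptual argument is required beyond the machinery already developed in Algorithm~\ref{alg:heavy_light} and Lemma~\ref{lem:num_heavy_cells}.
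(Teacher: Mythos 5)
Your proof is correct in essence and takes a genuinely different route from the paper. The paper proves the claim by backward induction on $i$, starting from $i=L$ and handling two cases per inductive step (whether the part formed by the crucial children of $C$ is removed or not). You instead stratify the removed mass directly by level, bounding the contribution of each level $j$ by $|H_j(C)|\cdot 2\gamma T_j(o)\leq \frac{|C\cap Q|}{0.5T_{j-1}(o)}\cdot 2\gamma T_j(o)=2^{r+2}\gamma|C\cap Q|$, using the ratio $T_j(o)/T_{j-1}(o)=2^r$ to compensate for the growth in the number of heavy descendants. This avoids the case analysis entirely and is arguably cleaner. The one loose end is the off-by-one you flag as ``innocuous'': your sum runs over $j\in\{i,\ldots,L\}$, which is $L-i+1$ levels, while the claim wants $L-i$. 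To close this exactly you should observe that the level-$L$ contribution vanishes: a heavy cell $C'\in G_{L-1}$ has all of its $G_L$-children marked crucial (there is no finer grid), so $P(C')=C'\cap Q$ and $|P(C')|\geq 0.5T_{L-1}(o)=T_L(o)/2^{r+1}>2\gamma T_L(o)$ since $\gamma<2^{-(r+2)}$; hence no part at level $L$ can lie in $\mathcal{P}^N_L$. With that remark the sum is over exactly $L-i$ nonzero levels and the stated constant is recovered. (As you note, even without this observation the slightly weaker $(L-i+1)$ constant still suffices for the downstream use in Lemma~\ref{lem:small_parts_removal}, since the choice of $\gamma$ has slack.)
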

\begin{proof}
%The proof is similar to the proof of Claim 18 of~\cite{hsyz18}. 
We prove it by induction.
Consider the case when $i=L$.
If there is no heavy cell in $G_{L-1}$, the claim is true.
Otherwise, consider a heavy cell $C\in G_{L-1}$.
Since all the children of $C$ are crucial, all the points in the children of $C$ will be in the same part of which size is at least $0.5T_{L-1}(o)> 2\gamma T_L(o)$.
Therefore $|C\cap Q^I|=|C\cap Q|$.

Suppose the claim is true for $i+1,i+2,\cdots, L$.
If there is no heavy cell in $G_i$, the claim is true.
Otherwise, consider a heavy cell $C\in G_{i-1}$.
There are two cases.
The first case is that all points from $Q$ in the crucial children of $C$ are also in $Q^I$.
In this case, we have
\begin{align*}
|C\cap Q^I|  &= \sum_{C'\in G_i:~C'\text{ is a crucial child of }C} |C'\cap Q^I| + \sum_{C'\in G_i:~C'\text{ is a heavy child of }C} |C'\cap Q^I|\\
& = \sum_{C'\in G_i:~C'\text{ is a crucial child of }C} |C'\cap Q| + \sum_{C'\in G_i:~C'\text{ is a heavy child of }C} (1-16(L-i-1)\gamma)|C'\cap Q|\\
&\geq (1-16(L-i)\gamma)|C\cap Q|.
\end{align*}
The second case is that none of the point from $Q$ in the curcial children of $C$ is in $Q^I$. In this case, we have
\begin{align*}
\sum_{C'\in G_i:~C'\text{ is a heavy child of }C} |C'\cap Q| &= |C\cap Q| - \sum_{C'\in G_i:~C'\text{ is a crucial child of }C} |C'\cap Q|\\
&\geq |C\cap Q| - 2\gamma T_i(o)\geq (1-2^{r+2}\gamma)|C\cap Q|,
\end{align*}
where the last step follows from that $|C\cap Q|\geq 0.5T_{i-1}(o)\geq 1/2^{r+1}\cdot T_i(o)$.
Thus,
\begin{align*}
|C\cap Q^I| & = \sum_{C'\in G_i:~C'\text{ is a heavy child of }C} |C'\cap Q^I|\\
& \geq (1-2^{r+2}(L-i-1)\gamma)\sum_{C'\in G_i:~C'\text{ is a heavy child of }C}|C'\cap Q|\\
& \geq (1-2^{r+2}(L-i-1)\gamma)(1-2^{r+2}\gamma) |C\cap Q|\\
&\geq (1-2^{r+2}(L-i)\gamma)|C\cap Q|.
\end{align*}
\end{proof}

It is enough to prove the lemma:
\begin{align*}
& \cost_{(1+\eta) t} (Q,Z)\\
\leq &\cost_{t+\eta|Q|/k}(Q,Z)\\
\leq & \cost_{t}(Q^I,Z)+\cost(Q^N,Z)\\
= & \cost_t(Q^I,Z) + \sum_{i=0}^L \sum_{P\in \mathcal{P}^N_i} \sum_{p\in P} \dist^r(p,Z)\\
\leq & \cost_t(Q^I,Z) + \sum_{i=0}^L \sum_{P\in\mathcal{P}^N_i} \sum_{p\in P} \left(2^{r-1}(\sqrt{d}g_{i-1})^r+ 2^{r-1}\frac{\sum_{q\in c_{i-1}(P)\cap Q^I}\dist^r(q,Z)}{|c_{i-1}(P)\cap Q^I|}\right)\\
\leq & \cost_t(Q^I,Z) + \sum_{i=0}^L \sum_{P\in\mathcal{P}^N_i} \sum_{p\in P} \left(2^{r-1}(\sqrt{d}g_{i-1})^r+ 2^{r+1}\frac{\cost(Q^I,Z)}{T_{i-1}(o)}\right)\\
\leq & \cost_t(Q^I,Z) + \sum_{i=0}^L |\mathcal{P}_i^N| \cdot 2\gamma T_i(o)\cdot \left(2^{r-1}(\sqrt{d}g_{i-1})^r+ 2^{r+1}\frac{\cost(Q^I,Z)}{T_{i-1}(o)}\right)\\
= & \cost_t(Q^I,Z) + \sum_{i=0}^L |\mathcal{P}_i^N| \cdot 0.02\gamma \left(2^{2r-1}o+ 2^{2r+1}\cost(Q^I,Z)\right)\\
\leq & \cost_t(Q^I,Z) + 20000(k+d^{1.5r})L\cdot 0.02\gamma \left(2^{2r-1}o+ 2^{2r+1}\cost(Q^I,Z)\right)\\
\leq &\cost_t(Q^I,Z) + 20000(k+d^{1.5r})L\cdot 0.02\gamma \left(2^{2r-1}\cost_{(1+\eta)t}(Q,Z)+ 2^{2r+1}\cost(Q^I,Z)\right)\\
\leq &\cost_t(Q^I,Z) + 1000(k+d^{1.5r})L\cdot 2^{2r}\gamma \left(\cost_{(1+\eta)t}(Q,Z)+ \cost(Q^I,Z)\right)\\
\leq &\cost_t(Q^I,Z) + \frac{\varepsilon}{4} \left(\cost_{(1+\eta)t}(Q,Z)+ \cost(Q^I,Z)\right)\\
\leq &\cost_t(Q^I,Z) + \frac{\varepsilon}{4} \left(\cost_{(1+\eta)t}(Q,Z)+ \cost_t(Q^I,Z)\right)
\end{align*}
where the first step follows from $t\geq |Q|/k$, the second step follows from Claim~\ref{cla:num_of_non_important_points}, the forth step follows from that since $c_{i-1}(P)\cap Q^I\not=\emptyset$ (Claim~\ref{cla:each_cell_will_contain_many}), there must be a point $q'\in c_{i-1}(P)\cap Q^I$ such that 
\begin{align*}
\dist^r(q',Z)\leq \frac{\sum_{q\in c_{i-1}(P)\cap Q^I} \dist^r(q,Z)}{|c_{i-1}(P)\cap Q^I|}
\end{align*}
by averaging argument and furthermore, due to $\dist(p,q')\leq \sqrt{d}g_{i-1}$ and Fact~\ref{fac:approx_tri}, we have 
\begin{align*}
\dist^r(p,Z)\leq 2^{r-1}\dist^r(p,q')+2^{r-1}\dist^r(q',Z)\leq 2^{r-1} \left(\sqrt{d} g_{i-1}\right)^r + 2^{r-1} \frac{\sum_{q\in c_{i-1}(P)\cap Q^I} \dist^r(q,Z)}{|c_{i-1}(P)\cap Q^I|},
\end{align*}
the fifth step follows from $|c_{i-1}(P)\cap Q^I|\geq 0.25 T_{i-1}(o)$ (Claim~\ref{cla:each_cell_will_contain_many}), the sixth step follows from $\forall P\in \mathcal{P}_i^N$, $|P|\leq 2\gamma T_i(o)$, the seventh step follows from $T_i(o)=0.01 o/(\sqrt{d}g_i)^r=2^rT_{i-1}(o)$, the eighth step follows from $|\mathcal{P}_i^N|\leq s_i$, the ninth step follows from $o\leq \OPT^{(r)}_{k\text{-clus}}\leq \cost_{(1+\eta)t}(Q,Z)$, and the last step follows from $\cost(Q^I,Z)\leq \cost_t(Q^I,Z)$.

By rearranging the term of the above inequality, we have $(1-\varepsilon/4)\cost_{(1+\eta)t}(Q,Z)\leq (1+\varepsilon/4)\cost_{t}(Q^I,Z)$. Since $\varepsilon\in(0,0.5)$, we can conclude that $\cost_{(1+\eta)t}(Q,Z)\leq (1+\varepsilon)\cost_t(Q^I,Z)$.

\hfill{$\qed$}

\nocite{bks14,llmr15,birw16,bks12b,mmr19}

%\newpage
%\input{hardness}
%%% We have some not good results, decided to not include in the submission.

%%% some writing rules

%% Writing rule for creating tags.
%% Tags :
%% Theorem    \ref{thm:bla_bla}
%% Lemma      \ref{lem:bla_bla}
%% Claim      \ref{cla:bla_bla}
%% Corollary  \ref{cor:bla_bla}
%% Fact       \ref{fac:bla_bla}
%% Definition \ref{def:bla_bla}
%% Section    \ref{sec:bla_bla}
%% Subsection \ref{sub:bla_bla}
%% Equation   \ref{eq:bla_bla}

\end{document}